\def\bea{\begin{eqnarray}}
\def\eea{\end{eqnarray}}
\def\be{\begin{equation}}
\def\ee{\end{equation}}
\newtheoremstyle{eg}
  {3pt}
  {3pt}
  {\small}
  {\parindent}
  {\scshape}
  {:}
  {.5em}
  {}
\theoremstyle{eg}
\newtheorem{theorem}{Theorem}
\newtheorem{definition}[theorem]{Definition}
\newtheorem{example}[theorem]{Example}
\newtheorem{proposition}[theorem]{Proposition}
\newtheorem{remark}[theorem]{Remark}
\def\M{\mathcal{M}}
\def\P{\mathcal{P}}
\def\X{\mathcal{X}}
\def\U{\mathcal{U}}
\def\H{\mathcal{H}}
\def\A{\mathcal{A}}
\def\F{\mathcal{F}}
\def\W{\mathcal{W}}
\begin{document}
\pagestyle{empty}
\pagenumbering{roman}

 \thispagestyle{empty}

%

\begin{center}

\textsc{\LARGE University of Wroc{\l}aw\\
    Department of Physics and Astronomy\\
    } \medbreak
{\LARGE Institute for Theoretical Physics}
\end{center}
\vspace{3.0cm}
\begin{center}
    
{\Large PhD Dissertation}\\

\end{center}
\vspace{1.5cm}
\begin{center}
    
\textsc{\Huge "$\kappa$- Minkowski spacetime:
 mathematical formalism and applications in Planck scale physics"}
\end{center}
\vspace{2.0cm}

    by

\textsc{\huge Anna Pacho{\l}}\\
\vspace{1.0cm}
\begin{center}
    
\textsc{\huge Supervisor\\ Dr hab. Andrzej Borowiec}\\

\end{center}
\vspace{2.0cm}
\begin{center}
\large
    {Wroc{\l}aw, 2011}
\end{center}

\newpage
\thispagestyle{empty}
{\color{white}.}
\newpage

\thispagestyle{empty}
{\color{white}.}
\vspace{14cm}

\hspace{\stretch{1}} \begin{flushright}
    \emph{ "To those who do not know mathematics\\
 it is difficult to get across a real feeling as to the beauty,\\ the deepest beauty, of nature ...\\ If you want to learn about nature,\\ to appreciate nature, it is necessary\\ to understand the language that she speaks in."
}
\end{flushright}
\begin{flushright}
    Richard Phillips Feynman
\end{flushright}

\vspace{\stretch{1}}

\chapter*{Preface}
\footnotesize{This dissertation is based on research done at the Institute for Theoretical Physics University of Wroc{\l}aw between October 2007 and June 2011 in collaboration with Dr hab. Andrzej Borowiec and in Part II also with Prof. Stjepan Meljanac (Rudjer Boskovic Institute, Zagreb, Croatia) and Prof. Kumar S. Gupta (Theory Division, Saha Institute of Nuclear Physics, Calcutta, India). This thesis contains material which has been published in the following papers:}
\begin{itemize}
\item Andrzej Borowiec and Anna Pacho{\l} \href{http://arxiv.org/abs/1005.4429}{ "kappa-Minkowski spacetimes and DSR algebras: fresh look and old problems"}\\ SIGMA 6, 086 (2010) [arXiv:1005.4429].
\item  Andrzej Borowiec, Kumar S. Gupta, Stjepan Meljanac and Anna Pacho{\l}
\href{http://arxiv.org/abs/0912.3299}{
"Constraints on the quantum gravity scale from kappa - Minkowski spacetime"}\\ EPL 92, 20006 (2010)  [arXiv:0912.3299].
\item Andrzej Borowiec and Anna Pacho{\l}  \href{http://arxiv.org/abs/0903.5251} {"The classical basis for kappa-Poincare algebra and doubly special relativity theories"}\\ J. Phys. A: Math. Theor. 43, 045203 (2010) [arXiv:0903.5251].
\item Andrzej Borowiec and Anna Pacho{\l}  \href{http://arxiv.org/abs/0812.0576}{"kappa-Minkowski spacetime as the result of Jordanian twist deformation"}\\ Phys. Rev. D 79, 045012  (2009) [arXiv:0812.0576].
\item Andrzej Borowiec and Anna Pacho{\l}
 "On Heisenberg doubles of quantized Poincare algebras"\\ Theoretical and Mathematical Physics, 169(2), 1611 (2011)

\end{itemize}
\normalsize
\begin{center}
    \textbf{Acknowledgements}\\

 {\it I am deeply grateful to my supervisor Dr A. Borowiec,
 for his help and guidance,\\
 for the infinitely many discussions and his endless patience.\\
Special thanks go to Prof. S. Meljanac and his group for helpful discussions and their hospitality during my visits at Rudjer Boskovic Institute in Zagreb (Croatia).\newline

\footnotesize{This PhD dissertation has been also supported by The National Centre for Science Research Grant (N N202 238540) and the scholarship within project: "Rozw{\'o}j potencja{\l}u i oferty edukacyjnej Uniwersytetu Wroc{\l}awskiego szansa zwiekszenia konkurencyjnosci Uczelni."}}

\end{center}

\tableofcontents
\pagestyle{fancy}
\fancyhf{}
\fancyhead[LE,RO]{\bfseries\thepage}
\fancyhead[RE,LO]{\textbf{Contents}}

\chapter*{Abstract}
\fancyhf{}

\fancyhead[LE,RO]{\bfseries\thepage}
\fancyhead[RE,LO]{Abstract}
The dissertation presents possibilities of applying noncommutative spacetimes description ,
particularly kappa-deformed Minkowski spacetime and Drinfelds deformation theory, as a mathematical formalism for Doubly Special Relativity theories (DSR), which are thought as phenomenological limit
of quantum gravity theory. Deformed relativistic symmetries are described within Hopf algebra
language. In the case of (quantum) kappa-Minkowski spacetime the symmetry group is described by the (quantum)
kappa-Poincare Hopf algebra. Deformed relativistic symmetries were used to construct the DSR algebra, which unifies noncommutative coordinates with generators of the symmetry algebra. It contains the
deformed Heisenberg-Weyl subalgebra. It was proved that DSR algebra can be obtained by nonlinear
change of generators from undeformed algebra. We show that the possibility of applications in  Planck scale physics is connected with certain realizations of quantum spacetime, which in turn leads to deformed dispersion relations.


\chapter*{Introduction}
\fancyhf{}
\fancyhead[LE,RO]{\bfseries\thepage}
\fancyhead[RE,LO]{\textbf{Introduction}}
\addcontentsline{toc}{chapter}{Introduction}

 One of the most intriguing theoretical problems nowadays is the search for fundamental theory describing Planck scale physics. The so called "Planck scale" is the scale at which gravitational effects are equivalently strong as quantum ones and it relates to either a very big energy scale or equivalently to a tiny size scale. A new theory which would consistently describe this energy region is necessary to understand the real nature of the Universe. It would allow, for example, to give a meaning to the Big Bang and black holes, physical situations were both quantum mechanics and general relativity are relevant, but cannot be applied consistently at the same time. Such theory, in the weak gravity limit, should give back already known Quantum Mechanics and in the limit of Planck's constant going to zero it should reduce to Einstein's Gravity. The quantum field theory description at the Planck scale breaks down due to the nonrenormalisability of Einstein's theory of gravity. Therefore the search for a theory of Quantum Gravity, describing gravitational interactions at the quantum level is one of the most important problems in modern physics. There are few theories trying to solve that problem, like String Theory, Loop Quantum Gravity, Noncommutative Geometry and many more.  The topic of this thesis is connected with noncommutative spacetimes as one of the approaches to the description of Planck scale physics, particularly the description of geometry from the quantum mechanical point of view. At the Planck scale the idea of size or distance in classical terms is not valid any more, because one has to take into account quantum uncertainty. The Compton wavelength of any photon sent to probe the realm at this energy scale will be of the order of Schwarzschild radius. Therefore a photon sent to probe object of the Planck size (Planck length $L_p=1.62×10^{-35} m$) will be massive enough to create a black hole, so it will not be able to carry any information. The research on the structure of spacetime, at the scale where quantum gravity effects take place, is one of the most important questions in fundamental physics. One of the possibilities is to consider noncommutative spacetimes. Since in Quantum Mechanics and Quantum Field Theory the classical variables become noncommutative in the quantization procedure and in the General Theory of Relativity spacetime is a dynamical variable itself, one could assume that noncommutative spacetime will be one of the properties of physics at the Planck scale.  In noncommutative spacetime the quantum gravity effects modify the coordinate relations in the quantum phase space and, besides Heisenberg relations between coordinates and momenta, one has to introduce noncommutativity of coordinates themselves. This leads to new uncertainty relations for coordinates, which exclude the existence of Planck particles (of the size of the Schwarzschild radius) from the theory.\\

The first attempts of constructing a field theory on noncommutative spacetimes are related with Heisenberg's ideas from the 30's. In the 1940's, Snyder proposed the model of Lorentzian invariant discrete spacetime as a first example of noncommutative spacetime \cite{Snyder}. Noncommutative coordinates lead to a modification of relativistic symmetries, which are described by the Quantum Groups and Quantum Lie algebras. However the Theory of Quantum Groups has been developed independently at first. The mathematical formalisms connected with Hopf algebras, were introduced and used in mathematical physics by V. Drinfeld, L.D. Faddeev,  M. Jimbo, S. Woronowicz and Y. Manin in the 80's. The term "quantum group" first appeared in the theory of quantum integrable systems and later was formalized by V. Drinfeld and M. Jimbo as a particular class of Hopf algebras with connection to deformation theory (strictly speaking, as deformation of universal enveloping Lie algebra). Such deformations are classified in terms of classical r-matrix satisfying the classical Yang-Baxter equation. However Noncommutative Geometry as an independent area of mathematics appeared together with papers by Alain Connes and since then it found applications in many theories as String Theory, Noncommutative Quantum Field Theory (NC QFT) and NC Gauge Theories, NC Gravity, as well as in noncommutative version of the Standard Model. Nonetheless the Hopf algebras (Quantum Groups) formalism is strictly connected with noncommutative spacetimes and Quantum Groups are thought as generalizations of symmetry groups of the underlying spacetime. Mathematically speaking such noncommutative spacetime is the Hopf module algebra (a.k.a. covariant quantum space) and stays invariant under Quantum Group transformations. \\

The first example of deformed coordinate space as a background for the unification of gravity and quantum field theory was introduced in \cite{Dop1},\cite{Dop2}. This spacetime quantization effect is induced by classical gravity and it is called theta (or canonical)-noncommutativity. This type of commutation relations was also found in String Theory with a constant background field, where coordinates on the spacetime manifolds at the end of strings (D-branes) do not commute \cite{SW}. Since then it has been also the subject of many investigations concerning NC QFT defined over NC spaces with twisted Poincare symmetry \cite{Chai1,Chai2}, and NC Gauge Theories \cite{ADMSW}, NC Gravity \cite{ABDMSW,ADMW,Szabo2006}.

Another type of noncommutativity is the so-called Lie algebraic one, where coordinate commutation relations are of the Lie algebra type. This deformation was inspired by the $\kappa$-deformed Poincar\'{e} algebra \cite{Luk1,Luk2,MR} as deformed symmetry of the $\kappa$-Minkowski spacetime \cite{Z}. The deformation parameter $\kappa$ is of mass dimension, hence usually connected with Planck mass $M_P$ or more recently $M_{QG}$ quantum gravity scale. $\kappa$- deformation has been used by many authors, as a starting point to construct quantum field theories \cite{kappaQFT}, or modification of particle statistics  \cite{kappaSt} and then to discuss Planck scale physics, since it naturally included 'kappa' (Planck scale) corrections. Moreover the algebraic structure of the $\kappa$- Poincar\'{e} algebra has been and still is intensively analyzed  from mathematical and physical point of view by many authors \cite{Luk1,Luk2,MR,LRZ,KosLuk,Kos,GNbazy}.  Important result concerning bi-covariant differential calculus on $\kappa$-Minkowski spacetime has been shown in \cite{Sitarz}.
Recently also research on mathematical properties of $\kappa$-Minkowski spacetime from C*-algebra point of view experienced growing interest. Firstly the description of $\kappa$-Minkowski algebra as algebra of operators represented in Hilbert space was proposed in \cite{Ago} and then was followed by \cite{LD-GP}, \cite{SitDur}.
The past decade also contributed to new possibilities for applications of $\kappa$-deformation formalism thanks to the the so-called deformed special relativity theories (DSR), originally called doubly special relativity. A chance for physical application of  $\kappa$-deformation appeared when an
extension of special relativity was proposed in Refs. \cite{AC,BACKG}, and another one, showing different point of view, in Refs. \cite{Smolin} \footnote{ For comparison of these two approaches see, e.g., \cite{comparison}.}. The basic idea behind this extension is to consider two invariant parameters, the speed of light as in Special Relativity and a mass (length) parameter, e.g. the Planck Mass (length). Recently, various phenomenological aspects of DSR theories have been studied in, e.g., \cite{phenom}. It was also suggested that DSR can be thought as a phenomenological limit of quantum gravity theories. 
The connection between $\kappa$ -deformation and DSR theory in first formulation (DSR1)
has been shown (see, e.g. \cite{GNbazy}, \cite{BACKG}, \cite{LukDSR}) including the conclusion that the spacetime of DSR must be noncommutative as the result of Hopf structure of
this algebra.  Together with this connection a physical interpretation for deformation parameter $\kappa$, as second invariant scale, in $\kappa$-Minkowski spacetime appears naturally and allows us to interpret deformed dispersion relations as valid at the "$\kappa$-scale" (as Planck scale or Quantum Gravity scale) when quantum gravity corrections become relevant. Recent studies show that DSR theories might be experimentally falsifiable \cite{smolin1,DSRdebate, DSRdebateI,DSRnonloc}, which places $\kappa$-Minkowski spacetime on the frontier between strictly mathematical theoretical example of noncommutative algebra and physical theory describing Planck scale nature. However as it has been noticed \cite{ACdsrmyth} that $\kappa$-Minkowski/$\kappa$-Poincare formalism is not the only one which can be appropriate for description of DSR theories. Nevertheless it seems to be very fruitful and becoming even more promising providing framework for recent suggestions of experimentally testing quantum gravity effects. 

The mathematical framework connected with Quantum Groups and deformation theory in the case of
$\kappa$ -Poincar\'{e} and $\kappa$-Minkowski spacetime constitutes the first part of this thesis. General setup concerning Hopf algebras, twist deformation and smash (crossed) product construction is reminded in Chapter 1. Then $\kappa$-Minkowski spacetime from twist-deformation point of view is introduced in Chapter 2. Two families of twists: Abelian and Jordanian are considered, where the module algebra of functions over the universal enveloping algebra of inhomogeneous general linear algebra is equipped in star-product and can be represented by star-commutators of noncommuting ($\kappa$-Minkowski) coordinates. This approach is connected with the Drinfeld twist deformation technique applied to inhomogeneous general linear algebra. The corresponding family of realizations of $\kappa$-Minkowski spacetime as quantum spaces over deformed $\mathfrak{igl}(n)$ algebra is introduced together with the deformation of the minimal Weyl-Poincare algebra. Chapter 3 is focused on the mathematical properties of the quantum $\kappa$-Poincare group. $\kappa$ -Poincar\'{e} Hopf algebra is  considered as quantum deformation of the Drinfeld-Jimbo type, since its classical r-matrix satisfies the modified Yang-Baxter equation. The Drinfeld-Jimbo deformed Poincare algebra is described together with the deformed d'Alembert operator, which implies a deformation on dispersion relations. Two points of view are considered: a traditional one, with h-adic topology (Section 4.1), and a new one, with a reformulation of the algebraic sector of the corresponding Hopf algebra (the so-called q-analog version, which is no longer of the form of universal enveloping algebra) (Section 4.2). Correspondingly one can consider two non-isomorphic versions of $\kappa$-Minkowski spacetime (i.e. h-adic and q-analog ones), which in both cases is a covariant quantum space (i.e. Hopf module algebra over $\kappa$-Poincare Hopf algebra). Part I of this thesis ends with smash (a.k.a. crossed) product of Hopf algebras with their modules, in the case of $\kappa$-Poincare and $\kappa$-Minkowski, the smash-algebra is called the DSR algebra. It uniquely unifies $\kappa$-Minkowski spacetime coordinates with $\kappa$-Poincare generators. It is also proved that this DSR algebra can be obtained by a non-linear change of the generators from the undeformed one.  The Deformed Phase Space (Weyl) algebra is also described as a subalgebra of the DSR algebra, together with momenta and coordinates realizations. Again, it turns out to be obtained by change of generators in the undeformed one.

Part II concerns $\kappa$-Poincare algebra and $\kappa$-Minkowski spacetime formalism as one of the possible candidate for description of the Planck scale realm. Especially as formalism standing behind 
"Doubly (deformed) Special Relativity" (DSR). Moreover it was recently proposed that the measurement of quantum gravity effects might be possible thanks to the difference in arrival time of high energy photons coming from gamma ray bursts (GRB's). This "time delays" appear due to modified dispersion relations. The dispersion relations following from DSR are consistent with this effect and lead to Lorentz symmetry violation. Deformed dispersion relations which follow from $\kappa$-deformed d'Alambert operator lead to such "time delays" as well. Therefore one of the possibilities of the "time delay" mathematical framework is to consider, e.g. spacetime noncommutativity. In this part of the thesis two versions (Abelian and Jordanian) of deformed dispersion relations were compared and the corresponding time delays were calculated up to the second order accuracy in the quantum gravity scale (deformation parameter). Furthermore
the general framework describing modified dispersion relations and time delays with respect to different noncommutative $\kappa$-Minkowski spacetime realizations was proposed. It was also shown that some of the realizations provide certain bounds on quadratic corrections, i.e. on the quantum gravity scale consistent with $\kappa$-Minkowski noncommutativity.\\


\newpage
 \thispagestyle{empty}
\vspace{2cm}

\begin{center}
    {\LARGE PART I}
\end{center}

\vspace{1.5cm}
\begin{center}
    
\textsc{ \Huge Mathematical formalism}

\end{center}
\vspace{9.0cm}
\hspace{\stretch{1}} \begin{flushright}
     \emph{"Mathematics may be defined as the subject\\ in which we never know what we are talking about,\\ nor whether what we are saying is true."}
\end{flushright}
\begin{flushright}
    Bertrand Russell 
\end{flushright}

\vspace{1.0cm}
\hspace{\stretch{1}} \begin{flushright}
     \emph{"As far as the laws of mathematics refer to reality,\\ they are not certain, and as far as they are certain,\\ they do not refer to reality."}
\end{flushright}
   \begin{flushright}
     Albert Einstein
\end{flushright}

\vspace{5.0cm}

\newpage
\newpage
\thispagestyle{empty}

\chapter{Noncommutative Spacetimes and Their Symmetries}
\pagenumbering{arabic}
\fancyhf{} 
\fancyhead[LE,RO]{\bfseries\thepage}
\fancyhead[LO]{\bfseries\rightmark}
\fancyhead[RE]{\bfseries\leftmark}
Classically Minkowski spacetime is a four-dimensional manifold  (an affine space) equipped with a nondegenerate, symmetric bilinear form with lorentizian signature and it constitutes the mathematical setting in which Einstein's theory of special relativity is most conveniently formulated. A natural quantization of this manifold can be described in the language of non-commutative geometry. Motivation for this fact starts with Israel Gelfand and Mark Naimark in 1943 with their theorem which states that there is a one-to-one correspondence between commutative $C^*$-algebras and topological (locally compact, Hausdorff) spaces \cite{Gelfand}, see also \cite{GarciaNC}. In compact case one has:
\begin{theorem}
Every compact, Hausdorff space, is equivalent
to the spectrum of the $C^*$-algebra of continuous, complex valued
functions on that space.
\end{theorem}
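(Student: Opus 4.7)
The plan is to construct an explicit homeomorphism between $X$ and the spectrum $\mathrm{Spec}(C(X))$ of the commutative unital $C^*$-algebra $C(X)$ of continuous complex-valued functions on $X$. The spectrum is realized as the set of characters, that is, nonzero unital algebra homomorphisms $\chi\colon C(X)\to\mathbb{C}$, equipped with the weak-$*$ topology inherited as a subset of the dual space $C(X)^*$.

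First I would define the evaluation map $\varepsilon\colon X\to\mathrm{Spec}(C(X))$ by $\varepsilon(x)(f)=f(x)$. Each $\varepsilon(x)$ is manifestly a nonzero multiplicative linear functional, hence a character, so the map is well defined. Continuity of $\varepsilon$ follows directly from the definition of the weak-$*$ topology: a net $x_\alpha\to x$ in $X$ yields $\varepsilon(x_\alpha)(f)=f(x_\alpha)\to f(x)=\varepsilon(x)(f)$ for every $f\in C(X)$. Injectivity is the Hausdorff-separation content: for distinct $x,y\in X$, Urysohn's lemma (applicable since compact Hausdorff spaces are normal) produces an $f\in C(X)$ with $f(x)\neq f(y)$, so $\varepsilon(x)\neq\varepsilon(y)$.

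The crux of the argument is surjectivity. Given a character $\chi$, its kernel $M=\ker\chi$ is a maximal ideal of $C(X)$, and the task is to show that $M=M_x:=\{f\in C(X):f(x)=0\}$ for some $x\in X$. Suppose the contrary: then for every $x\in X$ there exists $f_x\in M$ with $f_x(x)\neq 0$, so $|f_x|^2=f_x\,\overline{f_x}$ is strictly positive on some open neighbourhood $U_x$ of $x$. By compactness finitely many $U_{x_1},\dots,U_{x_n}$ cover $X$, and then $g=\sum_{i=1}^{n} f_{x_i}\,\overline{f_{x_i}}$ lies in $M$ (using that $M$ is an ideal and that $C(X)$ is closed under complex conjugation) yet is everywhere strictly positive on $X$, hence invertible in $C(X)$. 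This contradicts $M$ being a proper ideal, so the supposition fails and $M=M_x$; equivalently $\chi=\varepsilon(x)$.

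Finally, $X$ is compact, $\mathrm{Spec}(C(X))$ is Hausdorff as a subspace of $C(X)^*$ in the weak-$*$ topology, and any continuous bijection from a compact space to a Hausdorff space is automatically a homeomorphism, which closes the argument. The main obstacle is the surjectivity step, which genuinely requires both compactness (through the finite subcover that produces a single invertible element of $M$) and the $*$-structure (to form the positive combination $f_x\,\overline{f_x}$); without compactness one only obtains a dense embedding rather than a homeomorphism, as happens in the locally compact Hausdorff setting where one must instead work with $C_0(X)$ and characters vanishing at infinity.
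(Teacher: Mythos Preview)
Your proof is correct and is the standard argument for this direction of Gelfand duality. However, the paper does not actually prove this theorem: it is stated as background motivation, attributed to Gelfand and Naimark with a citation, and then the exposition moves on immediately. So there is no ``paper's own proof'' to compare against; you have supplied what the paper deliberately omits.
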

Later on it was proven by Alain Connes that certain commutative algebras with some additional structure (the so-called spectral triples) are in one-to-one relation with compact Riemannian spin manifolds. Therefore one can see that geometry can be "written" in the algebra language. The idea of non-commutative geometry is to consider non-commutative algebras as non-commutative geometric spaces, i.e. to 'algebralize' geometric notions and then generalize them to noncommutative algebras  \cite{Connes}. For example such noncommutative algebras one can obtain via deformation procedure from commutative ones. Quantum deformations, which lead to noncommutative algebras (as noncommutative spacetimes), are strictly connected with the Quantum Groups (Hopf algebras) formalism, which constitutes the description of deformed symmetries. Mathematically speaking such noncommutative spacetime is a Hopf module algebra over Hopf algebra (Quantum Group) of symmetry and quantum noncommutative spacetime stays invariant under quantum group of transformations in analogy to the classical case.
Therefore one can say (physically) that quantum group is noncommutative generalization of a symmetry group of the physical system. Quantum groups are the deformations of some classical structures as groups or Lie algebras, which are made in the category of Hopf algebras. Similarly, quantum spaces are noncommutative generalizations (deformations) of ordinary spaces.

 One of the possible deformations which lead to noncommutative spacetimes is the Drinfeld's quantum deformation technique and it can be applied in quantizing differential manifolds. Vector fields provide an infinite dimensional Lie algebra and act naturally, as derivations, on the algebra of functions on a manifold. However this algebra is simultaneously a Hopf module algebra over the Hopf algebra generated by vector fields and it can be deformed by relevant twisting
element. The deformed algebra of functions is the so called noncommutative spacetime (a.k.a. covariant quantum space) and it contributes to the realization of Connes' idea of noncommutative space (manifold) and might be helpful in quantizing gravity. The quantum Hopf algebra has an undeformed algebraic sector (commutators) and a deformed co-algebraic sector
(coproducts and antipodes). The twisted module algebra is now equipped with twisted star-product and can be represented by deformed (star)-commutators of noncommutative coordinates which replace the commutative ones and play the role of generators of that algebra. The whole deformation depends on a formal parameter which controls classical limit. Simultaneous deformation of coproduct in Hopf algebra and product in Hopf-module algebra with undeformed action, forces us to deform
accordingly the Leibniz rule. In the following this procedure will be explicitly shown on the examples of Lie-algebraic type of noncommutative algebra, i.e. $\kappa$- Minkowski spacetime. Firstly for quantum group of symmetry for $\kappa$- Minkowski spacetime one considers twisted deformation of universal envelope of inhomogeneus linear algebra ($\U_{\mathfrak{igl}}[[h]]^\F$). 
However more suitable candidate for symmetry of $\kappa$- Minkowski spacetime from physical point of view is $\kappa$-Poincare quantum group, since it gives link with relativistic physics.
%
 But before getting to that point let us start with general remarks on Hopf algebras and their modules.

\section{Preliminaries}

Let us begin with the basic notion on a Hopf algebra considered as symmetry algebra (quantum group) of another algebra representing quantum space: the so-called  module algebra or  covariant quantum space. A Hopf algebra is a complex\footnote{We shall work mainly with vector spaces over the f\/ield of complex numbers $\mathbb{C}$. All maps are
$\mathbb{C}$-linear maps.}, unital and associative algebra equipped with additional structures such as a comultiplication, a counit and an antipode.
\begin{definition}\label{HA}
Let $\H$ be a vector space equipped simultaneously with an algebra structure 
$(\H,m,\eta)$ and a coalgebra structure $(\H, \Delta, \epsilon)$.\\
$(\H,m,\eta)$ as an associative algebra with unit and two linear maps $m : \H\otimes\H\shortrightarrow\H$, called 
the multiplication or the product, and $\eta: \mathbb{C}\shortrightarrow\H$, called the unit, such that:
\begin{equation}
m\circ (m \otimes id) = m \circ (id \otimes m);\qquad
m\circ(\eta \otimes id) = id = m \circ (id \otimes \eta) 
\end{equation}
Algebra $\H$ is equipped with product: $LM=m(L\otimes M)$ ($L,M\in\H$) and the map $\eta$ is determined by its value $\eta(1)\in\H$ is the unit element of $\H$.\\
$(\H, \Delta, \epsilon)$ as coalgebra is a vector space $\H$ over $\mathbb{C}$ equipped with two linear mappings\\ $\Delta : \H \shortrightarrow\H \otimes \H$, called the comultiplication or the coproduct, and $\epsilon: \H\shortrightarrow\mathbb{C}$ called the counit, such that: 
\begin{equation}
( \Delta\otimes id) \circ \Delta = (id \otimes \Delta) \circ \Delta;\qquad 
(\epsilon \otimes id) \circ\Delta = id = (id \otimes \epsilon) \circ \Delta
\end{equation}
$\Delta$ and $\epsilon$ are algebra homomorphisms, $m$ and $\eta$ are coalgebra homomorphisms.  
\begin{equation}
\Delta(LM)=\Delta(L)\Delta(M);\qquad\epsilon(LM)=\epsilon(L)\epsilon(M);
\end{equation}
for $L,M\in\H$ and $\Delta(1)=1\otimes 1;\qquad\epsilon(1)=1$.\\
Such bialgebra $\H$ is called \textbf{Hopf algebra} if there exists a linear 
mapping $S :\H\shortrightarrow \H$, called the antipode or the coinverse of $\H$, such that :
\begin{equation}
m\circ (S\otimes id)\circ \Delta =\eta\circ\epsilon = m\circ(id\otimes S)\circ\Delta 
\end{equation}
The antipode $S$ of a Hopf algebra $\H$ is an algebra anti-homomorphism and a coalgebra anti-homomorphism of $\H$.
\end{definition}
 In some sense,
these structures and their axioms reflect the multiplication, the unit element
and the inverse elements of a group and their corresponding properties \cite{Klimyk, Kassel}.
Let us also introduce the so-called Sweedler notation for the comultiplication. 
If $a$ is an element of a coalgebra $\H$, the element $\Delta(a)\in\H\otimes\H$ is a finite sum 
$\Delta(a)=\sum_i a^i_{(1)}\otimes a_{(2)}^i,\quad a^i_{(1)},a^i_{(2)}\in\H$. As a shortcut we shall simply write: $\Delta(a)=a_{(1)}\otimes a_{(2)}$.
\begin{example}\label{example1}
Any Lie algebra $\mathfrak{g}$ provides an example of the (undeformed) Hopf algebra by taking its universal enveloping algebra $\U_{\mathfrak{g}}$
equipped with the primitive coproduct: $\Delta_0(u)=u\otimes 1+1\otimes u$, counit:  $\epsilon(u) = 0$, $\epsilon(1) = 1$ and antipode: $S_0(u) = -u$, $S_0(1) = 1$, for $u\in \mathfrak{g}$, and extending them by multiplicativity property to the entire $\U_{\mathfrak{g}}$. Recall that the universal enveloping algebra is a result of the factor construction
\begin{gather}\label{UEnvelop}
    \U_{\mathfrak{g}}=\frac{T\mathfrak{g}}{J_{\mathfrak{g}}},
\end{gather}
where $T\mathfrak{g}$ denotes tensor (free) algebra of the vector space $\mathfrak{g}$ quotient out by the ideal $J_\mathfrak{g}$ generated by elements $\langle X\otimes Y-Y\otimes X-[X, Y]\rangle$: $X, Y \in \mathfrak{g}$.
\end{example}
\begin{definition}\label{module}
A (left) $\H$-module algebra $\A$ over a Hopf algebra $\H$ is an algebra  $\A$ which is a left
$\H$-module such that $m:\A\otimes\A\shortrightarrow\A$ and $\eta:\mathbb{C}\shortrightarrow\A$ are left $\H$-module homomorphisms.
If $\triangleright:\H\otimes\A\rightarrow\A$ denotes (left) module action 
$L\triangleright f$ of $L\in\H$ on $f\in\A$ the following compatibility
condition is satisfied: 
\begin{gather}\label{genLeibniz}
L\triangleright (f\cdot g)=(L_{(1)}\triangleright f)\cdot (L_{(2)}\triangleright g)
\end{gather}
for $L\in\H$, $f,g\in\A$ and $L\triangleright 1=\epsilon(L)1$, $1\triangleright f=f$ (see, e.g., \cite{Klimyk, Kassel}).\\
And analogously for right $\mathcal{H}$-module algebra $\mathcal{A}$ the condition:
 $$(f\cdot g)\triangleleft L=(f\triangleleft L_{(1)})\cdot (g\triangleleft L_{(2)})$$
is satisfied, with (right)-module action $\triangleleft :\mathcal{A}\otimes 
\mathcal{H}\rightarrow \mathcal{A}$; for $L\in \mathcal{H}$, $f,g\in 
\mathcal{A}$, $1\triangleleft L=\epsilon (L)1$, $f\triangleleft 1=f$.
\end{definition}
In mathematics the module condition plays a role of generalized Leibniz rule and it invokes exactly the Leibniz rule for primitive elements $\Delta(L)=L\otimes 1+1\otimes L$.
Instead, in physically oriented literature it is customary to call this condition a covariance condition and the corresponding algebra $\A$ a covariant quantum space (see, e.g., \cite{ABDMSW,DJMTWW,covQS}) with respect to $\H$. The covariance condition (\ref{genLeibniz}) entitles us also to introduce a new unital and associative algebra, the so-called smash (or crossed) product algebra $\A\rtimes\H$ \cite{Klimyk, Kassel,Majid2,smash}.
\begin{definition}
Smash (crossed) product is determined on the vector space $\A\otimes\H$ by multiplication:
\begin{gather}\label{smash}
(f\otimes L)\rtimes(g\otimes M)=f(L_{(1)}\triangleright g)\otimes L_{(2)}M
\end{gather}
(in the case of $\mathcal{A}$ being left $\mathcal{H}$-module). Initial algebras are canonically embedded, $\A\ni f\shortrightarrow f\otimes 1$ and $\H\ni L\shortrightarrow 1\otimes L$~as subalgebras in $\A\rtimes\H$.\footnote{Further on, in order to simplify notation, we shall identify $(f\otimes 1)\rtimes(1\otimes L)$ with $fL$, therefore~(\ref{smash}) rewrites simply as $(fL)\rtimes(gM)=f(L_{(1)}\triangleright g)L_{(2)}M$ (and respectively for $\H \ltimes \A$).}\\

In the case of right $\mathcal{H}$-module algebra $\mathcal{A}$ the smash product $\mathcal{%
H\ltimes A}$ structure is determined on the vector space $\mathcal{H}\otimes 
\mathcal{A}$ by multiplication: $
(L\otimes f)\ltimes(M\otimes g)=LM_{(1)}\otimes (f\triangleleft M_{(2)})g$.
\end{definition}

 Particularly, the trivial action $L\triangleright g=\epsilon(L)g$ makes $\A\rtimes\H$ isomorphic to the ordinary tensor product algebra $\A\otimes\H$: $(f\otimes L)(g\otimes M)=fg\otimes LM$. It has a canonical Heisenberg representation on the vector space $\A$ which reads as follows:
\begin{gather*}
\hat{f}(g)= f\cdot g, \qquad \hat{L}(g)=L\triangleright g,
\end{gather*}
where $\hat{f}$, $\hat{L}$ are linear operators acting in $\A$, i.e.\ $\hat{f}, \hat{L}\in {\rm End}\, \A$. In other words around, the action~(\ref{genLeibniz}) extend to the action of entire $\A\rtimes\H$ on $\A$: $(f M)\triangleright g=f(M\triangleright g)$ (and respectively for $\H\ltimes \A$).

\begin{remark}\label{bicross}
 Note that $\A\rtimes\H$ (or $\H\ltimes\A$)  are not a Hopf algebras in general. In fact, enactment of the Hopf algebra structure on $\A\rtimes\H$ (or $\H\ltimes\A$) involves some extras (e.g., co-action) and is related with the so-called bicrossproduct  construction~\cite{Majid2} (see also Appendix \ref{app2}). It has been shown in~\cite{MR}  that $\kappa$-Poincar\'{e} Hopf algeb\-ra~\cite{Luk1} admits bicrossproduct construction.
\end{remark}
\begin{example}\label{example2}
Another interesting situation appears when the algebra $\A$ is a universal envelope of some Lie algebra $\mathfrak{h}$, i.e.\ $\A=\U_{\mathfrak{h}}$.
In this case it is enough to determine the Hopf action on generators $a_i\in \mathfrak{h}$ provided that the consistency conditions
\begin{gather}\label{smash2}
(L_{(1)}\triangleright a_i)\left( L_{(2)}\triangleright a_j\right) -(L_{(1)}\triangleright%
a_j)\left( L_{(2)}\triangleright a_i\right) -\imath c_{ij}^k L\triangleright a_k=0
\end{gather}
hold, where $[a_i, a_j]=\imath c_{ij}^ka_k$.
Clearly, (\ref{smash2}) allows to extend the action to entire algebra $\A$. For similar reasons the def\/inition of the action can be reduced to generators $L$ of $\H$.
\end{example}

\begin{example}\label{example3}
One more interesting case appears if $\H=\U_{\mathfrak{g}}$ for the Lie algebra
$\mathfrak{g}$ of some Lie group $G$ provided with the primitive Hopf algebra structure. This undergoes a "geometrization" procedures as follows. Assume one has given $G$-manifold $\M$. Thus $\mathfrak{g}$ acts via vector f\/ields on the (commutative) algebra of smooth functions $\A=C^\infty(\M)$.\footnote{In fact $\A$ can be chosen to be $\mathfrak{g}$-invariant subalgebra of $C^\infty(\M)$.} Therefore the Leibniz rule makes the compatibility conditions~(\ref{smash2}) warranted. The corresponding smash product algebra becomes an algebra of dif\/ferential operators on $\M$ with coef\/f\/icients in $\A$. A deformation of this geometric setting has been recently advocated as an alternative to quantization of gravity (see, e.g., \cite{ABDMSW,ADMW,Szabo2006} and references therein).
\end{example}

\begin{example}\label{example4}
A familiar Weyl algebra can be viewed as a crossed product of an algebra of translations $\mathfrak{T}^{n}$ containing $P_{\mu }$ generators with an algebra $\mathfrak{X}^n$ of spacetime coordinates $x^{\mu }$. More exactly, both algebras are def\/ined as a dual pair of the universal commutative algebras with $n$-generators (polynomial algebras), i.e.\ $\mathfrak{T}^{n}\equiv {\rm Poly}(P_\mu)\equiv\mathbb{C}[P_0,\ldots, P_{n-1}]$ and $\mathfrak{X}^{n}\equiv {\rm Poly}(x^\mu)\equiv\mathbb{C}[x^0,\ldots, x^{n-1}]$.\footnote{Here $n$  denotes a dimension of physical spacetime
which is not yet provided with any metric. Nevertheless, for the sake of future applications, we shall use "relativistic" notation with spacetime indices $\mu$ and $\nu $ running $0,\ldots,n-1$ and space indices $j,k=1,\ldots,n-1$ (see Example~\ref{ortho}).} Alternatively, both algebras are isomorphic to the universal enveloping algebra $\U_{\mathfrak{t}^{n}}\cong \mathfrak{T}^{n}\cong\mathfrak{X}^{n}$ of the $n$-dimensional Abelian Lie algebra $\mathfrak{t}^{n}$. Therefore one can make use of the primitive Hopf algebra structure on $\mathfrak{T}^{n}$ and extend the action implemented by duality map
\begin{gather}\label{actionP}
P_{\mu}\triangleright x^\nu=-\imath \langle P_{\mu}, x^\nu\rangle =-\imath\delta_\mu^\nu, \qquad
P_{\mu}\triangleright 1=0
\end{gather}
to whole algebra  $\mathfrak{X}^{n}$ due to the Leibniz rule, e.g.,
$P_{\mu}\triangleright (x^\nu x^\lambda)=-\imath\delta^\nu_\mu x^\lambda-\imath\delta^\lambda_\mu x^\nu$, induced by primitive coproduct $\Delta(P_\mu)=P_\mu\otimes 1+1\otimes P_\mu$, cf.~(\ref{smash}).
In result one obtains the following standard set of Weyl--Heisenberg commutation relations:
\begin{gather}\label{Weyl}
\left[ P_{\mu },x^{\nu }\right]_\rtimes\equiv\left[ P_{\mu },x^{\nu }\right]=-\imath\delta _{\mu}^\nu\, 1, \qquad
\left[ x^{\mu },x^{\nu }\right]_\rtimes\equiv\left[ x^{\mu },x^{\nu }\right]=
\left[ P_{\mu },P_{\nu }\right]_\rtimes\equiv\left[ P_{\mu },P_{\nu }\right]=0.
\end{gather}
as generating relations\footnote{Hereafter we skip denoting commutators with $\rtimes$ symbol when it is clear that they has been obtained by smash product construction.} for the Weyl algebra $\mathfrak{W}^{n}\equiv\mathfrak{X}^{n}\rtimes\mathfrak{T}^{n}$. In fact the algebra (\ref{Weyl}) represents the Heisenberg double \cite{Klimyk,Kassel,Hdouble}, which will be mentioned later in this Section. It means that $\mathfrak{T}^{n}$ and $\mathfrak{X}^{n}$ are dual pairs of Hopf algebras (with the primitive coproducts) and the action (\ref{actionP}) has the form: $P\triangleright x=\langle P, x_{(1)}\rangle x_{(2)}$.
In the Heisenberg representation $P_{\mu} \triangleright = -\imath\partial_\mu=-\imath\frac{\partial}{\partial x^\mu}$. For this reason  the
Weyl algebra is known as an algebra of dif\/ferential operators with polynomial coef\/f\/icients in $\mathbb{R}^n$.  \end{example}

\begin{remark}
The Weyl algebra as def\/ined above is not an enveloping algebra of any Lie algebra. It is due to the fact that the action (\ref{actionP}) is of ``$0$-order''. Therefore, it makes dif\/f\/icult to determine a Hopf algebra structure on it. The standard way to omit this problem relies on introducing the central element $C$ and replacing the commutation relations (\ref{Weyl}) by the following ones
\begin{gather}\label{Weyl--Heisenberg}
\left[ P_{\mu },x^{\nu }\right]=-\imath \delta _{\mu}^\nu C, \qquad
\left[ x^{\mu },x^{\nu }\right]=\left[C ,x^{\nu }\right]=
\left[ P_{\mu },P_{\nu }\right]=\left[C ,P_{\nu }\right]=0.
\end{gather}
The relations above determine $(2n+1)$-dimensional Heisenberg Lie algebra of rank $n+1$ . Thus Heisenberg algebra can be def\/ined as an enveloping algebra for (\ref{Weyl--Heisenberg}).
 We shall not follow this path here, however it may provide a starting point for Hopf algebraic deformations, see e.g.~\cite{LukMinn}.
\end{remark}

\begin{remark}
Real structure on a complex algebra can be def\/ined by an appropriate Hermitian conjugation~$\dag$. The most convenient way to introduce it, is by the indicating Hermitian generators. Thus the real structure corresponds to real algebras. For the Weyl algebra case, e.g., one can set the generators $(P_\mu, x^\nu)$ to be Hermitian, i.e.\ $(P_\mu)^\dag=P_\mu$, $(x^\nu)^\dag=x^\nu$. We shall not explore this point here. Nevertheless, all commutation relations below will be written in a form adopted to Hermitian realization. \end{remark}
\begin{remark}\label{Hspace}
Obviously, the Hopf action (\ref{actionP}) extends to the full algebra $C^\infty(\mathbb{R}^n)\otimes\mathbb{C}$ of complex valued smooth functions on $\mathbb{R}^n$. Its invariant subspace of compactly supported  functions $C_0^\infty(\mathbb{R}^n)\otimes\mathbb{C}$ form a dense domain in the Hilbert space of square-integrable functions: ${\cal L}^2(\mathbb{R}^n, dx^n)$. Consequently,  the Heisenberg  representation extends  to Hilbert space representation of $\mathfrak{W}^{n}$ by (unbounded) operators. This corresponds to canonical quantization procedure and in the relativistic case leads to St\"{u}ckelberg's version of Relativistic Quantum Mechanics~\cite{Menski}.
\end{remark}

\begin{example}\label{example5}
 Smash product generalizes also the notion of Lie algebra semidirect product. As an example one can consider a semidirect product of $\mathfrak{gl}(n)$ with the algebra of translations $\mathfrak{t}^n$: $\mathfrak{igl}(n)=\mathfrak{gl}(n)\niplus\mathfrak{t}^n$.  Thus $\U_{\mathfrak{igl}(n)}=\mathfrak{T}^{n}\rtimes\U_{\mathfrak{gl}(n)}$. Now the corresponding (left) Hopf action of $\mathfrak{gl}(n)$ 
on $\mathfrak{T}^n$ generators reads
\begin{gather*}
L^{\mu}_{\nu}\triangleright P_{\rho}=\imath\delta^\mu_\rho P_\nu.
\end{gather*}
The resulting algebra is described by a standard set of $\mathfrak{igl}(n)$ commutation relations:
\begin{gather}  \label{igl}
[L_{\nu }^{\mu },L_{\lambda }^{\rho }]=-\imath \delta _{\nu }^{\rho }L_{\lambda
}^{\mu }+\imath\delta _{\lambda }^{\mu }L_{\nu }^{\rho },\qquad    [L_{\nu
}^{\mu },P_{\lambda }]=\imath\delta _{\lambda }^{\mu }P_{\nu }, \qquad [P_\mu, P_\nu]=0.
\end{gather}

Analogously one can consider Weyl extension of $\mathfrak{gl}(n)$ as a
double crossed-product construction $\mathfrak{X}^n\rtimes(\mathfrak{T}^n\rtimes
\U_{\mathfrak{gl}(n)})$ with generating relations (\ref{igl})
supplemented by
\begin{gather*}
[L_{\nu}^{\mu }, x^{\lambda }]=-\imath\delta_{\nu }^{\lambda}x^{\mu } , \qquad
[P_\mu, x^\nu]=-\imath\delta_\mu^\nu , \qquad [x^\mu, x^\nu]=0.
\end{gather*}
The corresponding action is classical, i.e.\ it is implied by Heisenberg dif\/ferential representation (cf.\ formula~(\ref{Heisenberg2}) below):
\begin{gather}\label{action1}
 P_{\mu}\triangleright x^\nu=-\imath\delta_\mu^\nu, \qquad
   L^{\mu}_{\nu}\triangleright x^{\rho}=-\imath\delta_\nu^\rho x^\mu.
\end{gather}
Therefore, the Weyl algebra $\mathfrak{W}^n$ becomes a subalgebra in
$\mathfrak{X}^n\rtimes\left(\mathfrak{T}^n\rtimes\U_{\mathfrak{gl}(n)}\right)$.
Besides this isomorphic embedding one has a surjective algebra homomorphism
$\mathfrak{X}^n\rtimes\left(\mathfrak{T}^n\rtimes\U_{\mathfrak{gl}(n)}\right)\rightarrow\mathfrak{W}^n$
provided by
\begin{gather}\label{Heisenberg1a}
P_\mu \rightarrow P_\mu , \qquad
x^\mu\rightarrow x^\mu , \qquad L^\nu_\mu \rightarrow x^\nu P_\mu.
\end{gather}
We shall call this epimorphism Weyl algebra (or Heisenberg) realization of
$\mathfrak{X}^n\rtimes\left(\mathfrak{T}^n\rtimes\U_{\mathfrak{gl}(n)}\right)$.
Particularly, the map $L^\nu_\mu \rightarrow x^\nu P_\nu$ is a Lie algebra isomorphism.
The Heisenberg realization described above induces Heisenberg representation of $\mathfrak{X}^n\rtimes\left(\mathfrak{T}^n\rtimes\U_{\mathfrak{gl}(n)}\right)$
\begin{gather}\label{Heisenberg2} P_\mu\,\triangleright=-\imath\partial_\mu\equiv-\imath\frac{\partial}{\partial x^\mu}, \qquad
x^\mu=x^\mu ,\qquad L^\nu_\mu
\triangleright=-\imath x^\nu\partial_\mu
\end{gather}
acting in the vector space $\mathfrak{X}^n$. One can notice that (\ref{Heisenberg2}) can be extended to the vector space $C^\infty(\mathbb{R}^n)\otimes\mathbb{C}$ and f\/inally to the Hilbert space representation in ${\cal L}^2(\mathbb{R}^n, dx^n)$.
\end{example}
More examples one can f\/ind noticing that the Lie algebra $\mathfrak{igl}(n)$ contains several interesting subalgebras, e.g., inhomogeneous special linear transformations $\mathfrak{%
isl}(n)=\mathfrak{sl}(n)\niplus\mathfrak{t}^{n}$.

\begin{example}\label{ortho}
Even more interesting and important for physical applications example is provided by the inhomogeneous orthogonal transformations $\mathfrak{io}(g; n)=\mathfrak{o}%
(g; n)\niplus\mathfrak{t}^{n}\subset \mathfrak{isl}(n)$,  which Weyl extension is defined by the following set of commutation relations:
\begin{gather}  \label{isog1}
[M_{\mu \nu },M_{\rho \lambda }]= \imath g_{\mu \rho }M_{\nu
\lambda }-\imath g_{\nu \rho }M_{\mu \lambda }- \imath g_{\mu
\lambda }M_{\nu \rho }
+\imath g_{\nu \lambda }M_{\mu \rho },
\\  \label{isog2}
[M_{\mu \nu },P_{\lambda }]= \imath g_{\mu \lambda}P_{\nu }-\imath g_{\nu \lambda }P_{\mu }
, \\
\label{Weyl_g}
[M_{\mu \nu },x_{\lambda }]= \imath g_{\mu \lambda}x_{\nu }-\imath g_{\nu \lambda }x_{\mu },
\qquad
\left[ P_{\mu },x_{\nu }\right]= -\imath g_{\mu\nu}, \qquad
\left[ x_{\mu },x_{\nu }\right]=
\left[ P_{\mu },P_{\nu }\right]=0,
\end{gather}
where
\begin{gather}\label{isog3}
M_{\mu \nu }=g_{\mu \lambda }L_{\nu }^{\lambda }-g_{\nu \lambda }L_{\mu
}^{\lambda }.
\end{gather}
are defined by means of the (pseudo-Euclidean)\footnote{We shall write $\mathfrak{io}(n-p, p)$ whenever the signature $p$ of the metric will become important. In fact, dif\/ferent metric's signatures lead to dif\/ferent real forms of $\mathfrak{io}(n,\mathbb{C})$.} metric tensor $g_{\mu\nu}$ and $x_\lambda=g_{\lambda\nu}x^\nu$.
The last formula determines together with (\ref{action1}) the classical action of $\mathfrak{io}(n,\mathbb{C})$ on  $\mathfrak{X}^n$.
Thus relations (\ref{isog1})--(\ref{Weyl_g}) determine Weyl extension of the inhomogeneous orthogonal Lie algebra $\mathfrak{X}^n\rtimes(\mathfrak{T}^n\rtimes\U_{\mathfrak{o}(g; n)})$ as subalgebra
in $\mathfrak{X}^n\rtimes(\mathfrak{T}^n\rtimes\U_{\mathfrak{gl}(n)})$. In particular, the case of Poincar\'{e} Lie algebra $\mathfrak{io}(1,3)$ will be shown in more detail later on.
\end{example}

One special class of examples of the above crossed product algebras is the 
\textbf{Heisenberg double}. It takes place if $\mathcal{H}$ and $\mathcal{A}$
from above constitute a dual pair of Hopf algebras\footnote{%
Sometimes, for shortcuts, we denote $\mathcal{A}=\mathcal{H}^{\ast }$ or $%
\mathcal{H}=\mathcal{A}^{\ast }$ and call them dual Hopf algebras %
although it has strict sense only in the finite-dimensional
case.}, i.e. one has a nondegenerate bilinear Hopf
pairing between two Hopf algebras $<\cdot ,\cdot >:$ $\mathcal{H}\times 
\mathcal{A}\rightarrow\mathbb{C}$.
More precisely both Hopf algebra structures are mutually dual, in a sense:
\begin{eqnarray}
<L_{\left( 1\right) },f>< L_{\left( 2\right) }, f^{\prime }>=<L,f\cdot _{%
\mathcal{A}}f^{\prime }>;\\ <L,f_{\left( 1\right)}><L^{\prime
},f_{(2)}>=<f,L\cdot _{\mathcal{H}}L^{\prime }> ; \\
<L,S_{\mathcal{A}}\left( f\right) >=<S_{\mathcal{H}}\left( L\right)
,f>;\\ <1,->=\epsilon _{\mathcal{A}}; <-,1>=\epsilon _{\mathcal{H}}
\end{eqnarray}
where $L,L^{\prime }\in \mathcal{H}\left( m_{\mathcal{H}},\Delta _{\mathcal{H}},\epsilon _{\mathcal{H%
}},S_{\mathcal{H}}\right);\ f, f^{\prime }\in\mathcal{A}\left( m_{\mathcal{A}},\Delta _{%
\mathcal{A}},\epsilon _{\mathcal{A}},S_{\mathcal{A}}\right)$. 

The left regular action of (a quantum group) $\mathcal{H}$ on (covariant
quantum space) $\mathcal{A}=\mathcal{H}^{\ast }$ can be defined in the
following way:%
\begin{equation}  \label{left}
L\triangleright f=<L,f_{\left( 2\right) }>f_{(1)}
\end{equation}
where $\Delta _{\mathcal{A}}\left( f\right) =f_{(1)}\otimes $ $f_{\left(
2\right) };\forall L\in \mathcal{H};f,g,\in \mathcal{A}=\mathcal{H}^{\ast }$%
. In this case the corresponding crossed product algebra $\left( \mathcal{A}%
\rtimes \mathcal{H}\right) $ is called the Heisenberg double of the pair $%
\mathcal{A},\mathcal{H}$ and denoted by $\mathfrak{H}(\mathcal{A}
)$. The product in this algebra becomes:\newline
$(f\otimes L)\rtimes(g\otimes
M)=<L_{\left( 1\right) },g_{\left( 2\right) }>fg_{\left( 1\right) }\otimes
L_{\left( 2\right) }M$.\\ 
For the right regular action we have analogous expression to (\ref{left}) in the following form: $f\triangleleft L=<L,f_{(1)}>f_{\left( 2\right)}$. Thus
\begin{equation*}
<LL^{\prime },a>=<L,L^{\prime }\triangleright a>=<L^{\prime },a\triangleleft
L>=<1,L^{\prime }\triangleright a\triangleleft L>=\epsilon _{\mathcal{A}%
}(L^{\prime }\triangleright a\triangleleft L)
\end{equation*}%
where $L^{\prime }\triangleright a\triangleleft L=<L^{\prime
},a_{(3)}><a_{(1)},L>a_{(2)}$ with symbolic notation:
$\Delta ^{2}(a)=a_{(1)}\otimes a_{(2)}\otimes a_{(3)}$. The crossed product
algebra (Heisenberg double) $\mathfrak{H}(\mathcal{H})=\mathcal{H\ltimes A}$
is equipped with:
\begin{equation}
(L\otimes f)\ltimes (M\otimes g)=LM_{(1)}\otimes <M_{\left( 2\right)
},f_{(1)}>f_{\left( 2\right) }g
\end{equation}
in that case.
\begin{remark}
We are allowed to exchange the roles between Hopf algebra
representing symmetry group and those representing underlying space and
define the (left) action of $\mathcal{A}$ on $\mathcal{H=A}^{\ast }$ as: 
\begin{equation}
f\triangleright L=<L_{(2)},f>L_{\left( 1\right) }
\end{equation}
or similarly the (right) action of $\mathcal{A}$ on $\mathcal{H=A}^{\ast }$
: 
\begin{equation}
L\triangleleft f=<L_{(1)},f>L_{\left( 2\right) }
\end{equation}
\begin{proposition}
One has canonical isomorphism $$\mathfrak{H}(\mathcal{H})=\mathcal{H\ltimes A}%
\backsimeq \mathcal{H}\rtimes \mathcal{A}\qquad\mbox{and}\qquad 
\mathfrak{H}(\mathcal{A})=\mathcal{A\ltimes H}%
\backsimeq \mathcal{A}\rtimes \mathcal{H}$$ 
\end{proposition}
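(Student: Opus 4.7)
The proposal is to show that both claimed isomorphisms are realized by the identity map on the underlying vector space $\mathcal{H}\otimes\mathcal{A}$. The Hopf pairing converts the right action of $\mathcal{H}$ on $\mathcal{A}$ into an equivalent left action of $\mathcal{A}$ on $\mathcal{H}$, and in fact the defining multiplication of $\mathcal{H}\ltimes\mathcal{A}$ already coincides, term-by-term, with the multiplication of $\mathcal{H}\rtimes\mathcal{A}$. The analogous statement for $\mathfrak{H}(\mathcal{A})$ follows by exchanging the roles of the two Hopf algebras.

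First I would verify that the formula $f\triangleright L=\langle L_{(2)},f\rangle L_{(1)}$ really turns $\mathcal{H}$ into a left $\mathcal{A}$-module algebra, so that $\mathcal{H}\rtimes\mathcal{A}$ is a well-defined smash product at all. The module axiom $(ff')\triangleright L=f\triangleright(f'\triangleright L)$ reduces to the identity $\langle L_{(2)},ff'\rangle=\langle L_{(2)},f\rangle\langle L_{(3)},f'\rangle$, which is just the pairing of the product in $\mathcal{A}$ with the coproduct in $\mathcal{H}$, combined with coassociativity of $\Delta_{\mathcal{H}}$. The Leibniz-type compatibility $f\triangleright(LL')=(f_{(1)}\triangleright L)(f_{(2)}\triangleright L')$ uses instead the fact that $\Delta_{\mathcal{H}}$ is an algebra homomorphism together with $\langle LL',g\rangle=\langle L\otimes L',\Delta g\rangle$. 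Unit and counit compatibility are immediate from $\epsilon_{\mathcal{H}}$ and $\epsilon_{\mathcal{A}}$ being the units of the pairing.

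Next, on a generic element $(L\otimes f)(M\otimes g)$ I would compute both products side by side:
\[
(L\otimes f)\ltimes(M\otimes g)=LM_{(1)}\otimes(f\triangleleft M_{(2)})g=\langle M_{(2)},f_{(1)}\rangle\, LM_{(1)}\otimes f_{(2)}g,
\]
\[
(L\otimes f)\rtimes(M\otimes g)=L(f_{(1)}\triangleright M)\otimes f_{(2)}g=\langle M_{(2)},f_{(1)}\rangle\, LM_{(1)}\otimes f_{(2)}g.
\]
The two right-hand sides are identical, so the identity map on $\mathcal{H}\otimes\mathcal{A}$ is the required algebra isomorphism $\mathcal{H}\ltimes\mathcal{A}\simeq\mathcal{H}\rtimes\mathcal{A}$. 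The second isomorphism $\mathfrak{H}(\mathcal{A})=\mathcal{A}\ltimes\mathcal{H}\simeq\mathcal{A}\rtimes\mathcal{H}$ is obtained by the completely parallel argument: one checks that $L\triangleright f=\langle L,f_{(2)}\rangle f_{(1)}$ makes $\mathcal{A}$ a left $\mathcal{H}$-module algebra, and a symmetric computation of $(f\otimes L)(g\otimes M)$ in both structures gives the common value $\langle L_{(1)},g_{(2)}\rangle\, fg_{(1)}\otimes L_{(2)}M$.

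\emph{Main obstacle.} No deep structural step is expected; the whole argument is essentially Sweedler-index bookkeeping. The only points that demand care are the correct identification of which action (left versus right) belongs to which smash product, and consistent application of the two dual identities that exchange a coproduct in one Hopf algebra for a product in the other. Once the multiplication formulas are unwound with these conventions fixed, the equality of the two expressions is manifest and no further normalization or twisting of the identification map is needed.
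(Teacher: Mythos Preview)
Your proposal is correct and takes essentially the same approach as the paper: both arguments show that the identity map on the underlying vector space $\mathcal{H}\otimes\mathcal{A}$ is an algebra isomorphism. The only difference is presentational---the paper verifies equality of the cross-commutators $[L\otimes 1,1\otimes g]_{\ltimes}=[L\otimes 1,1\otimes g]_{\rtimes}$ between the embedded subalgebras and then appeals to the fact that these generate the whole algebra, whereas you compute the full product $(L\otimes f)(M\otimes g)$ directly in both structures; your version is slightly more self-contained since it avoids the implicit appeal to generators and relations.
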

\begin{proof}
Initial algebras are canonically embedded, $\A\ni f\shortrightarrow 1\otimes f$ and $\H\ni L\shortrightarrow L\otimes 1$~as subalgebras in $\H\ltimes\A$ and $\A\ni f\shortrightarrow 1\otimes f$ and $\H\ni L\shortrightarrow L\otimes 1$~as subalgebras in $\H\rtimes\A$ respectively. Since algebras $\mathcal{H\ltimes A}$ and $\mathcal{H}\rtimes \mathcal{A}$ are generated by the same set of cross-commutation relations:
\begin{equation}
\left[ L\otimes 1,1\otimes g\right] _{\ltimes }= \left[ L\otimes 1,1\otimes g\right] _{\rtimes }=L\otimes g-L_{(1)}\otimes
<L_{\left( 2\right) },g_{(1)}>g_{\left( 2\right) }
\end{equation}
this proves that $\mathcal{H\ltimes A}\cong\mathcal{H}\rtimes \mathcal{A}$.
Analogously algebras $\mathcal{A}\rtimes \mathcal{H}$ and $\mathcal{A\ltimes H}$ have initial algebras $\H$ and $\A$ as subalgebras and are generated by the same cross commutation relations:
\begin{equation}
\left[ 1\otimes L,g\otimes 1\right] _{\rtimes }=\left[ 1\otimes L,g\otimes 1\right] _{\ltimes }=<L_{\left( 1\right)
},g_{\left( 2\right) }>g_{\left( 1\right) }\otimes L_{\left( 2\right)
}-g\otimes L
\end{equation}
\end{proof}
Moreover the algebras 
$\mathfrak{H}(\mathcal{A})$ and $\mathfrak{H}(\mathcal{H})$ are isomorphic too \cite{Hdouble}.


Therefore it seems that the separation between symmetry and space algebras becomes irrelevant in the Heisenberg double formulation.
\end{remark}
\begin{example}\label{exampleHDouble}
Heisenberg double construction requires a dual pair of Hopf algebras.
As an illustrative example let us consider
Heisenberg double of Poincare algebra $\mathcal{U}_{\mathfrak{iso}%
(1,3)}$ generated by $(M_{\mu \nu }, P_\nu)$ and $\mathcal{F(}\mathrm{ISO}(1,3))$ (functions on a group
$\mathrm{ISO}(1,3)$) generated by $(\Lambda_\mu^\nu, x^\nu)$.
Regarding the Poincare Lie algebra ${\mathfrak{iso}%
(1,3)}$ one can easily recognise semi-direct product construction between
the Lorentz $\mathfrak{so}(1,3)$ generators $M_{\mu \nu }$, which satisfy (\ref{isog1})
and the algebra of translations $\mathfrak{T}^{n}$ (introduced in example \ref{example4}).
Strictly speaking, on the level of universal enveloping algebras, the Poincare algebra is a smash product: $\mathcal{U}_{\mathfrak{iso}(1,3)}=\mathfrak{T}^{n}\rtimes
\mathcal{U}_{\mathfrak{so}(1,3)}$  which provides (\ref{isog2}) crossed commutation relations.
The algebra $\mathcal{U}_{\mathfrak{iso}(1,3)}$ can be equipped with a
primitive coalgebra structure:
\begin{equation}
\Delta \left( P_{\mu }\right) =\Delta _{0}\left( P_{\mu }\right)
\mbox{\
and\ }\Delta \left( M_{\mu \nu }\right) =\Delta _{0}\left( M_{\mu \nu
}\right)  \label{copPoincare}
\end{equation}%
With
\begin{equation}
\epsilon \left( P_{\mu }\right) =\epsilon \left( M_{\mu \nu }\right) =0%
\mbox{\ and\ }S\left( P_{\mu }\right) =-P_{\mu };S\left( M_{\mu \nu }\right)
=-M_{\mu \nu }  \label{counPoincare}
\end{equation}%
it becomes a (non-quantized) Hopf algebra.

Let us describe its dual counterpart $\mathcal{F(}\mathrm{ISO}(1,3))$  
as an algebra of functions on the Lie group $%
\mathrm{ISO}(1,3)$. One can notice that $\mathcal{F}(\mathrm{ISO}(1,3))$ is
another example of crossed product construction: $\mathcal{F}(\mathrm{ISO}%
(1,3))=\mathcal{F}\left( \mathfrak{X}^{4}\rtimes \mathrm{SO}(1,3)\right) $
between Abelian position algebra\\ $\mathfrak{X}^{4}=\{x^{\mu }: \left[ x^{\mu },x^{\nu }\right] =0\}$ and $\mathfrak{T}^{n}$. The Lorentz sector $$\mathcal{F}(\mathrm{SO}(1,3))=\{\Lambda _{\alpha}^{\beta }:\left[ \Lambda _{\alpha }^{\beta },\Lambda _{\mu }^{\nu }\right]
=0: \Lambda \eta \Lambda^{T} =\eta \}$$ constitutes
the dual pair with $\mathcal{U}_{\mathfrak{so}(1,3)}$ :
\be\label{dualMLam}
\qquad <M_{\alpha \beta },\Lambda _{\nu }^{\mu
}>=\imath \delta _{\alpha }^{\mu }g _{\beta \nu }-\imath \delta _{\beta}^{\mu }g _{\alpha \nu }\ee
The cross commutation relations 
becomes this time trivial: $\left[ x^{\mu },\Lambda _{\alpha }^{\beta }\right] =0$.
The (matrix) coalgebra structure 
\begin{equation}
\Delta \left( \Lambda _{\nu }^{\mu }\right) =\Lambda _{\gamma }^{\mu
}\otimes \Lambda _{\nu }^{\gamma };\mbox{\ \ \ }\Delta \left( x^{\mu
}\right) =\Lambda _{\gamma }^{\mu }\otimes x^{\gamma }+x^{\mu }\otimes 1
\label{copF}
\end{equation}%
together with counit: $\epsilon \left( \Lambda _{\gamma }^{\mu }\right) =\delta
_{\gamma }^{\mu };\epsilon \left( x^{\mu }\right) =0$ and antipode:\\ $%
S\left( \Lambda _{\gamma }^{\mu }\right) =\left( \Lambda ^{-1}\right)
_{\gamma }^{\mu }$; $S\left( x^{\gamma }\right) =-\left( \Lambda ^{-1}\right)
_{\mu }^{\gamma }x^{\mu }$ makes it a Hopf algebra.

The natural pairing (duality)\footnote{For commuting generators pairing extends naturally to the whole algebra.} induced by $<P_{\mu },x^{\nu}>=\imath \delta _{\mu }^{\nu }$ and (\ref{dualMLam})
provides the final necessary ingredient  in order to perform Heisenberg double construction 
\begin{equation*}
\mathfrak{H}(\mathcal{U}_{\mathfrak{iso}(1,3)})=\mathcal{U}%
_{\mathfrak{iso}(1,3)}\rtimes \mathcal{F(}\mathrm{ISO}(1,3)
\end{equation*}%
The left regular action of  $\mathcal{H=F(}\mathrm{ISO}(1,3)$
on $\mathcal{A}=\mathcal{H}^{\ast }=\mathcal{U}_{%
\mathfrak{iso}(1,3)}$ is determined by the duality 
and allows to obtain the usual Weyl algebra as
a subalgebra. The corresponding  smash product $\rtimes $ gives rise
to crossed commutation relations:
\begin{eqnarray}
\lbrack M_{\mu \nu },\Lambda _{\beta }^{\alpha }] &=&-<M_{\mu \nu },\Lambda
_{\beta }^{\delta }>\Lambda _{\delta }^{\alpha };\qquad \left[ P_{\mu
},x^{\nu }\right] =-\imath \delta _{\mu }^{\nu }; \\
\left[ P_{\nu },\Lambda _{\alpha }^{\beta }\right]  &=&0;\qquad \ [M_{\mu
\nu },x^{\alpha }]=-<M_{\mu \nu },\Lambda _{\beta }^{\alpha }>x^{\beta }
\end{eqnarray}%
which supplemented by (\ref{isog1},\ref{isog2}) and
\begin{equation}
\left[ P_{\mu },P_{\nu }\right] =0;\qquad \left[ x^{\mu },x^{\nu }\right]
=0;\qquad \left[ \Lambda _{\alpha }^{\beta },\Lambda _{\mu }^{\nu }\right]
=0;\qquad \left[ \Lambda _{\alpha }^{\beta },x^{\nu }\right] =0
\end{equation}%
describe the Heisenberg double algebra $\mathfrak{H}(\mathcal{U}_{\mathfrak{iso}(1,3)})$.

One can also consider smash product algebra : $\mathfrak{X}%
^{4}\rtimes \mathcal{U}_{\mathfrak{iso}(1,3)}$ determined by the 
classical action (\ref{action1}), provided that (\ref{isog3}). 
It turns out to be  a subalgebra of the  
Heisenberg double $\mathfrak{H}(\mathcal{U}_{\mathfrak{iso}(1,3)})$  
since it generates the same cross-commutation relations between $\left( P_{\mu }, 
 M_{\mu \nu }\right)$ and  $x^{\mu} $:
\begin{equation}
\left[ P_{\mu },x^{\nu }\right] =-\imath \delta _{\mu }^{\nu };\qquad [M_{\mu \nu },x^{\rho }]=\imath
\delta _{\beta }^{\rho }x_{\alpha }-\imath \delta _{\alpha }^{\rho }x_{\beta
}
\end{equation}
Particularly the Weyl algebra $(P_\mu, x^\nu)$ sectors (phase space) in both constructions are the same.\\
\textbf{Remark 1:} Interestingly, considering the equivalent Heisenberg
double $\mathfrak{H}(\mathcal{F(}\mathrm{ISO}(1,3))$, with exchanged roles between $%
\mathcal{H=U}_{\mathrm{iso}(1,3)}$ and $\mathcal{A}=\mathcal{F(}\mathrm{%
ISO}(1,3))$ we obtain slightly different cross commutation relations:
\begin{equation}
\lbrack M_{\mu \nu },\Lambda _{\beta }^{\alpha }]=<M_{\mu \nu },\Lambda
_{\beta }^{\delta }>\Lambda _{\delta }^{\alpha };\qquad \left[ P_{\mu
},x^{\nu }\right] =\imath \Lambda _{\mu }^{\nu };\qquad \lbrack M_{\mu \nu
},x^{\lambda }]=0
\end{equation}%
One can see that this Heisenberg double is isomorphic to the one introduced
before by changing  generators: $x^{\mu }\shortrightarrow \Lambda _{\alpha }^{\mu }x^{\alpha }.$
\end{example}
 In the case of $\kappa-$Poincar\'{e} algebra Heisenberg double was
originally studied in \cite{LukNow}, further developments and more systematic investigations on this subject can be found in \cite{ABAP5}.
\section{Twist-deformation}
Having the Hopf-module algebra structure one can use the deformation theory to construct new (deformed) objects which still possess the same structure.
Assume a Hopf module algebra $\A$ over $\mathcal{H}$.
These can be deformed, by a suitable twisting element $\mathcal{F}$,
to achieve the deformed Hopf module algebra $(\mathcal{A}_{\mathcal{F}},\mathcal{H}^{\mathcal{F}})$, where the algebra $\mathcal{A}_{\mathcal{F}}$ is equipped with a twisted (deformed) star-product (see~\cite{ABDMSW, Oeckl} and references therein)
\begin{gather}  \label{tsp}
x\star y=m \circ\mathcal{F}^{-1}\triangleright (x\otimes y)=(\bar{\mathrm{f}}^\alpha\triangleright x)\cdot(\bar{%
\mathrm{f}}_\alpha\triangleright y)
\end{gather}
while the classical action $\triangleright$ remains unchanged.
Hereafter the twisting element $\mathcal{F}$ is symbolically written in the
following form:
\begin{gather*}
  \mathcal{F}=\mathrm{f}^\alpha\otimes\mathrm{f}_\alpha\in \H\otimes\H \qquad \mbox{and}
\qquad \mathcal{F}^{-1}=\bar{\mathrm{f}}^\alpha\otimes\bar{\mathrm{f}}_\alpha\in \H\otimes\H
\end{gather*}
and belongs to the Hopf algebra $\H$. The corresponding smash product $\mathcal{A}_{\mathcal{F}}\rtimes\mathcal{H}^{\mathcal{F}}$
has deformed cross-commutation relations (\ref{smash}) determined by deformed coproduct $\Delta^{\mathcal{F}}$.
Before proceeding further let us remind in more detail that the quantized Hopf algebra $\H^{\mathcal{F}}$ has non-deformed algebraic sector (commutators),
while coproducts and antipodes are subject of the deformation:
\begin{gather*}\label{twcop}
\Delta^{\mathcal{F}} (\cdot)=\mathcal{F} \Delta (\cdot)\mathcal{F}^{-1}
,\qquad S^{\mathcal{F}}(\cdot)=u S(\cdot) u^{-1},
\end{gather*}
where $u=\mathrm{f}^\alpha S(\mathrm{f}_\alpha)$.
The twisting two-tensor $\mathcal{F}$ is an invertible element in $\H\otimes \H$ which fulf\/ills the 2-cocycle and normalization
conditions \cite{Drinfeld,Chiari}:
\begin{gather*}  
 \mathcal{F}_{12}(\Delta\otimes {\rm id})\mathcal{F}=\mathcal{F}
_{23}({\rm id}\otimes\Delta)\mathcal{F} ,\qquad   (\epsilon\otimes {\rm id})\mathcal{F}
=1=({\rm id}\otimes\epsilon)\mathcal{F},
\end{gather*}
which guarantee co-associativity of the deformed coproduct $\Delta^{\mathcal{F}}$ and associativity of the corresponding twisted star-product~(\ref{tsp}). Moreover, it implies simultaneously that deformed and undeformed smash product are isomorphic:

\begin{proposition}\label{prop1} For any Drinfeld twist $\mathcal{F}$  the twisted smash product
algebra  $\mathcal{A}_{\mathcal{F}}\rtimes\mathcal{H}^{\mathcal{F}}$ is  isomorphic to the initial (undeformed) one $\mathcal{A}\rtimes\mathcal{H}$. In other words the algebra $\mathcal{A}_{\mathcal{F}}\rtimes\mathcal{H}^{\mathcal{F}}$ is twist independent and can be realized by a change of generators in the algebra $\mathcal{A}\rtimes\mathcal{H}$. (cf. \cite{Hdouble2})
\end{proposition}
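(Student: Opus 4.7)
The plan is to exhibit an explicit algebra isomorphism $\Phi \colon \mathcal{A}_{\mathcal{F}} \rtimes \mathcal{H}^{\mathcal{F}} \to \mathcal{A} \rtimes \mathcal{H}$ which acts as the identity on the (undeformed) Hopf subalgebra $\mathcal{H}^{\mathcal{F}} = \mathcal{H}$ and as a carefully chosen ``change of generators'' on the module part. Since only the coproduct is twisted on the $\mathcal{H}$-side and only the product is twisted on the $\mathcal{A}$-side, one expects that a single invertible element built out of $\mathcal{F}$ absorbs both deformations simultaneously.

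\textbf{Construction.} For $a \in \mathcal{A}$ I set
\[
\Phi(a) \;:=\; (\bar{\mathrm{f}}^{\alpha} \triangleright a)\,\bar{\mathrm{f}}_{\alpha} \;\in\; \mathcal{A}\rtimes\mathcal{H},\qquad \Phi(L):=L\text{ for }L\in\mathcal{H},
\]
and extend by linearity and multiplicativity to the whole smash product. Three verifications are required: (i) $\Phi(a\star b) = \Phi(a)\,\Phi(b)$, i.e.\ the star-product on $\mathcal{A}_{\mathcal{F}}$ turns into the ordinary smash-product composition after the substitution; (ii) the twisted cross-commutation relations, encoded by $\Delta^{\mathcal{F}}(L) = \mathcal{F}\Delta(L)\mathcal{F}^{-1}$ through formula (\ref{smash}), are reproduced on the image side by the classical Leibniz rule (\ref{genLeibniz}); (iii) $\Phi$ is bijective.

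\textbf{Main computations and obstacle.} The heart of the argument is (i). Expanding
\[
\Phi(a)\,\Phi(b) = (\bar{\mathrm{f}}^{\alpha}\triangleright a)\,\bar{\mathrm{f}}_{\alpha}\,(\bar{\mathrm{f}}^{\beta}\triangleright b)\,\bar{\mathrm{f}}_{\beta}
\]
and pushing $\bar{\mathrm{f}}_{\alpha}$ through the $\mathcal{A}$-factor via the covariance condition, one obtains an expression in which the combined $\mathcal{H}\otimes\mathcal{H}\otimes\mathcal{H}$-element $(\Delta\otimes\mathrm{id})\mathcal{F}^{-1}\cdot(\mathrm{id}\otimes\mathcal{F}^{-1})$ acts on $a\otimes b\otimes 1$. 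The 2-cocycle condition rewrites this as $(\mathrm{id}\otimes\Delta)\mathcal{F}^{-1}\cdot(\mathcal{F}^{-1}\otimes\mathrm{id})$, which after collapsing the first two factors through the definition of $a\star b$ reproduces exactly $\Phi(a\star b)$. Statement (ii) reduces to the analogous identity $L\,\mathcal{F}^{-1} = \mathcal{F}^{-1}\,\mathcal{F}\Delta(L)\mathcal{F}^{-1}$. The main obstacle is precisely this cocycle bookkeeping: one must keep track of two distinct copies of $\mathcal{F}^{-1}$ together with the Sweedler indices of $\Delta(L)$, and recognize the appropriate rebracketing.

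\textbf{Invertibility.} The inverse is constructed symmetrically by $\Phi^{-1}(a) = (\mathrm{f}^{\alpha}\triangleright a)\,\mathrm{f}_{\alpha}$ on generators of $\mathcal{A}$, identity on $\mathcal{H}$. Composing, the normalization $(\epsilon\otimes\mathrm{id})\mathcal{F} = 1 = (\mathrm{id}\otimes\epsilon)\mathcal{F}$ together with $\mathcal{F}\mathcal{F}^{-1}=1$ immediately yield $\Phi\circ\Phi^{-1} = \mathrm{id} = \Phi^{-1}\circ\Phi$. This proves that $\mathcal{A}_{\mathcal{F}}\rtimes\mathcal{H}^{\mathcal{F}}$ is realized as the same abstract algebra $\mathcal{A}\rtimes\mathcal{H}$ presented through the new generators $\widetilde{a} := \Phi(a)$, which is the ``twist-independence'' asserted.
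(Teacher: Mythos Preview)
Your proposal is correct and follows essentially the same route as the paper's own proof: the paper also defines the isomorphism by $x\mapsto \hat x=(\bar{\mathrm f}^{\alpha}\triangleright x)\,\bar{\mathrm f}_{\alpha}$ on $\mathcal A$ and the identity on $\mathcal H$, invokes the cocycle identity for $\mathcal F^{-1}$ to obtain $\hat x\cdot\hat y=\widehat{x\star y}$, verifies $\varphi(L\star x)=L\cdot\hat x$ from the twisted coproduct, and records the inverse via $\mathcal F$. Your write-up is in fact slightly more explicit about the cocycle bookkeeping than the paper's, but the construction, the key computation, and the invertibility argument coincide.
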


Notice that subalgebras $\mathcal{A}$ and $\mathcal{A}_{\mathcal{F}}$ are not isomorphic. In the case of commutative $\mathcal{A}$ we are used to call noncommutative algebra $\mathcal{A}_{\mathcal{F}}$ quantization of $\mathcal{A}$. Similarly $\mathcal{H}$ and $\mathcal{H}^{\mathcal{F}}$ are not isomorphic as Hopf algebras.

\begin{proof}\label{proof1}
First of all we notice that the inverse twist $\mathcal{F}^{-1}=\bar{\mathrm{f}}^\alpha\otimes\bar{\mathrm{f}}_\alpha$ satisf\/ies analogical
cocycle condition $(\Delta\otimes {\rm id})(\mathcal{F}^{-1})\mathcal{F}^{-1}_{12}=({\rm id}\otimes\Delta)(\mathcal{F}^{-1})\mathcal{F}_{23}^{-1}$.
It reads as
\[
\bar{\mathrm{f}}^\alpha_{(1)}\bar{\mathrm{f}}^\beta\otimes \bar{\mathrm{f}}^\alpha_{(2)}\bar{\mathrm{f}}_\beta\otimes\bar{\mathrm{f}}_\alpha=
\bar{\mathrm{f}}^\alpha\otimes\bar{\mathrm{f}}_{\alpha (1)} \bar{\mathrm{f}}^\beta\otimes\bar{\mathrm{f}}_{\alpha (2)}\bar{\mathrm{f}}_\beta.
\]
For any element $x\in \mathcal{A}$ one can associate the corresponding element
$\hat x=(\mathrm{\bar{f}}^\alpha\triangleright x)\cdot\bar{\mathrm{f}}_\alpha\in \mathcal{A}\rtimes\mathcal{H}$. Thus~(\ref{tsp}) together with the cocycle condition gives $\hat x\cdot\hat y=(\bar{\mathrm{f}}^\alpha\triangleright (x\star y))\cdot\bar{\mathrm{f}}_\alpha$.
Due to invertibility of twist we can express the original elements $x$ as a functions of the deformed one:
$x=(\mathrm{f}^\alpha\triangleright \hat x)\star\mathrm{f}_\alpha $.
It means that subalgebra generated by the elements $\hat x$ is isomorphic to $\mathcal{A}_{\mathcal{F}}$.
Notice that $\mathcal{A}=\mathcal{A}_{\mathcal{F}}$, $\mathcal{H}=\mathcal{H}^{\mathcal{F}}$
and $\mathcal{A}\rtimes\mathcal{H}=\mathcal{A}_{\mathcal{F}}\rtimes\mathcal{H}^{\mathcal{F}}$
as linear spaces.

The requested isomorphism $\varphi: \mathcal{A}_{\mathcal{F}}\rtimes\mathcal{H}^{\mathcal{F}}\rightarrow\mathcal{A}\rtimes\mathcal{H}$
can be now def\/ined by an invertible mapping
\begin{gather}\label{change}
\mathcal{A}_{\mathcal{F}}\ni x\rightarrow \varphi(x)=\hat x \in\mathcal{A}\rtimes\mathcal{H}\qquad
\hbox{and}\qquad \mathcal{H}^{\mathcal{F}}\ni L\rightarrow \varphi(L)=L\in \mathcal{A}\rtimes\mathcal{H}.
\end{gather}
Utilizing again the cocycle condition one checks that $\varphi(L\star x)=L\cdot\hat x=(L_{(1)}\mathrm{\bar{f}}^\alpha\triangleright x)\cdot L_{(2)}\bar{\mathrm{f}}_\alpha $.

Let $(x^\mu)$ be a set of generators for $\mathcal{A}$ and  $(L^k)$ be a set of generators for $\mathcal{H}$. Then the isomorphism (\ref{change}) can be described as a change of generators (``basis''):
$(x^\mu, L^k)\rightarrow (\hat x^\mu, L^k)$ in $\mathcal{A}\rtimes\mathcal{H}$.
\end{proof}
Dif\/ferent scheme for "unbraiding of braided tensor product" has been presented in~\cite{Fiore1}.

\begin{remark}\label{trivialTwist}
Consider internal automorphism of the algebra  $\mathcal{H}$ given by the similarity transformation $L\rightarrow WLW^{-1}$, where $W$ is some invertible element in $\mathcal{H}$.
This automorphism induces the corresponding isomorphism of Hopf algebras, which can be equivalently described as coalgebra isomorphism $(\mathcal{H}, \Delta)\rightarrow (\mathcal{H}, \Delta^{\mathcal{F}_W})$, where $\Delta^{\mathcal{F}_W}$ denotes twisted coproduct. Here $\mathcal{F}_W=W^{-1}\otimes W^{-1}\Delta(W)$ denotes the so called trivial (coboundary) twist. Of course, the twisted module algebra $\mathcal{A}_{\mathcal{F}_W}$ becomes isomorphic to the undeformed $\mathcal{A}$. Substituting\\
$W=\exp u$ one gets explicit form of the twisting element $$\mathcal{F}_W=\exp{(-u\otimes 1-1\otimes u)}\exp{\Delta(u)}=\exp{(-\Delta_0(u))}\exp{\Delta(u)}$$
\end{remark}
As it is well-known from the general framework \cite{Drinfeld}, a twisted deformation of Lie algebra $\mathfrak{g}$ requires a topological extension of the corresponding enveloping algebra $\U_{\mathfrak{g}}$ into an algebra of formal power series $%
\U_{\mathfrak{g}}[[h]]$ in the formal parameter $h$ (see Appendix A and, e.g.,~\cite{Klimyk,Kassel,Chiari,Bonneau} for deeper exposition)\footnote{This is mainly due to the fact that twisting element has to be invertible.}.
This provides the so-called $h$-adic topology. There is a correspondence between twisting element, which can be now  rewritten as a power series expansion
\[
\mathcal{F}=1\otimes 1+
\sum_{m=1}^\infty h^m\, \mathrm{f}^{(m)}\otimes\mathrm{f}_{(m)}\qquad\hbox{and}\qquad
\mathcal{F}^{-1}=1\otimes 1+
\sum_{m=1}^\infty h^m\, \bar{\mathrm{f}}^{(m)}\otimes\bar{\mathrm{f}}_{(m)},
\]
$\mathrm{f}^{(m)}, \mathrm{f}_{(m)}, \bar{\mathrm{f}}^{(m)}, \bar{\mathrm{f}}_{(m)}\in \U_{\mathfrak{g}}$, classical $r$-matrix $\mathfrak{r}\in \mathfrak{g}\wedge\mathfrak{g}$ satisfying  classical Yang--Baxter equation and universal (quantum) $r$-matrix $\mathcal{R}$:
\begin{gather*} 
\mathcal{R}=\mathcal{F}^{21}\mathcal{F}^{-1}=1+h \mathfrak{r}
\mod \big(h^2\big)
\end{gather*}
satisfying quantum Yang--Baxter equation. Moreover, classical $r$-matrices classify non-equivalent deformations.
Accordingly, the Hopf module algebra $\mathcal{A}$ has to be extended by $h$-adic topology to $\mathcal{A}[[h]]$ as well\footnote{The operator algebra setting for quantum groups and quantum spaces is admittedly much more heavy. Unfortunately, the
passage from quantized Lie algebra level to the function algebra level is not very straightforward and sometimes even not possible, see e.g.~\cite{Op}.}.

\begin{remark}\label{Twist}
In general, there is no constructive way to obtain twist for a given classical $r$-matrix. Few examples are known in the literature, e.g., Abelian~\cite{Reshetikhin}, Jordanian~\cite{Ogievetsky}  extended Jordanian twists \cite{Kulish} (see also \cite{Bonneau,VNT1}) as well as some of their combinations~\cite{Varna}. Twisted deformations of relativistic symmetries have been studied for a long time, see, e.g., \cite{BLT2} and references therein.
Almost complete classif\/ication of classical $r$-matrices for Lorentz an Poincar\'{e} group
has been given in~\cite{Zakrzewski} (see also~\cite{Lya}).
\end{remark}

\begin{example}\label{example6}
Let us consider the twist deformation of the algebra from Example~\ref{example5}. $h$-adic extension $\U_{\mathfrak{igl}(n)}\shortrightarrow\U_{\mathfrak{igl}(n)}[[h]]$ forces us to extend polynomial algebra:
$\mathfrak{X}^{n}\shortrightarrow\mathfrak{X}^{n}[[h]]$, which remains to be (undeformed) module algebra under the $\mathbb{C}[[h]]$-extended Hopf action $\triangleright$. Therefore their smash product contains $h$-adic extension of the Weyl algebra: $\mathfrak{W}^{n}[[h]]=\mathfrak{X}^{n}[[h]]\rtimes\mathfrak{T}^{n}[[h]]\subset \mathfrak{X}^{n}[[h]]\rtimes\left(\mathfrak{T}^{n}[[h]]\rtimes \U_{\mathfrak{gl}(n)}[[h]]\right)$.
After  having done $\mathfrak{igl}(n)$-twist $\mathcal{F}$ we can now deform simultaneously  both structures:  $U_{\mathfrak{igl}(n)}[[h]]\mapsto U_{\mathfrak{igl}(n)}[[h]]^{\mathcal{F}}$ and $\mathfrak{X}^{n}[[h]]\mapsto(\mathfrak{X}^{n}[[h]])_{\mathcal{F}}$ keeping the Hopf action $\triangleright$ unchanged. Deformed algebra $\mathfrak{X}^n[[h]]_\mathcal{F}$ has deformed star multipication $\star$ and can be represented by deformed $\star$-commutation relations
\begin{gather}\label{tcr}
  [x^{\mu },x^{\nu }]_{\star }\equiv x^{\mu }\star x^{\nu }-x^{\nu
}\star x^{\mu }=\imath\,h \theta^{\mu\nu}(x)\equiv
\imath h\big(\theta^{\mu\nu}+\theta^{\mu\nu}_\lambda x^\lambda+\theta^{\mu\nu}_{\lambda\rho} x^\lambda x^\rho+\cdots\big)
\end{gather}
replacing the undeformed (commutative) one $[x^{\mu },x^{\nu }]=0$,
where the coordinate functions $(x^\mu)$ play a role of generators for the
corresponding algebras:  deformed and undeformed one.
We will see on the examples below that many dif\/ferent twisted star products may lead to the same commutation relations~(\ref{tcr}). Particularly, one can def\/ine deformed Weyl algebra
$\mathfrak{W}^{n}[[h]]^\mathcal{F}=\mathfrak{X}^{n}[[h]]_\mathcal{F}
\rtimes\mathfrak{T}^{n}[[h]]^\mathcal{F}$, where $\mathfrak{T}^{n}[[h]]^\mathcal{F}$
denotes the corresponding Hopf subalgebra of deformed momenta in $\left(\mathfrak{T}^{n}[[h]]\rtimes \U_{\mathfrak{gl}(n)}[[h]]\right)^\mathcal{F}$.
\end{example}

\begin{remark}\label{Poisson}
The deformed algebra $\mathfrak{X}^{n}[[h]]_\mathcal{F}$ provides a deformation quantization of $\mathbb{R}^n$ equip\-ped with the Poisson structure (brackets)~\cite{Kon,BFFLS}
\begin{gather}\label{Pstr}
\{x^{\mu },x^{\nu }\}=
\theta^{\mu\nu}(x)\equiv
\theta^{\mu\nu}+\theta^{\mu\nu}_\lambda x^\lambda+\theta^{\mu\nu}_{\lambda\rho}x^\lambda x^\rho+\cdots,
\end{gather}
represented by Poisson bivector $\theta=\theta^{\mu\nu}(x)\partial_\mu\wedge\partial_\nu$. It is assumed that $\theta^{\mu\nu}(x)$ are polynomial functions, i.e.\ the sum in (\ref{Pstr}) is f\/inite where $\theta^{\mu\nu},\theta^{\mu\nu}_\lambda,\theta^{\mu\nu}_{\lambda\rho},\ldots$ are real numbers.
\end{remark}
\begin{proposition}\label{prop2}
Proposition {\rm \ref{prop1}} implies that $ \mathfrak{X}^{n}[[h]]\rtimes \U_{\mathfrak{igl}(n)}[[h]]$ is $\mathbb{C}[[h]]$ isomorphic to $ \mathfrak{X}^{n}[[h]]_\mathcal{F}\rtimes( \U_{\mathfrak{igl}(n)}[[h]])^\mathcal{F}$. Moreover, this isomorphism is congruent to the identity map mo\-dulo~$h$.
\end{proposition}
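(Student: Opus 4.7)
The plan is to apply Proposition \ref{prop1} directly, with $\mathcal{A}=\mathfrak{X}^{n}[[h]]$ and $\mathcal{H}=\U_{\mathfrak{igl}(n)}[[h]]$, and then to inspect the explicit form of the isomorphism constructed in the proof of Proposition \ref{prop1} in order to verify the ``modulo $h$'' statement. The covariance condition needed to form the smash products is part of the setup of Example \ref{example5} (extended $h$-adically as in Example \ref{example6}), and the action $\triangleright$ is left unchanged by the twisting, so the hypotheses of Proposition \ref{prop1} are automatically in force.

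First, I would invoke Proposition \ref{prop1} to obtain an algebra isomorphism
\[
\varphi:\ \mathfrak{X}^{n}[[h]]_{\mathcal{F}}\rtimes\bigl(\U_{\mathfrak{igl}(n)}[[h]]\bigr)^{\mathcal{F}}\ \longrightarrow\ \mathfrak{X}^{n}[[h]]\rtimes\U_{\mathfrak{igl}(n)}[[h]].
\]
Since $\mathcal{F}\in\U_{\mathfrak{igl}(n)}[[h]]\otimes\U_{\mathfrak{igl}(n)}[[h]]$ is a power series in $h$ with scalar coefficients in $\mathbb{C}[[h]]$, the map $\varphi$ built from $\mathcal{F}^{-1}=\bar{\mathrm f}^{\alpha}\otimes\bar{\mathrm f}_{\alpha}$ is automatically $\mathbb{C}[[h]]$-linear; hence it is an isomorphism in the category of topological $\mathbb{C}[[h]]$-algebras, which is what was claimed.

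Next, to verify the congruence modulo $h$, I would use the explicit description of $\varphi$ from the proof of Proposition \ref{prop1}: on generators $(x^{\mu})$ of $\mathfrak{X}^{n}[[h]]_{\mathcal{F}}$ and $(L^{k})$ of $(\U_{\mathfrak{igl}(n)}[[h]])^{\mathcal{F}}$ it is given by
\[
\varphi(x^{\mu})\;=\;\hat{x}^{\mu}\;=\;(\bar{\mathrm f}^{\alpha}\triangleright x^{\mu})\cdot\bar{\mathrm f}_{\alpha},\qquad \varphi(L^{k})\;=\;L^{k}.
\]
Using the $h$-adic expansion
\[
\mathcal{F}^{-1}=1\otimes 1+\sum_{m\ge 1} h^{m}\,\bar{\mathrm f}^{(m)}\otimes\bar{\mathrm f}_{(m)},
\]
one has $\bar{\mathrm f}^{\alpha}\triangleright x^{\mu}=x^{\mu}+O(h)$ and $\bar{\mathrm f}_{\alpha}=1+O(h)$, so that $\hat{x}^{\mu}=x^{\mu}\bmod h$. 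Together with $\varphi(L^{k})=L^{k}$, this shows $\varphi\equiv\mathrm{id}\bmod h$, and by multiplicativity the same holds on the whole smash product.

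The construction really has no obstacle beyond bookkeeping: Proposition \ref{prop1} already does the structural work, and the mod-$h$ statement is just the observation that a Drinfeld twist is a deformation of $1\otimes 1$. The only point one should be slightly careful about is that the isomorphism of Proposition \ref{prop1} is defined on generators and extended by the universal property of the smash product; this extension preserves the mod-$h$ congruence because the multiplication in $\mathfrak{X}^{n}[[h]]_{\mathcal{F}}$ equals the classical multiplication modulo $h$ (as $\star=\cdot\bmod h$), so $\varphi$ is compatible with both products up to order $h$.
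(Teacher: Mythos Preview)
Your proof is correct and follows exactly the approach the paper intends: Proposition~\ref{prop2} is stated without a separate proof precisely because it is meant as an immediate specialization of Proposition~\ref{prop1} to $\mathcal{A}=\mathfrak{X}^{n}[[h]]$ and $\mathcal{H}=\U_{\mathfrak{igl}(n)}[[h]]$, with the mod-$h$ congruence read off from the expansion $\mathcal{F}^{-1}=1\otimes 1+O(h)$ already recorded in the paper (cf.\ equation~(\ref{Heis3})). You have simply written out the details the paper leaves implicit.
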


Replacing elements $\overline{\mathrm{f}}^{(m)}$,
$\overline{\mathrm{f}}_{(m)}\in\U_{\mathfrak{igl}(n)}$
in the formulae
\begin{gather}\label{Heis3}
\hat{x}^\mu=x^\mu+\sum_{m=1}^\infty h^m  \big(\overline{\mathrm{f}}^{(m)}
\triangleright x^\mu\big)\cdot\overline{\mathrm{f}}_{(m)}
\end{gather}
from Proposition~\ref{prop1}
by using Heisenberg realizations (\ref{Heisenberg1a}) one gets
\begin{proposition}\label{prop3}
All $\mathfrak{igl}(n)$-twist deformed Weyl algebras $\mathfrak{W}^{n}[[h]]^\mathcal{F}$
are $\mathbb{C}[[h]]$-isomorphic to undeformed $h$-adic extended Weyl algebra $\mathfrak{W}^{n}[[h]]$.
In this sense we can say that $\mathfrak{W}^{n}[[h]]^\mathcal{F}$
is a~pseudo-deformation of  $\mathfrak{W}^{n}[[h]]$ since the latter one
can be obtained by (nonlinear and invertible) change of generators from the first one\footnote{Cf.~\cite{Fiore2} for dif\/ferent approach to deformation of Clif\/ford and Weyl algebras.}.
\end{proposition}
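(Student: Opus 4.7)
The plan is to derive Proposition~\ref{prop3} as a direct corollary of Proposition~\ref{prop2} by pushing the explicit change of generators through the Heisenberg realization~(\ref{Heisenberg1a}). By Proposition~\ref{prop2}, we already have a $\mathbb{C}[[h]]$-algebra isomorphism
\[
\varphi:\ \mathfrak{X}^{n}[[h]]_\mathcal{F}\rtimes \bigl(\U_{\mathfrak{igl}(n)}[[h]]\bigr)^\mathcal{F}\ \longrightarrow\ \mathfrak{X}^{n}[[h]]\rtimes \U_{\mathfrak{igl}(n)}[[h]],
\]
realized by the change of generators $x^\mu\mapsto \hat{x}^\mu$ and $L\mapsto L$, where $\hat{x}^\mu$ is given by (\ref{Heis3}). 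Since the twisting element and its inverse lie in $\U_{\mathfrak{igl}(n)}[[h]]\otimes\U_{\mathfrak{igl}(n)}[[h]]$, the coefficients $\overline{\mathrm{f}}^{(m)},\overline{\mathrm{f}}_{(m)}$ appearing in $\hat{x}^\mu$ are concrete polynomial expressions in the generators $L^\mu_\nu$ and $P_\mu$ of $\mathfrak{igl}(n)$.

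The key step is to post-compose $\varphi$ with the Heisenberg epimorphism $\pi:\mathfrak{X}^{n}\rtimes(\mathfrak{T}^{n}\rtimes\U_{\mathfrak{gl}(n)})\twoheadrightarrow\mathfrak{W}^{n}$ of~(\ref{Heisenberg1a}), extended $\mathbb{C}[[h]]$-linearly. Substituting $L^\mu_\nu\mapsto x^\mu P_\nu$ in every $\overline{\mathrm{f}}^{(m)}$ transforms (\ref{Heis3}) into an element $\pi(\hat{x}^\mu)\in\mathfrak{W}^{n}[[h]]$, a polynomial in $(x^\mu,P_\mu)$ with $h$-dependent coefficients, while $P_\mu$ is left unchanged. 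Since $\mathfrak{W}^{n}[[h]]^{\mathcal{F}}$ is defined as the subalgebra generated by the coordinates $x^\mu$ and the deformed momenta in $\bigl(\mathfrak{T}^{n}[[h]]\rtimes\U_{\mathfrak{gl}(n)}[[h]]\bigr)^\mathcal{F}$, and the commutativity of $\mathfrak{T}^n$ means the momenta enter only through coproducts (not multiplicatively modified), the assignment
\[
\Phi:\ \mathfrak{W}^{n}[[h]]^{\mathcal{F}}\ \longrightarrow\ \mathfrak{W}^{n}[[h]], \qquad x^\mu\mapsto\pi(\hat{x}^\mu),\quad P_\mu\mapsto P_\mu,
\]
is well-defined on generators. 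It is an algebra homomorphism because $\varphi$ is an algebra isomorphism and $\pi$ is an algebra map, so every twisted commutation relation valid in $\mathfrak{W}^{n}[[h]]^{\mathcal{F}}$ is mapped to a genuine relation in $\mathfrak{W}^{n}[[h]]$.

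To conclude that $\Phi$ is a $\mathbb{C}[[h]]$-isomorphism, one uses the $h$-adic argument already invoked for Proposition~\ref{prop2}: since $\mathcal{F}=1\otimes 1 \mod h$, one has $\hat{x}^\mu=x^\mu \mod h$, hence $\Phi$ reduces to the identity map modulo $h$, and any such $\mathbb{C}[[h]]$-linear map between $h$-adically complete torsion-free modules is automatically invertible (the inverse is constructed order by order in $h$). The anticipated main obstacle is precisely at this junction: one has to make sure that the Heisenberg projection $\pi$ does not collapse the deformed relations of $\mathfrak{W}^{n}[[h]]^{\mathcal{F}}$, i.e.\ that no twist-generated element lies in $\ker\pi$ in a way that would degenerate $\Phi$. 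This is controlled by the fact that $\pi$ restricts to the identity on the Weyl subalgebra already at the classical ($h=0$) level, so the kernel contains no elements congruent to Weyl-algebra generators modulo~$h$; combined with $h$-adic completeness this guarantees injectivity and surjectivity of $\Phi$, proving that $\mathfrak{W}^{n}[[h]]^\mathcal{F}$ is a pseudo-deformation of $\mathfrak{W}^{n}[[h]]$ via the nonlinear, invertible change of generators $(x^\mu,P_\mu)\mapsto(\pi(\hat{x}^\mu),P_\mu)$.
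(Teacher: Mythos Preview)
Your argument is correct and follows essentially the same route as the paper: the paper's proof consists precisely of the instruction preceding the proposition, namely to substitute the Heisenberg realization~(\ref{Heisenberg1a}) (i.e.\ $L^\nu_\mu\to x^\nu P_\mu$) into the change-of-generators formula~(\ref{Heis3}) coming from Proposition~\ref{prop1}/\ref{prop2}. You have spelled this out more carefully---framing it as post-composing $\varphi$ with the Heisenberg epimorphism~$\pi$ and invoking $h$-adic completeness for invertibility---which is a legitimate and useful elaboration of what the paper leaves implicit.
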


\begin{remark}\label{h-repr}
Replacing again elements $\overline{\mathrm{f}}^{(m)}, \overline{\mathrm{f}}_{(m)}\in\U_{\mathfrak{igl}(n)}$ in (\ref{Heis3}) by their Heisenberg
representation: $P_\mu=-\imath\partial_\mu$ (cf.\ Example \ref{example5}) one obtains Heisenberg representation of $\mathfrak{W}^{n}[[h]]^\mathcal{F}$ and entire algebra $ \mathfrak{X}^{n}[[h]]_\mathcal{F}\rtimes( \U_{\mathfrak{igl}(n)}[[h]])^\mathcal{F}$ in the vector space $\mathfrak{X}^{n}[[h]]$.
Moreover, one can extend  Hilbert space representation from Remark \ref{Hspace} to the representation acting in ${\cal L}^2(\mathbb{R}^n, dx^n)[[h]]$. This is due to general theory of representations for $h$-adic quantum groups, see, e.g., \cite{Chiari}.
\end{remark}

\begin{remark}
Moreover, the twisted star product enables us to introduce two operator realizations of the algebra $\mathfrak{X}^n[[h]]_\mathcal{F}$ (or in general $\A_\F$) in terms of (formal) differential operators (Heisenberg differential representation) of $\U_{\mathfrak{igl}(n)}[[h]]$ ($\H^\F$). The so-called left-handed and right-handed realizations are
naturally defined by
\begin{equation}  \label{lhr-rhr1}
\hat{x}^{\mu }_L (f)= x^{\mu }\star f \qquad \mbox{and} \qquad \hat x^{\mu
}_R (f)= f\star x^{\mu }
\end{equation}
with $\hat{x}^{\mu }_L $, $\hat{x}^{\mu }_R$ $\in U_{\mathfrak{igl}(n)}[[h]]$ ($\H^\F$)
satisfying the operator commutation relations
\begin{equation}  \label{lhr-rhl2}
[\hat{x}^{\mu }_L, \hat{x}^{\nu }_L ]=\imath h\theta^{\mu\nu}(\hat{x})\qquad %
\mbox{and} \qquad [\hat{x}^{\mu }_R, \hat{x}^{\nu }_R ]=-\imath h\theta^{\mu\nu}(%
\hat{x})
\end{equation}
correspondingly. In other words, the above formulae describe embedding of $\mathfrak{X}^n[[h]]_\mathcal{F}$ ($\A_\F$) into $U_{\mathfrak{igl}(n)}[[h]]$ ($\H^\F$). These operator
realizations are particular cases of the so-called Weyl map (see, e.g., \cite{Weyl} and references therein for more details). They allow us to calculate the commutator:
\begin{equation}  \label{cc}
[x^\mu, f]_\star=\hat{x}^{\mu }_L (f)-\hat{x}^{\mu }_R(f)
\end{equation}
It has been argued in Ref. \cite{JSSW} (see also \cite{BMS}) that any Lie-algebraic star product (when generators satisfy the Lie algebra structure)
\begin{equation}  \label{c1}
\ [x^{\mu },x^{\nu }]_{\star }=\imath h\theta_{\lambda }^{\mu \nu }x^{\lambda }
\end{equation}
can be obtained by twisting an element in the form
\begin{equation}  \label{c2}
\ \mathcal{F}= exp(\frac{\imath}{2}x^\lambda g_\lambda(\imath\frac{\partial}{%
\partial y},\imath\frac{\partial}{\partial z}))\ \mid_{y\shortrightarrow x;\
z\shortrightarrow x}
\end{equation}
The star product
\begin{equation}  \label{c3}
\ f(x)\star g(x)=exp(\frac{\imath}{2}x^\lambda g_\lambda(\imath\frac{\partial%
}{\partial y},\imath\frac{\partial}{\partial z}))f(y)g(z) \mid_{y%
\shortrightarrow x; z\shortrightarrow x }
\end{equation}
would imply
\begin{equation}  \label{c4}
[x^\mu, f]_\star=\imath h\theta_{\lambda }^{\mu \nu }x^{\lambda }\partial_\nu f
\end{equation}
i.e., a vector field action on $f$. The last formula is particularly
important for obtaining a Seiberg-Witten map for noncommutative gauge
theories (see \cite{JSSW,BMS}). Later on it will be shown on explicit
examples that this formula is not satisfied for an arbitrary twist in
the form of (\ref{c2}). However, we shall find an explicit twist for
$\kappa-$deformed Minkowski spacetime which belongs to the class
described by (\ref{c4}).
\end{remark}
\begin{remark}
Before proceeding further, let us comment on some important
differences between the canonical and Lie-algebraic cases. In the
former, related to the Moyal product case
\begin{equation}  \label{rc1}
[x^\mu, x^\nu]_{\star, M}=\imath\theta^{\mu \nu }
\end{equation}
with a constant antisymmetric matrix $\theta^{\mu \nu }$; cf.
(\ref{tcr}), one finds that
\begin{equation}  \label{rc2}
[g, f]_{\star, M}=\partial_\nu\zeta^{\nu }(f,g)
\end{equation}
is a total divergence, e.g.,
\begin{equation}  \label{rc3}
[x^\mu, f]_{\star, M}=\partial_\nu\left(\imath\theta^{\mu \nu}f
\right)
\end{equation}
This further implies the following tracial property of the integral:
\begin{equation}  \label{rc4}
\int d^n\,x\, [g, f]_{\star, M}= 0
\end{equation}
which is rather crucial for  a variational derivation of Yang-Mills
field equations (see \cite{ADMSW}). In contrast to (\ref{rc3}),
eq. (\ref{c4}) rewritten under the form
\begin{equation}  \label{rc5}
[x^\mu, f]_\star=  \partial_\nu\left(\imath\theta^{\mu \nu}_\lambda
x^\lambda f \right)- \imath\theta_{\nu }^{\mu \nu } f
\end{equation}
indicate  obstructions to the tracial property (\ref{rc4}) provided
that $\theta_{\nu }^{\mu \nu }\neq 0$  
\footnote{For  the
$\kappa$ -deformation [see(\ref{c5}) and  (\ref{kM}) below], one has
$\theta_{\nu }^{\mu \nu }=(n-1)a$.}.
\end{remark}
In the next chapter we shall present explicitly  two one-parameter families of twists
corresponding to twisted star-product realization  of the $\kappa $-deformed Minkowski spacetime algebra \cite{MR,Z}.

\chapter{$\kappa$-Minkowski spacetime from Drinfeld twist}
The first idea of non commuting coordinates was suggested as long ago as the 1940 by Snyder \cite{Snyder}. More recently, deformed coordinate spaces based on algebraic relations $[x^\mu,x^\nu]=i\theta^{\mu\nu}$ (with $\theta^{\mu\nu}$ -constant) have been considered as
providing uncertainty relations which restrict the accuracy up to which the coordinates of an event can be measured \cite{Dop1}-\cite{Dop2}.  
These presently known as "canonical" spacetime commutation relations were and still are the subject of many investigations in quantum field and gauge theories and even in string theory.

The next step in difficulty is to consider the Lie-algebraic type of noncommutativity, i.e.: $[x^\mu,x^\nu]=\imath\theta^{\mu\nu}_\lambda x^\lambda$. The $\kappa$-Minkowski noncommutative spacetime is an example of this type of noncommutativity. Usually known in the form: $[x^0,x^k]=\frac{\imath}{\kappa} x^k;\qquad [x^i,x^k]=0$, however more general: $[x^\mu,x^\nu]=\imath (a^{\mu}x^\nu-a^\nu x^\mu)$ (where $a^{\mu}$ is fixed four-vector) has also been used by many authors. It was firstly proposed in \cite{LLM}, and investigated in three cases (with $\eta_{\mu\nu}=(+,-,-,-)$):\\
i) $a_{\mu}a^{\mu}=1$ for time-like  $\kappa$-deformations;\\
ii) $a_{\mu}a^{\mu}=-1$ for tachyonic one;\\
iii) $a_{\mu}a^{\mu}=0$ providing light-cone $\kappa$-deformation.\\
Such algebras described by the above mentioned commutation relations constitute Hopf module algebras with symmetry group described in Quantum Group formalism. We will focus here on the Lie-algebraic quantum deformations (\ref{c1}) which are the most physically appealing from a bigger set of quantum deformations (\ref{tcr}) in the Hopf-algebraic framework of
quantum groups  \cite{Drinfeld}, \cite{Jimbo}-\cite{Fadeev}. It appears that quantum deformations of Lie algebras are controlled by classical $r$ -matrices satisfying the classical
Yang-Baxter (YB) equation: homogeneous or inhomogeneous.
Particularly, an effective tool is provided by the so-called
twisted deformations \cite{Drinfeld} which classical $r$ -matrices satisfy the
homogeneous YB equation and can be applied to both Hopf algebra
(coproduct) as well as related Hopf module algebra ($*$ -product) \cite{Oeckl}.
Two types of explicit examples of twisting two tensors are the best known and investigated in
the literature: Abelian \cite{Reshetikhin} and Jordanian \cite{Ogievetsky} as
well as the extended Jordanian one \cite{Kulish} (see also \cite{Bonneau,VNT1}). The $\kappa$ -deformation
of Poincar\'{e} algebra is characterized by the inhomogeneous classical YB equation, which
implies that one should not expect to get $\kappa-$Minkowski space from a
Poincar\'{e} twist. However, twists belonging to extensions of the Poincar\'{e} algebra are not
excluded \cite{Bu,MSSKG}. An interesting result has been found in Ref. 
\cite{Ballesteros}, 
where starting from nonstandard (Jordanian) deformed $D=4$ conformal
algebra the $\kappa-$Minkowski space has been obtained in two ways: by
applying the Fadeev-Reshetikin-Takhtajan technique \cite{Fadeev} or exploiting a bialgebra structure
generated by a classical $\mathfrak{so}(4,2)$ $r$ -matrix.
Moreover the classification of quantum
deformations strongly relies on classification of the classical $r$ -%
matrices. In a case of relativistic Lorentz and Poincar\'{e} symmetries, such classification 
has been performed some time ago in Ref. \cite{Zakrzewski}. A passage from the
classical $r$ -matrix to twisting two-tensor and the corresponding Hopf-algebraic deformation is a nontrivial task. Explicit twists for
Zakrzewski's list have been provided in Ref. \cite{Varna} as well as their
superization \cite{VNT1,VNT2}. 
Systematic quantizations of the corresponding Lorentz and
Poincar\'{e} Hopf algebras are carried on in Ref. \cite{BLT2}. In particular, the
noncommutative spacetimes described by three types of Abelian Poincar\'{e}
twists have been calculated in Ref. \cite{LW}. 

Below we shall present explicitly  two one-parameter families of $\mathfrak{igl}(n)$-twists
corresponding to twisted star-product realization  of the well-known $\kappa $-deformed Minkowski spacetime algebra \cite{MR,Z}. These are Abelian and Jordanian families. The Abelian
family has been previously investigated \cite{Bu}. One finds out, in an
explicit form, the operator realization of the $\kappa$-Minkowski coordinates $%
\hat x^\mu$ for each value of twist parameters. In this section we also
point out the smallest possible subalgebras, containing Poincar\'{e} algebra, to
which one can reduce the deformation procedure.

Algebra of $\kappa$-Minkowski coordinates (cf. example \ref{example6} and cf. (\ref{c1})) $(\mathfrak{X}^{n}[[h]])^{\mathcal{F}}$ is generated by the relations:
\begin{gather}\label{kM1}
[x^0,x^k]_\star=\imath h x^k,\qquad [x^k,x^j]_\star=0  ,\qquad k,j=1,\ldots ,n-1,
\end{gather}
and constitutes a covariant algebra over deformed $\mathfrak{igl}(n)$.\footnote{Notice that $h$ is the formal parameter. So (\ref{kM1}) is not, strictly speaking, a Lie algebra.} Although the result of star multiplication of two generators  $x^\mu\star x^\nu$ is explicitly twist dependent, the generating relations (\ref{kM1}) are twist independent. Therefore all algebras
$(\mathfrak{X}^{n}[[h]])^{\mathcal{F}}$ are mutually isomorphic to each other. They provide a covariant deformation quantization of the $\kappa$-Minkowski Poisson structure represented by the linear Poisson bivector
\begin{gather}\label{kM2}
    \theta_{\kappa M}=x^k\partial_k\wedge \partial_0,\qquad k=1,\ldots, n-1
\end{gather}
on $\mathbb{R}^{n}$ (cf.~\cite{Z}). The corresponding Poisson tensor $\theta^{\mu\nu}(x)=a^\mu x^\nu - a^\nu x^\mu$, where $a^\mu=(1,0,0,0)$ is degenerated ($\det[\theta^{\mu\nu}(x)]=0$), therefore, the associated symplectic form $[\theta^{\mu\nu}(x)]^{-1}$ does not exist.
\begin{remark}\label{Poisson2}
More generally, there is one-to-one correspondence between linear Poisson structures
$\theta=\theta^{\mu\nu}_\lambda x^\lambda\partial_\mu\wedge\partial_\nu$ on $\mathbb{R}^{n}$ and $n$-dimensional Lie algebras $\mathfrak{g}\equiv\mathfrak{g}(\theta)$
\begin{gather}\label{envelop0}
    [X^\mu, X^\nu]=\imath \theta^{\mu\nu}_\lambda X^\lambda
\end{gather}
with the constants $\theta^{\mu\nu}_\lambda$ playing the role of Lie algebra structure constants.
Therefore, they are also called  Lie--Poisson structures. Following Kontsevich we shall call the corresponding universal enveloping algebra $\U_{\mathfrak{g}}$ a canonical quantization of $(\mathfrak{g}^*,  \theta)$~\cite{Kon}. The classif\/ication of  Lie--Poisson structures on
$\mathbb{R}^{4}$ has been recently presented  in~\cite{Sheng}.

Let us consider the following modif\/ication of the universal enveloping algebra construction~(\ref{UEnvelop})
\begin{gather*}
    \U_h(\mathfrak{g})=\frac{T\mathfrak{g}[[h]]}{J_{h}},
\end{gather*}
where $J_{h}$ denotes an ideal generated by elements
$\langle X\otimes Y-Y\otimes X-h[X, Y]\rangle $ and is closed in $h$-adic topology. In other words the algebra $\U_h(\mathfrak{g})$ is $h$-adic unital algebra generated by $h$-shifted relations
\begin{gather*}
    [\X^\mu, \X^\nu]=\imath h\theta^{\mu\nu}_\lambda \X^\lambda
\end{gather*}
imitating  the Lie algebraic ones~(\ref{envelop0}).
This algebra provides the so-called universal quantization of $(\mathfrak{X}^{n}, \theta)$~\cite{Kat}.
Moreover, due to universal (quotient) construction there is a $\mathbb{C}[[h]]$-algebra epimorphism from $\U_h(\mathfrak{g})$ onto $(\mathfrak{X}^{n}[[h]])^{\mathcal{F}}$ for a suitable twist $\mathcal{F}$ (cf.~(\ref{tcr})). In fact, $\U_h(\mathfrak{g})$ can be identif\/ied with $\mathbb{C}[[h]]$-subalgebra in $\U_{\mathfrak{g}}[[h]]$ generated by $h$-shifted generators $\X^\mu=hX^\mu$.  $\U_{\mathfrak{g}}[[h]]$~is by construction a topological free $\mathbb{C}[[h]]$-module (cf.~Appendix \ref{app3}). For the case of  $\U_h(\mathfrak{g})$ this question is open.
\end{remark}

\begin{example}\label{example8}
In the the case of $\theta_{\kappa M}$ (\ref{kM2}) the corresponding Lie algebra is a solvable one. Following~\cite{Oriti} we shall denote it as $\mathfrak{an}^{n-1}$. The universal $\kappa$-Minkowski  spacetime algebra $\U_h(\mathfrak{an}^{n-1})$ has been introduced in~\cite{Z}, while its Lie algebraic counterpart the canonical $\kappa$-Minkowski  spacetime algebra $\U_{\mathfrak{an}^{n-1}}$ in~\cite{MR}. We shall consider both algebras in more details later on.
\end{example}

\subsection*{Abelian family of twists providing $\boldsymbol{\kappa}$-Minkowski spacetime}

The simplest possible twist is Abelian one.
$\kappa $-Minkowski spacetime can be  implemented by the one-parameter
family of Abelian twists \cite{BP} with $s$ being a numerical
parameter labeling dif\/ferent twisting tensors (see also~\cite{Bu, MSSKG}):
\begin{gather}  \label{At}
  \mathfrak{A}_{s}=\exp \left[ \imath h \left(sP_{0}\otimes
D-\left( 1-s\right) D\otimes
P_{0}\right)\right],
\end{gather}
where $D=\sum\limits_{\mu=0}^{n-1}L^\mu_\mu-L^0_0$.
All twists correspond to the same classical $r$-matrix\footnote{Since in the Heisenberg realization the space dilatation generator $D=x^k\partial_k$, $\mathfrak{r}$ coincides with the Poisson bivector~(\ref{kM2}).}:
\begin{gather*}
\mathfrak{r}=D\wedge P_0
\end{gather*}
and they have the same universal quantum r-matrix which is  of exponential form:
\begin{gather*}
\mathcal{R}=\mathfrak{A}_s^{21}\mathfrak{A}_s^{-1}=e^{\imath D\wedge P_0}.
\end{gather*}

\begin{remark}
This implies that corresponding Hopf algebra deformations of $\U_{\mathfrak{igl}(n)}^\mathcal{F}$ for dif\/ferent values of parameter $s$ are isomorphic. Indeed, $\mathfrak{A}_{s}$ are related by trivial twist:  $\mathfrak{A}_{s_2}=\mathfrak{A}_{s_1}{\mathcal F}_{W_{12}}$, where $W_{12}=\exp{(\imath(s_1-s_2) a DP_0)}$ (cf.\ Remark~(\ref{trivialTwist})). Consequently $ \Delta^{\rm op}_{(s=0)}=\Delta_{(s=1)}$ and
$\Delta^{\rm op}=\mathcal{R}\Delta\mathcal{R}^{-1}$.
We will see later on that dif\/ferent values of $s$ lead to dif\/ferent Heisenberg realizations and describe dif\/ferent physical models (cf.~\cite{BP}).
\end{remark}

\begin{remark}
The smallest subalgebra generated by $D$, $P_0$ and the Lorentz generators (\ref{isog3}) turns out to be entire $\mathfrak{igl}(n)$ algebra.
Therefore, as a consequence,  the deformation induced by twists (\ref{At}) cannot be restricted to
the inhomogeneous orthogonal transformations (\ref{isog1})--(\ref{isog2}). One can say that Abelian twists~(\ref{At}) are genuine $\mathfrak{igl}(n)$-twists.
\end{remark}

The deformed coproducts read as follows (cf.~\cite{Bu,BP}):
\begin{gather*}
\Delta _{s}\left( P_{0}\right) =1\otimes P_{0}+P_{0}\otimes 1,
\qquad
\Delta _{s}\left( P_{k}\right) =e^{-hsP_{0}}\otimes P_{k}+P_{k}\otimes
e^{h(1-s)P_{0}},
\\
\Delta _{s}\left( L_{k}^{m}\right) =1\otimes L_{k}^{m}+L_{k}^{m}\otimes 1,
\qquad
\Delta _{s}\left( L_{0}^{k}\right) =e^{hsP_{0}}\otimes
L_{0}^{k}+L_{0}^{k}\otimes e^{-h(1-s)P_{0}},
\\
\Delta _{s}\left( L_{k}^{0}\right) =e^{-hsP_{0}}\otimes
L_{k}^{0}+L_{k}^{0}\otimes e^{h(1-s)P_{0}}+hsP_{k}\otimes De^{h(1-s)P_{0}}
-h(1-s)D\otimes P_{k},
\\
\Delta _{s}\left( L_{0}^{0}\right) =1\otimes L_{0}^{0}+L_{0}^{0}\otimes 1
+hsP_{0}\otimes D-h(1-s)D\otimes P_{0},
\end{gather*}
and the antipodes are:
\begin{gather*}
S_s\left( P_{0}\right) =-P_{0}, \qquad S_s\left( P_{k }\right) =-P_{k
}e^{hP_0},
\\
S_s\left( L_{k}^{m}\right) =-L_{k}^{m}, \qquad S_s\left( L_{0}^{k}\right)
=-L_{0}^{k} e^{-hP_0},
\qquad
S_s\left( L_{0}^{0}\right) =-L_{0}^{0}-h(1-2s)DP_0,
\\
S_s\left( L_{k}^{0}\right) =-e^{hsP_0}L_{k}^{0}e^{-
h(1-s)P_0}+h\big[sP_kDe^{hsP_0}+(1-s)DP_k e^{h(1+s)P_0}\big].
\end{gather*}
The above relations are particularly simple for $s=0, 1, {\frac{1}{2}}$.
Smash product construction (for given $\Delta_s$) together with classical action~(\ref{action1}) leads to the following crossed commutators:
\begin{gather*}
\left[ \hat{x}^{\mu },P_{0}\right]_s =\imath \delta^\mu_0,\qquad\left[ \hat{x}^\mu,P_{k}\right]_s =\imath \delta^\mu_ke^{h(1-s)P_{0}}-\imath hs\delta^\mu_0P_{k},
\\
\left[ \hat{x}^\mu,L_{k}^{m}\right]_s =\imath \delta^\mu_k\hat{x}^{m},\qquad \big[ \hat{x}^\mu,L_{0}^{k}\big]_s =\imath \delta^\mu_0\hat{x}^{k}e^{-h(1-s)P_{0}}-\imath hs\delta^\mu_0L_{0}^{k},
\\
\left[ \hat{x}^{\mu }, L_{k}^{0}\right]_s =\imath
\delta_k^\mu\hat{x}^{0} e^{h(1-s)P_{0}}-\imath hs\delta^\mu_0L_{k}^{0}
+\imath hs\delta_k^\mu D-\imath h(1-s)\delta^\mu_l \hat{x}^{l}P_{k},
\\
\left[\hat{x}^{\mu },L_{0}^{0}\right]_s =\imath \delta_0^\mu\hat{x}^{0}
+\imath hs\delta_{0}^{\mu }D-\imath h(1-s)\delta_k^\mu \hat{x}^kP_{0}.
\end{gather*}
Supplemented with (\ref{igl}) and $\kappa$-Minkowski spacetime commutators:
\begin{gather}\label{kMhat}
  [\hat{x}^0,\hat{x}^k]=\imath h \hat{x}^k, \qquad [\hat{x}^k,\hat{x}^j]=0  ,\qquad k,j=1,\ldots, n-1
\end{gather} they form the algebra:
$\mathfrak{X}^n[[h]]^\mathcal{F}\rtimes \U_{\mathfrak{igl}(n)}^\mathcal{F}$.
 The change of generators $(L_\mu^\nu,P_\rho,x^\lambda)\shortrightarrow(L_\mu^\nu,P_\rho,\hat{x}^\lambda_s)$, where
 \begin{gather*}  
\hat x^{i}_{s}=x^{i}e^{\left( 1-s\right)hP_0} , \qquad \hat x^{0}_{s} =
x^{0}- h s D
\end{gather*}
 implies the isomorphism from Proposition~\ref{prop2}. Similarly, the change of generators $(P_\rho,x^\lambda)\shortrightarrow(P_\rho,\hat{x}^\lambda_s)$, where
  \begin{gather}  \label{Atlh2}
\hat x^{i}_{s}=x^{i}e^{\left( 1-s\right)hP_0} , \qquad \hat x^{0}_{s} =
x^{0}- hsx^{k}P_{k}
\end{gather}%
illustrates Proposition~\ref{prop3} and gives rise to Heisenberg representation acting in the vector space $\mathfrak{X}^n[[h]]$ as well as its Hilbert space extension acting in ${\cal L}^2(\mathbb{R}^n, dx^n)[[h]]$ provided that $P_k=-\imath\partial_k$.

\begin{remark}
The coordinate's realization shown above (\ref{Atlh2}) is the left-handed representation (Hermitian for $s=0$) in the sense introduced in Chapter 1 (cf. \ref{lhr-rhr1}). 

The right-handed representation for Abelian twist (Hermitian for $\ s=1$) is the following:
\begin{equation}  \label{Atrh}
\hat x^{i}_{R,s}=x^{i}e^{-shA} , \qquad \hat x^{0}_{R,s}=x^{0}+h\left(
1-s\right) x^{k}P _{k}
\end{equation}
This implies
\begin{equation}
[x^i, f]_{\star,s}= x^{i}\left(e^{(1-s)hP_0}-e^{-shP_0}\right)f ,\qquad [x^0,
f]_{\star,s}= \imath h x^k\partial_k f
\end{equation}
which is different from (\ref{c1}) for any value of the parameter $s$.
\end{remark}

\subsection*{Jordanian family of twists providing $\boldsymbol{\kappa}$-Minkowski spacetime}

Jordanian twists have the following form \cite{BP}:
\begin{gather*}  
 \mathfrak{J}_{r }=\exp \left(J_{r}\otimes \sigma_r \right),
\end{gather*}
where $J_{r}=\imath(\frac{1}{r}D-L^0_0)$ with a numerical factor $r\neq 0$ labeling dif\/ferent twists and $\sigma_r =\ln (1-h rP_0)$.
Jordanian twists are related with
Borel subalgebras $\mathfrak{b}^{2}(r)=\{J_r,P_{0}\}\subset \mathfrak{igl}(n,\mathbb{R})$
which, as a matter of fact, are isomorphic to the 2-dimensional solvable Lie algebra $\mathfrak{an}^1$: 
\begin{equation}\label{borel}
[J_r, P_{0}]=P_0
\end{equation}
Direct calculations show that, regardless of the value of $r$, twisted
commutation relations (\ref{tcr}) take the form of that for $\kappa$-Minkowski spacetime
(\ref{kM1}).
The corresponding classical $r$-matrices are the following\footnote{Now, for dif\/ferent values of the parameter $r$ classical $r$-matrices are not the same.}:
\begin{gather}\label{jt2}
\mathfrak{r}_J=\mathfrak{J}_{r}\wedge P_0={1\over r}D\wedge P_0-L^0_0\wedge P_0.
\end{gather}
Since in the generic case we are dealing with $\mathfrak{igl}(n)$-twists (see~\cite{BP} for exceptions), we shall write deformed coproducts and antipodes in terms of generators $\{L_{\nu }^{\mu },P_{\mu }\}$. The deformed coproducts read as follows~\cite{BP}:
\begin{gather*}
\Delta _{r}\left( P_{0}\right) =1\otimes P_{0}+P_{0}\otimes e^{\sigma _{r}},\qquad
 \Delta _{r}\left( P_{k}\right) =1\otimes P_{k}+P_{k}\otimes e^{-\frac{1}{r}\sigma _{r}},
\\
\Delta _{r}\left( L_{k}^{m}\right) =1\otimes L_{k}^{m}+L_{k}^{m}\otimes 1,\qquad
\Delta _{r}\big( L_{0}^{k}\big) =1\otimes
L_{0}^{k}+L_{0}^{k}\otimes e^{\frac{r+1}{r}\sigma _{r}},
\\
 \Delta _{r}\left( L_{k}^{0}\right) =1\otimes L_{k}^{0}+L_{k}^{0}\otimes e^{-%
\frac{r+1}{r}\sigma _{r}}- hrJ_{r}\otimes P_{k}e^{-\sigma _{r}},\\
\Delta _{r}\left( L_{0}^{0}\right) =1\otimes L_{0}^{0}+L_{0}^{0}\otimes 1
-hrJ_{r}\otimes P_{0}e^{-\sigma _{r}},
 \end{gather*}
where
\begin{gather*}
e^{\beta\sigma_r}=\left(1-hrP_0\right)^\beta=\sum_{m=0}^\infty \frac{%
h^m}{m!}\beta^{\underline m}(- rP_0)^m
\end{gather*}
and $\beta^{\underline m}=\beta(\beta-1)\cdots (\beta-m+1)$ denotes the
so-called falling factorial. The antipodes are:
\begin{gather*}
S_r\left( P_{0}\right) =-P_{0}e^{-\sigma_r }, \qquad S_r\left( P_{k }\right)
=-P_{k }e^{\frac{1}{r}\sigma_r },
\\
S_r\big( L_{0}^{k}\big) =-L_{0}^{k}e^{-\frac{r+1}{r}\sigma_r}, \qquad
S_r\left( L_{k}^{0}\right) =-\left(L_{k}^{0}+\imath h rJ_rP_k\right)e^{\frac{r+1}{r}\sigma_r },
\\
S_r\left( L_{0}^{0}\right) =-L_{0}^{0}-\imath hrJ_rP_0, \qquad S_r\left(
L_{k}^{m}\right)=-L_{k}^{m}.
\end{gather*}
Now using twisted coproducts and classical action
one can obtain by smash product construction (for f\/ixed value of the parameter~$r$) of extended
 position-momentum-$\mathfrak{gl}(n)$ algebra with the following crossed commutators:
\begin{gather}\label{WeylJord}
\left[ \hat{x}^{\mu },P_{0}\right]_r =\imath \delta^\mu
_0e^{\sigma _{r}}=\imath \delta^\mu_0(1-hrP_0)
, \qquad\left[ \hat{x}^{\mu },P_{k}\right]_r
=\imath\delta^\mu_k(1-hrP_0)^{-\frac{1}{r}},
\\
\nonumber
\left[ \hat{x}^{\mu },L_{k}^{m}\right]_r =\imath\hat{x}^{m}\delta^\mu_k,\qquad
\big[ \hat{x}^{\mu },L_{0}^{k}\big]_r =\imath\hat{x}
^{k}\delta^\mu_0(1-hrP_0)^{\frac{r+1}{r}},
\\
\nonumber
\left[ \hat{x}^{\mu },L_{k}^{0}\right]_r =\imath\hat{x}
^{0}\delta^\mu_k(1-hrP_0)^{-\frac{r+1}{r}}+\imath h\big(\hat{x}^{p}\delta_p^\mu
-r\hat{x}^{0}\delta^\mu_0\big)P_{k}(1-hrP_0)^{-1},
\\
\nonumber
\left[\hat{x}^{\mu },L_{0}^{0}\right]_r =\imath \hat{x}^{0}\delta^\mu
_0-\imath h\big( -\hat{x}^{k}\delta_k^\mu+r\hat{x}^{0}\delta^\mu
_0\big) P_{0}(1-hrP_0)^{-1}
\end{gather}
supplemented by $\mathfrak{igl}(n)$ (\ref{igl}) and $\kappa$-Minkowski  relations~(\ref{kMhat}).
One can see that relations (\ref{kMhat}), (\ref{WeylJord}) generate $r$-deformed phase space $\mathfrak{W}^n[[h]]^\mathcal{F}$.
Heisenberg realization is now in the following form:
\begin{gather}  \label{Jtlh}
  \hat{x}^{i}_{r}=x^{i}\left( 1-raP_0\right)^{-\frac{1}{r }}  \qquad
\mbox{and}\qquad   \hat{x}^{0}_{r}=x^{0}(1-raP_0).
\end{gather}
It is interesting to notice that above formulas~(\ref{Jtlh}) take the same form before and after Heisenberg realization. Moreover one can notice that commutation relation~(\ref{WeylJord}) can be reached by nonlinear change (\ref{Jtlh}) of generators: $(P_\rho,x^\lambda)\shortrightarrow(P_\rho,\hat{x}^\lambda_r)$. This illustrates Propositions~\ref{prop2},~\ref{prop3} for the Jordanian case. Again in the Heisenberg realization the classical $r$-matrices~(\ref{jt2}) coincide with Poisson bivector~(\ref{kM2}). 

\begin{remark}
Right-handed representations (Hermitian for $r=n-1$) is the following in Jordanian case:
\begin{equation}  \label{Jtrh}
\ \hat{x}^{i}_{R,r}=x^{i} , \qquad \mbox{and}\qquad \hat{x}%
^{0}_{R,r}=x^{0}(1-rh P_0)+h D
\end{equation}%
(or after Heisenberg realization for $\hat{x}^{0}_{R,r}=x^{0}(1-rh P_0)+hx^kP_k$).
Particularly, using (\ref{cc}), we obtain
\begin{equation}
[x^i, f]_{\star,r}= x^{i}\left(( 1-rh P_0)^{-\frac{1}{r }}-1\right)f
,\qquad [x^0, f]_{\star,r}= \imath h x^k\partial_k f
\end{equation}
which is different from (\ref{c1}). However for $r=-1$ one obtains the desired
commutator
\begin{equation}\label{c5}
[x^\mu, f]_\star=\imath\left( a^{\mu}x^\nu-a^{\nu}x^\mu\right)\partial_\nu f
\end{equation}
providing the $\kappa-$deformed Minkowski spacetime, i.e., $%
a^\mu=(h,0,\ldots,0)$.\end{remark}

\begin{remark}

For generic $r\neq 0$, the smallest subalgebra containing
simultaneously the Borel subalgebra (\ref{borel}) and one of the orthogonal
subalgebras $\mathfrak{iso}(g; n)$ [e.g., Poincar\'{e} subalgebra $\mathfrak{iso}%
(n-1, 1)$] is $\mathfrak{igl}(n)$. However, there are three exceptions:
\newline
(A) For $r=n-1$ in $n-$dimensional spacetime, the smallest subalgebra is $%
\mathfrak{isl}(n)$. \newline
(B) For $r=-1$ ($J_{-1}=-L=\imath(-D-L_0^0)$) in an arbitrary dimension, the smallest subalgebra
is Weyl-orthogonal algebra $\mathfrak{iwso}(g; n)$. It contains a central
extension of any orthogonal algebra $\mathfrak{so}(g; n)$ \footnote{%
Signature of the metric $g$ is irrelevant from an algebraic point of view.}. In
this case, the commutation relation (\ref{isog1}-\ref{isog2}) should be
supplemented by
\begin{equation}  \label{iwso}
[M_{\mu\nu}, L]=0, \qquad\qquad [P_\mu, L]= P_\mu
\end{equation}
Of course, for physical applications we will choose the Weyl-Poincar\'{e}
algebra. This minimal one-generator extension of the Poincar\'{e} algebra $\mathfrak{iso}(n-1, 1)$ has
been used in Ref. \cite{Ballesteros} (cf. next subsection).\newline
(C) $r=1$ in $n=2$ dimensions, $J_1=M_{10}$ is a boost generator for
nondiagonal metric $g_{00}=g_{11}=0$, $g_{01}=g_{10}=1$ with the Lorentzian signature. This corresponds
to the so-called light-cone deformation of the Poincar\'{e} algebra $\mathfrak{%
iso}(1, 1)$ \cite{LLM}.\end{remark}

\subsection*{Minimal case: Weyl- Poincar\'{e} algebra}
As mentioned above only for the case $r=-1$ (in physical $n=4$ dimensions) the covariance group can be reduced to one-generator (dilatation) extension of the Poincar\'e algebra~\cite{Ballesteros,BP}. (It is related to research in Ref. \cite{Ballesteros}, where, however, the $\kappa-$Minkowski spacetime has been obtained with different techniques). Below we shall present
coproducts and antipodes for all 11
generators in "physical" basis $(M_{k}, N_k, L, P_\mu)$ of the Poincar\'{e}-Weyl
algebra containing the Lorentz subalgebra of rotation $M_k=-\frac{\imath}{2}%
\epsilon_{klm}M_{lm}$ and boost $N_k=\imath M_{k0}$ generators:
\begin{eqnarray}  \label{sP1}
[M_i,\,M_j ]\ =\ \imath\,\epsilon_{ijk}\,M_k ~,\quad [M_i,\,N_j]\ =
\imath\,\epsilon_{ijk}\,N_k ~,\quad [N_i,\,N_j]\ =-\imath\,
\epsilon_{ijk}\,M_k~
\end{eqnarray}
Abelian four-momenta $P_\mu=-\imath\partial_\mu$ ($\mu=0,\dots,3\ , k=1,2,3$)%
\begin{eqnarray}  
[M_j,\,P_k]\!\!&=\!\!&\imath\,\epsilon_{jkl}\,P_l~,\qquad [M_j,\,P_0]\;=\;0~,\label{P3}
\\[5pt]
[N_j,\,P_k]\!\!&=\!\!&-\imath\,\delta_{jk}\,P_0~,\quad\;\;[N_j,\,P_0]\;=\;-%
\imath\, P_j^{}~.\label{P4}
\end{eqnarray}
and dilatation generator $L=x^\mu\partial_\mu$ as before:
\begin{equation}
[M_k,\,L]=[N_k,\,L]=0 ;\,\qquad [P_\mu,\,L]= P_\mu
\end{equation}
The deformed coproducts are
\begin{equation}
\tilde{\Delta}\left( P_{\mu}\right) =1\otimes P_{\mu}+P_{\mu}\otimes e^{%
\tilde{\sigma} }, \qquad \tilde{\Delta}\left( M_{k}\right) =1\otimes
M_{k}+M_{k}\otimes 1
\end{equation}
\begin{equation}
\tilde{\Delta}\left( N_{k}\right) =1\otimes N_{k}+N_{k}\otimes 1+hL\otimes P_{k}e^{-\tilde{\sigma} }
\end{equation}
\begin{equation}
\tilde{\Delta}\left( L\right) =1\otimes L+L\otimes 1\ +h
L\otimes P_{0}e^{-\tilde{\sigma} }
\end{equation}
Here $e^{-\tilde{\sigma}}=\left(1+hP_0\right)^{-1}$ and $%
L=-J_{-1}$ . The antipodes are
\begin{equation}
\tilde{S}\left( P_{\mu}\right) =-P_{\mu}e^{-\tilde{\sigma} }, \qquad \tilde{S%
}\left( M_{k}\right) =-M_{k}
\end{equation}
\begin{equation}
\tilde{S}\left( N_{k}\right) =-N_{k}+hLP_k\ , \quad \tilde{S}%
\left( L\right) =-L +hLP_0
\end{equation}

\chapter{The $\kappa$-Poincare quantum group}

The first deformations of Poincar\'{e} symmetry appeared in the early 1990's \cite{Luk1}-\cite{MR},\cite{WorPod},\cite{Zumino}.
$\kappa$ -Poincar\'{e} Hopf algebra has been originally discovered in the so-called
standard basis \cite{Luk1} inherited from the anty-de Sitter basis by the contraction procedure. In this basis only the rotational sector remains algebraically undeformed. Introducing  bicrossproduct basis allows to leave the lorentzian generators undeformed.  This basis has
 easier form of the $\kappa$ -Poincar\'{e} algebra and was postulated in \cite{MR}. In this form, the Lorentz subalgebra of the $\kappa$ -Poincar\'{e} algebra, generated by rotations and boosts is not deformed and the difference is only in the way the boosts act on momenta. There is also a change in co-algebraic sector, the coproducts are no longer trivial, which has a consequence: the underlying spacetime is noncommutative. In fact the Majid - Ruegg  bicrossproduct basis is the most popular one and considered by many authors for physical applications. 
However in this chapter we shall focus on $\kappa$ -Poincar\'{e} (Hopf) algebra in its classical Poincar\'{e} Lie algebra basis. The constructions of such basis were previously also investigated in several papers \cite{KosLuk}, \cite{Kos}.

But first let us extend the comment on why we can not use Drinfeld's twisting technique from previous chapter for twisting $\U_{\mathfrak{igl}(n)}$ in the case of Poincar\'{e} algebra. As it was already mentioned quantum deformations of the Lie algebra are controlled by classical $r$-matrices satisfying classical Yang--Baxter (YB) equation.
In the case of $r$-matrices satisfying homogeneous YB equations the co-algebraic sector is
twist-deformed while algebraic one remains classical~\cite{Drinfeld}.
Additionally, one can also apply existing twist tensors to relate Hopf module-algebras in order to obtain quantized, e.g., spacetimes (see \cite{MSSKG,BP,Meljanac0702215} as discussed in the previous section for quantizing Minkowski spacetime). For inhomogeneous $r$-matrices one applies Drinfeld--Jimbo (the so-called standard) quantization instead.
 
\begin{remark}\label{DJ}
Drinfeld--Jimbo quantization algorithm relies on
simultaneous deformations of the algebraic and coalgebraic sectors and
applies to semisimple Lie algebras \cite{Drinfeld,Jimbo}. In particular, it
implies existence of classical basis for Drinfeld--Jimbo quantized algebras.
Strictly speaking, the Drinfeld--Jimbo procedure cannot be applied to the
Poincar\'{e} non-semisimple algebra which has been obtained by the
contraction procedure from the Drinfeld--Jimbo deformation of the
anti-de~Sitter (simple) Lie algebra $\mathfrak{so}(3,2)$. Nevertheless,
quantum $\kappa$-Poincar\'{e} group shares many properties of the original
Drinfeld--Jimbo quantization. These include existence of classical basis, the
square of antipode and solution to specialization problem~\cite{BP2}. There is no cocycle twist related with the Drinfeld--Jimbo deformation. The Drinfeld--Jimbo quantization has many non-isomorphic forms (see e.g.~\cite{Klimyk,Chiari}).
\end{remark}

\begin{remark}
Drinfeld--Jimbo quantization can be considered from the more general point of view, in the so-called quasi bialgebras framework. In this case more general cochain twist instead of ordinary cocycle twist can be used together with a coassociator in order to perform quantization. Cochain twists may lead to non-associative star multiplications~\cite{non-ass}.
\end{remark}
In the case of relativistic  symmetries, such classification (complete for the Lorentz and almost-complete for Poincar\'{e} algebras) has been performed in Ref. \cite{Zakrzewski} (see also \cite{Lya} where this classification scheme has been extended). Particularly, $r$ -matrix which corresponds to $\kappa$ -deformation of Poincar\'{e} algebra is given by
 \be r=
 N_i\wedge P^i \ee
and  it satisfies the inhomogeneous (modified) Yang-Baxter equation:
\be  [[r,r]]=
M_{\mu\nu}\wedge P^\mu\wedge P^\nu \ee
where $[[\cdot,\cdot]]$ is the so-called Schouten bracket (a.k.a. Schouten - Nijenhuis bracket) and it is defined\footnote{The skew symmetric Schouten - Nijenhuis bracket \cite{Schouten} is the unique extension of the Lie bracket of vector fields to a graded bracket on the space of alternating multivector fields that makes the alternating multivector fields into a Gerstenhaber algebra. Here we consider only the case of twovector fields, of which a r - matrix is an example.}
as follows $\forall a,b,c,d \in\mathfrak{g}$ ($\mathfrak{g}$ - Lie algebra) : 
\be 
[[a\wedge b, c\wedge d]]=[[c\wedge d,a\wedge b]]=
[a, d]\wedge b\wedge c - [a,c]\wedge b\wedge d +[b,c]\wedge a\wedge d -[b,d]\wedge a\wedge c
\ee
Therefore, one does not expect to obtain the $\kappa$ Poincar\'{e} coproduct by twist.
However, most of the items on that list contain homogeneous $r-$matrices. Explicit twists for them have been provided in Ref. \cite{Varna} (for superization see \cite{VNT1,VNT2}); the corresponding quantization has been carried out in \cite{BLT2}.

Our purpose in this section is to formulate $\kappa-$Poincar\'{e} Hopf algebra in classical Poincar\'{e} basis. Possibility of def\/ining $\kappa$-Poincar\'{e} algebra in a classical basis has been under debate for a long time, see, e.g.,  \cite{KosLuk,Kos, GNbazy, Meljanac0702215, BP2, Group21}. 
\section{$h$-adic $\kappa$-Poincar\'{e} quantum group in classical\\ basis}
We take the Poincar\'{e} Lie algebra $\mathfrak{io}(1,3)$ provided with a convenient choice of ``physical'' generators $(M_i,N_i,P_\mu)$\footnote{From now one we shall work with physical dimensions $n= 4$, however generalization to arbitrary dimension~$n$ is obvious.}and Lorentzian metric $\eta_{\mu\nu}=diag(-,+,+,+)$
for rising and lowering indices.
\begin{gather}  \label{L1}
[ M_{i},M_{j}]=\imath \epsilon _{ijk}M_{k},\quad [ M_{i},N_{j}]=
\imath \epsilon _{ijk} N_{k},\quad [N_{i},N_{j}]=-\imath \epsilon _{ijk} M_{k},\\
[P_\mu,P_\nu]=\lbrack M_{j},P_{0}]=0,\qquad [M_{j},P_{k}]=\imath \epsilon _{jkl}P_{l}, \label{L2}\\
[N_{j},P_{k}] =- \imath \delta _{jk}P_{0},\qquad [N_{j},P_{0}]=-\imath P_{j}. \label{L3} 
\end{gather}
The algebra $(M_i,N_i,P_\mu)$ can be
extended in the standard way to a Hopf algebra by defining on the universal
enveloping algebra $\U_{\mathfrak{io}(1,3)}$ the coproduct $\Delta_0$, the counit
$\epsilon$, and the antipode $S_0$, where the nondeformed - primitive coproduct, the
antipode and the counit are given by \be\label{undef} \Delta_0(X)=X\otimes
1+1\otimes X , \quad S_0(X)=-X , \quad \epsilon(X)=0 \ee
for $X\in\mathfrak{io}(1,3)$. In addition $\Delta_0(1)=1\otimes 1$, $%
S_0(1)=1$ and $\epsilon(1)=1$. For the purpose of deformation one has to extend
further this Hopf algebra by considering formal power series in $h$, and correspondingly considering the Hopf algebra ($\U_{\mathfrak{io}(1,3)}[[h]],\cdotp%
,\Delta_0,S_0,\epsilon)$ as a topological Hopf algebra with the so-called "h-adic" topology (see also Appendix A, \cite{Klimyk,Chiari}).

The structure of the Hopf algebra can be def\/ined on $\U_{\mathfrak{io}(1,3)}[[h]]$
by establishing deformed coproducts of the generators~\cite{BP2} and the antipodes leaving algebraic sector classical (untouched) like in the case of twisted deformation, it is as follows:
\begin{gather}\label{copM}
\Delta _{\kappa }\left( M_{i}\right) =\Delta _{0}\left( M_{i}\right)=M_i\otimes 1+1\otimes M_i,\\
\Delta _{\kappa }\left( N_{i}\right) =N_{i}\otimes
1+\left(h P_{0}+\sqrt{1-h^{2}
P^{2}}\right)^{-1}\!\otimes N_{i}-h\epsilon _{ijm}
P_{j}\left(h P_{0}+\sqrt{1-h^{2}
P^{2}}\right)^{-1}\!\otimes M_{m},\\
\Delta _{\kappa }\left( P_{i}\right) =P_{i}\otimes
\left(hP_{0}+\sqrt{1-h^{2}
P^{2}}\right)+1\otimes P_{i},  \\
\Delta _{\kappa }\left( P_{0}\right) =P_{0}\otimes
\left(h P_{0}+\sqrt{1-h^{2}
P^{2}}\right)+\left(hP_{0}+\sqrt{1-h^{2}
P^{2}}\right)^{-1}\otimes P_{0}\\
\hphantom{\Delta _{\kappa }\left( P_{0}\right) =}{}
+hP _{m}\left(hP_{0}+\sqrt{1-h^{2}
P^{2}}\right)^{-1}\otimes P^{m},\label{copP0}
\end{gather}
and the antipodes
\begin{gather}\label{SM}
S_{\kappa }(M_{i})=-M_{i},\qquad S_\kappa(N
_{i})=-\left(hP_{0}+\sqrt{1-h^{2}
P^{2}}\right)N_{i}-h\epsilon _{ijm}P_{j}%
M_{m},
\\
S_\kappa(P_{i})=-P_{i}\left(hP_{0}+\sqrt{1-h^{2}%
P^{2}}\right)^{-1},\quad S_\kappa(P
_{0})=-P_{0}+h\vec{P}^{2}\left(hP_{0}+\sqrt{1-h^{2}
P^{2}}\right)^{-1},\label{SP}
\end{gather}
where $P^{2}\doteq P_{\mu }P^{\mu
}\equiv\vec{P}^{2}-P_0^2$, and $\vec{P}^{2}=P
_{i}P^{i}$. One sees that above expressions are  formal
 power series in the formal parameter $h$, 

\begin{equation}\label{sqrt}
\sqrt{1-hP^{2}}=\sum_{n\geq 0}(-1)^n h^{2n}\,\binom{1/2}{n}
 \,[P^{2}]^n
\end{equation}
 where $\binom{1/2}{n}=\frac{1/2(1/2-1)\ldots (1/2-n+1)}{n!}$ are binomial coefficients. Introducing the following shortcuts: 
\begin{equation}  \label{Pi1}
\Xi\doteq hP_{0}+\sqrt{1-h^{2}%
P^{2}}
\quad \mbox{and}\quad
\Xi^{-1}\doteq\frac{\sqrt{1-h^{2}P^{2}}-hP_{0}}{1-h^{2}\vec{P}^{2}}%
\end{equation}
in the formulas above one can calculate that
\begin{equation}  \label{Pi3}
\Delta_\kappa (\Xi)=\Xi\otimes \Xi,\quad \Delta_\kappa
(\Xi^{-1})=\Xi^{-1}\otimes \Xi^{-1}, \qquad S_\kappa(\Xi)=\Xi^{-1}
\end{equation}
as well as
\begin{equation}
\ \Delta _{\kappa }\left(\sqrt{1-h^{2}P^{2}}\right) =\sqrt{%
1-h^{2}P^{2}}\otimes \Xi -h%
\Xi^{-1}\otimes P_{0}-h^2P%
_{m}\Xi^{-1}\otimes P^{m}
\end{equation}%
To complete the definition one leaves the counit $\epsilon$ undeformed. Let us observe that $\epsilon(\Xi)=\epsilon(\Xi^{-1})=1$. It is also worth noticing that the square of the antipode (\ref{SM}-\ref{SP}) is given by a similarity transformation \footnote{In the case of twisted deformation the antipode itself is given by the similarity transformation.}, i.e. $$S_\kappa^2(X)=\Xi X\Xi^{-1}$$
Substituting now\be\label{change_bicross}
\P_0\doteq h^{-1}\ln\Xi , \qquad \P_i\doteq P _i\Xi^{-1}\quad \Rightarrow
\quad\Xi=e^{h\P_0} \ee
one gets the deformed coproducts of the form
\begin{eqnarray}
\Delta_\kappa \left( \P_{0}\right) &=&1\otimes \P_{0}+\P_{0}\otimes 1,\quad
\Delta_\kappa \left( \P_{k}\right) =e^{-h\P_{0}}\otimes
\P_{k}+\P_{k}\otimes 1  \label{kP1} \\
\Delta_\kappa \left( N_{i}\right) &=&N_{i}\otimes 1+e^{-h\P_{0}}\otimes N_{i}-h\epsilon _{ijm}\P_{j}\otimes N_{m}
\label{kP2}
\end{eqnarray}%
Similarly the commutators of new generators can be obtained as
\begin{equation}
\left[ N_{i}, \P_{j}\right] =-\frac{\imath}{2} \delta
_{ij}\left(h^{-1}\left( 1-e^{-2h\P_{0}}\right) +h\vec{P}^{2}\right) + \imath hP_{i}P_{j}
\end{equation}%
with the remaining one being the same as for Poincar\'{e} Lie
algebra (\ref{L1})-(\ref{L3}). This proves that our deformed Hopf algebra (%
\ref{L1})-(\ref{L2}), (\ref{copM})-(\ref{copP0}) is Hopf isomorphic to the
$\kappa $ -Poincar\'{e} Hopf algebra \cite{Luk1} written in its bicrossproduct
basis $(M_i, N_i, \P_\mu)$ \cite{MR} and determines celebrated $\kappa$-Poincar\'{e} quantum group\footnote{We shall follow traditional terminology calling $\U_{\mathfrak{io}(1,3)}[[h]]$ $\kappa$-Poincar\'{e} with parameter $h$ of $[{\rm lenght}]=[{\rm mass}]^{-1}$ dimension.} in a classical basis as a Drinfeld--Jimbo deformation equipped with $h$-adic topology \cite{BP2} and from now on we shall denote it as $\U_{\mathfrak{io}(1,3)}[[h]]^{\rm DJ}$.
 This version of $\kappa$-Poincar\'{e} group is ``$h$-adic'' type and is considered as a traditional approach. The price we have to pay for it is that the deformation parameter cannot be determined, must stay formal, which means that it cannot be related with any constant of Nature, like, e.g., Planck mass or more general quantum gravity scale. In spite of not clear physical interpretation this version of $\kappa$-Poincar\'{e} Hopf algebra has been widely studied since its f\/irst discovery in~\cite{Luk1}.

\begin{remark}
The following immediate comments are now in order:\newline
i) Substituting $N_i=M_{0i}$ and $\epsilon_{ijk}M_k=M_{ij}$ the above
result easily generalizes to the case of \\
$\kappa $ -Poincar\'{e} Hopf
algebra in an arbitrary spacetime dimension $n$ (with the Lorentzian signature).
\newline
ii) Although $\mathfrak{io}(1,n-1)$ is Lie subalgebra of $\mathfrak{io}(1,n)
$ the corresponding Hopf algebra $\mathcal{U}_{\mathfrak{io}(1,n-1)}[[h]]$ (with $%
\kappa-$deformed coproduct) is not Hopf subalgebra of $\mathcal{U}_{\mathfrak{io}
(1,n)}[[h]]$.\newline
iii) Changing the generators by a similarity transformation $X\rightarrow
SXS^{-1}$ for $X\in (M_i, N_i, P _\mu)$ leaves the algebraic sector (\ref%
{L1})-(\ref{L3}) unchanged but in general it changes coproducts (\ref{copM}%
)-(\ref{copP0}). Here $S$ is assumed to be an invertible element in $\U_{\mathfrak{io}(1,3)}[[h]]^{\rm DJ}$. The both commutators and coproducts (\ref%
{copM})-(\ref{copP0}) are preserved provided that $S$ is group-like, i.e. $%
\Delta_\kappa (S)=S\otimes S$, e.g., $\Xi$. For physical applications it
might be also useful to consider another (nonlinear) changes of basis, e.g.
in the translational sector. Therefore the algebra $\U_{\mathfrak{io}(1,3)}[[h]]^{\rm DJ}$ is a convenient playground for developing
Magueijo-Smolin type DSR theories \cite{Smolin}, \cite{Gosh} (DSR2) even if
we do not intend to take into account coproducts. But the coproducts are
there and can be used, e.g., in order to introduce an additional law for
four-momenta. In this situation the $\kappa$-deformed coproducts are not
necessary the privileged ones and the additional law can be determined by,
e.g., the twisted coproducts \cite{BLT2}.
However $\kappa$-deformed coproducts are consistent with  $\kappa$-Minkowski commutation relations and give  $\kappa$-Minkowski spacetime module algebra structure \cite{BP}, \cite{Meljanac0702215}, \cite{BP4}.
\end{remark}

\section{Different algebraic form of $\protect\kappa$-Poincar\'{e} Hopf\\ algebra}
The mathematical formalism of quantum groups requires us to deal with formal
power series. Therefore the parameter $h$ has to stay formal, i.e.,
undetermined. Particularly, we can not assign any specific numerical value
to it and consequently any fundamental constant of nature, like, e.g., the
Planck mass cannot be related with it. Nevertheless there exists a method to
remedy this situation and allow $h$ to admit constant value. Therefore the
$\kappa$-Poincar\'{e} quantum group with $h$-adic topology as described above, is not the only possible version.\\
The idea is to reformulate a Hopf algebra in a way which allows to hide inf\/inite series on the abstract level. In the traditional Drinfeld-Jimbo
approach this is always possible  by using the so-called specialization method
or q-deformation and introduce the so-called ``$q$-analog'' version\footnote{In some physically motivated papers a phrase ``$q$-deformation'' is considered as an equivalent of Drinfeld--Jimbo deformation. In this section we shall, following general terminology of~\cite{Klimyk,Chiari}, distinguish between ``$h$-adic'' and ``$q$-analog'' Drinfeld--Jimbo deformations since they are not isomorphic.}, which allows us to f\/ix value of the parameter $h=\kappa$. Afterwards they become usual complex algebras without the $h$-adic topology. In the case of Drinfeld--Jimbo deformation this is always possible.
 As a result one obtains a one-parameter family of isomorphic Hopf algebras enumerated by numerical parameter $\kappa$.
We shall describe this procedure for the case of $\kappa$-Poincar\'{e} (cf.~\cite{BP2}\footnote{Similar construction has been performed in~\cite{Stachura} in the bicrossproduct basis (see also~\cite{Z}).}).

The main idea behind $q$-deformation is to introduce two mutually inverse group-like elements hiding inf\/inite power series. Here we shall denote them as $\Pi_0$, $\Pi_0^{-1}$: $\Pi_0^{-1}\Pi_0=1$. Fix~\mbox{$\kappa\in \mathbb{C}^*$}. Denote by $\U_\kappa(\mathfrak{io}(1,3))$ a universal, unital and
associative algebra over complex numbers generated by eleven generators $(M_i, N_i, P_i,
\Pi_0, \Pi_0^{-1})$ with the following set of def\/ining relations:
\begin{gather}
\nonumber \Pi_0^{-1}\Pi_0=\Pi_0\Pi_0^{-1}=1,\qquad
[P _i, \Pi_0]=[M_j, \Pi_0]=0 ,\qquad [N_i, \Pi_0]=-\frac{\imath}{\kappa}P_i, 
\\
 [N_i, P_j]=-\frac{\imath}{2}\delta_{ij}\left(\kappa(\Pi_0-\Pi_0^{-1})+\frac{1}{\kappa}\vec{P}^2 \Pi_0^{-1}\right),
\label{11alg}
\end{gather}
where remaining relations between $(M_i,N_i,P_i)$ are the same as in the Poincar\'{e} Lie algeb\-ra~(\ref{L1}) and $[P_i,P_j]=0,\qquad [M_{j},P_{k}]=\imath \epsilon _{jkl}P_{l}, $.
Commutators with $\Pi_0^{-1}$ can be easily calculated from (\ref{11alg}), e.g., $[N_i,\Pi_0^{-1}]=\frac{i}{\kappa}P_i\Pi_0^{-2}$.
Notice that all formulas contain only f\/inite powers of the numerical parameter~$\kappa$.
The  quantum algebra structure $\U_\kappa(\mathfrak{io}(1,3))$ is provided by def\/ining coproduct, antipode and counit, i.e.\ a Hopf algebra structure.
We set
\begin{gather*}
 \Delta_\kappa\left( M_{i}\right) =M_{i}\otimes 1+1\otimes M_{i},
\nonumber\\
 \Delta_\kappa\left( N_{i}\right) =N_{i}\otimes
1+\Pi_0^{-1}\otimes N_{i}-\frac{1}{\kappa}\epsilon _{ijm}
P_{j}\Pi_0^{-1}\otimes M_{m},
\\
\Delta_\kappa\left(P_{i}\right) =P_{i}\otimes
\Pi_0+1\otimes P_{i},  \qquad
\Delta_\kappa(\Pi_0)=\Pi_0\otimes \Pi_0,\qquad \Delta_\kappa
(\Pi_0^{-1})=\Pi_0^{-1}\otimes \Pi_0^{-1},\nonumber
\end{gather*}
and the antipodes
\begin{gather*}
S_\kappa(M_{i})=-M_{i},\qquad S_\kappa(N
_{i})=-\Pi_0N_{i}-\frac{1}{\kappa}\epsilon _{ijm}P_{j}M_{m},\qquad S_\kappa(P_{i})=-P_{i}\Pi_0^{-1},\\
S_\kappa(\Pi_0)=\Pi_0^{-1}, \qquad S_\kappa(\Pi_0^{-1})=\Pi_0.
\end{gather*}

To complete the def\/inition one leaves counit $\epsilon$ undeformed, i.e., $\epsilon(A)=0$ for $A\in(M_i, N_i, P _i)$ and
$\epsilon(\Pi_0)=\epsilon(\Pi_0^{-1})=1$.

For the purpose of physical interpretation, specialization of the parameter $\kappa$ makes possible its identif\/ication with some physical constant of Nature: typically it is quantum gravity scale~$M_Q$. However we do not assume a priori that this is Planck mass (for discussion see Part II, and e.g.,~\cite{BGMP}). Particularly, the value of $\kappa$ depends on a system of units we are working with. For example, one should be able to use natural (Planck) system of units, $\hbar=c=\kappa=1$, without changing mathematical properties of the underlying physical model.
From mathematical point of view, this means that the numerical value of parameter $\kappa$ is irrelevant. And this is exactly the case we are dealing with.
For dif\/ferent numerical values of $\kappa\neq 0$ the Hopf algebras $\U_\kappa(\mathfrak{io}(1,3))$ are isomorphic, i.e.\ $\U_\kappa(\mathfrak{io}(1,3))\cong\U_1(\mathfrak{io}(1,3))$. One can see that by re-scaling $P_\mu\mapsto \frac{1}{\kappa} P_\mu$.

One can also introduce the generator $P _0$ as
\be\label{qP_0}
P _0\equiv P_0(\kappa)\doteq
\frac{\kappa}{2}\left(\Pi_0-\Pi_0^{-1}(1-\frac{1}{\kappa^2}\vec{\P }%
^2)\right) \ee
Thus subalgebra generated by elements $(M_i, N_i, P _i,P_0)$ is, of course, isomorphic to
the universal envelope of the Poincar\'{e} Lie algebra, i.e. $\U_{\mathfrak{io}(1,3)}\subset\U_\kappa(\mathfrak{io}(1,3))$. But this is not a Hopf subalgebra.
Therefore, the original (classical) Casimir element $\mathcal{C}\equiv-P^2
=P _0^2-\vec{P }^2$ has, in terms of  the generators $(\Pi_0, \Pi_0^{-1}, \vec{P})$,
rather complicated form.  We can adopt to our disposal a simpler (central) element instead:
\be
\mathcal{C}_\kappa\doteq\kappa^2(\Pi_0^{}+\Pi_0^{-1}-2)-\vec{P }%
^2 \Pi_0^{-1} \ee
which is responsible for deformed dispersion relations \cite{BP}. For comparison see, e.g., \cite{Casimir}.
Both elements are related by
\begin{equation}
\mathcal{C} =\mathcal{C}_\kappa\left(1+\frac{1}{4\kappa^2}\mathcal{C}_\kappa\right) \quad\mbox{and}\quad
\sqrt{1+\frac{1}{\kappa^2}{\mathcal{C}}}=1+\frac{1}{2\kappa^2}\mathcal{C}_\kappa
\end{equation}
We will concentrate on consequences of this result later on in Part II.

%
%
%


\chapter{$\kappa$-Minkowski as covariant quantum spacetime and DSR algebras}

In the classical case the physical symmetry group of Minkowski spacetime is Poincar\'{e} group and in deformed case analogously quantum $\kappa$-Poincar\'{e}  group should be desired symmetry group for quantum $\kappa$-Minkowski spacetime \cite{MR,Z}. However $\kappa$-Minkowski module algebra studied in the Chapter 2 has been obtained as covariant space over the $\U_{\mathfrak{igl}}^\mathcal{F}[[h]]$ Hopf algebra. Moreover, presented twist constructions
do not apply to Poincar\'{e} subalgebra, what is due to the fact that $\kappa$-Poincar\'{e} algebra is a quantum deformation of Drinfeld--Jimbo type corresponding to inhomogeneous classical r-matrix as it was already explained before. But since one does not expect to obtain $\kappa$-Poincar\'{e} coproduct by cocycle twist, $\kappa$-Minkowski spacetime  has to be introduced as a covariant quantum space in a way without using twist and twisted star product, as it will be shown in this Chapter.
Having defined $\kappa$-Minkowski spacetime as Hopf module algebra, one can extend $\kappa$-Poincar\'{e} algebra by $\kappa$-Minkowski using crossed (smash) product construction. In particular, this contains deformation of Weyl subalgebra, which is crossed-product of $\kappa$-Minkowski algebra with algebra of four momenta. One should notice that there have been other constructions of $\kappa$-Minkowski algebra ($\kappa$-Poincar\'{e} algebra) extension, by introducing $\kappa$-deformed phase space, e.g., in~\cite{LukNow,ncphasespace,AF} (see also~\cite{JKG}, the name ``DSR algebra'' was f\/irstly proposed here and it comes from the following interpretation: the different realizations of this algebra lead to different doubly (or deformed) special relativity models with different physics encoded in deformed dispersion relations.). In most of the cases it appeared  as Heisenberg double construction however we would like to point out that it is not the only way to obtain DSR algebra, we can obtain it by smash- product construction as well. Particularly this will lead to a deformation generated by momenta and coordinates of Weyl (Heisenberg) subalgebra. One of the advantages of using smash-product construction is that we leave an open geometrical interpretation of $\kappa$-Minkowski spacetime.
 Examples of $\kappa$-deformed phase space obtained by using crossproduct construction can be found also in~\cite{Kos,Nowicki}, however only for one specif\/ic realization related with the bicrossproduct \mbox{basis}.\\
This chapter comprises dif\/ferent versions of quantum Minkowski spacetime algebra.
We point out the main dif\/ferences between those versions, i.e. $h$-adic and $q$-analog, and we will construct DSR algebras for both of them which will contain deformed phase spaces as subalgebras.  We show explicitly that DSR models depend upon various Weyl algebra realizations  of $\kappa$-Minkowski spacetime one uses. The most interesting one from physical point of view seems to be version of $\kappa$-Minkowski spacetime with f\/ixed value of parameter $\kappa$ ($q$-analog). In this case $\kappa$-Minkowski algebra is an universal envelope of solvable Lie algebra without $h$-adic topology. This version allows us to connect the parameter $\kappa$ with some physical constant, like, e.g., quantum gravity scale or Planck mass and all the realizations might have physical interpretation. Also this version is used in some Group Field Theories \cite{Oriti} which might be connected with loop quantum gravity and spin foams approach.

In this chapter we will start with
``$h$-adic'' version of $\kappa$-Minkowski spacetime. Then using smash product construction, described in Chapter 1, we will obtain ``$h$-adic'' DSR algebra from
 ``$h$-adic'' universal $\kappa$-Minkowski module algebra with ``$h$-adic'' $\kappa$-Poincar\'{e} (with its classical action). Subsequently (in Section 4.2) we perform similar construction for q-analog case. 


\section{``$h$-adic'' universal $\kappa$-Minkowski spacetime\\ and ``$h$-adic'' DSR algebra.}
Since $\kappa$-Poincar\'{e} Hopf algebra presented in Section 3.1 is not obtained by the twist deformation one needs a~new construction of $\kappa$-Minkowski spacetime as a $U_{\mathfrak{io}(1,3)}[[h]]^{\rm DJ}$ (Hopf) module algebra.

Following Remark~\ref{Poisson2} and Example~\ref{example8}, we are in position to introduce $\kappa$-Minkowski spacetime as a universal $h$-adic  algebra
$\U_h(\mathfrak{an}^3)$ with def\/ining relations\footnote{We take Lorentzian metric $\eta_{\mu\nu}={\rm diag}(-,+,+,+)$ for rising and lowering indices, e.g., $\X_\lambda=\eta_{\lambda\nu}\X^\nu$.}:
\begin{gather}\label{kMcal}
[\X_0, \X_i]=-\imath h \X_i,\qquad [\X_j, \X_k]=0
\end{gather}
Due to universal construction there is a $\mathbb{C}[[h]]$-algebra epimorphism of $\U_h(\mathfrak{an}^3)$ onto $\mathfrak{X}^4[[h]]^\mathcal{F}$ for any $\kappa$-Minkowski twist~$\mathcal{F}$. Before applying smash product construction one has to assure that  $\U_h(\mathfrak{an}^3)$ is $\U_{\mathfrak{io}(1,3)}[[h]]^{\rm DJ}$ ($\kappa$-Poincar\'{e}) covariant algebra. It can be easily done by exploiting  the classical action
(see also Example~\ref{ortho})
\begin{gather}
\label{ClassAction1}
P_{\mu }\triangleright \X^{\nu }=-\imath\delta _{\mu }^{\nu },\qquad
M_{\mu \nu }\triangleright \X^{\rho }= \imath \X_{\nu }\delta _{\mu }^{\rho }-
\imath \X_{\mu }\delta _{\nu }^{\rho}.
\end{gather}
and by checking out consistency conditions similar to those introduced in Example~\ref{example2}, equation~(\ref{smash2}).
\begin{remark}\label{shifted} As it was already noticed, the algebra $\U_h(\mathfrak{an}^{3})$ is dif\/ferent than $\U_{\mathfrak{an}^3}[[h]]$ (see Remark \ref{Poisson2}). Assuming $P_\mu \triangleright \X^\nu=\imath a \delta^\nu_\mu$ and $[\X^0, \X^k]=\imath b \X^k$ one gets from (\ref{smash2}) and the $\kappa$-Poincar\'{e} coproduct the following relation: $b=-ah$. Particulary, our choice $a=-1$ (cf.~(\ref{ClassAction1})) does imply $b=h$. In contrast $b=1$ requires $a=h^{-1}$ what is not possible for formal parameter~$h$.
This explains why the classical action cannot be extended to the unshifted generators $X^\nu$ and
to the entire algebra $\U_{\mathfrak{an}^3}[[h]]$. The last one seemed to be the most natural candidate for $\kappa$-Minkowski spacetime algebra in the $h$-adic case.
\end{remark}

With this in mind one can def\/ine now DSR algebra as a crossed product extension of $\kappa$-Minkowski and $\kappa$-Poincar\'{e} algebras (\ref{L1})--(\ref{L3}), (\ref{kMcal}). It is determined  by the following $\U_h(\mathfrak{an}^3)\rtimes \U_{\mathfrak{io}(1,3)}[[h]]^{\rm DJ}$ cross-commutation relations:
\begin{gather}\label{L5}
[ M_{i},\X_{0}]=0\quad \lbrack N_{i},%
\X_{0}]=-\imath \X_{i}-\imath h N_{i},
\\
\label{L6}
[ M_{i},\X_{j}]=\imath \epsilon _{ijk}\X
_{k},\qquad [ N_{i},\X_{j}]=-\imath \delta _{ij}%
\X_{0}+\imath h\epsilon _{ijk}M_{k},
\\ \label{L7}
[P_k, \X_0]=0 , \qquad [P_k, \X_j]=-\imath\delta_{jk}\left(hP_0
+\sqrt{1-h^{2}P^2}\right),
\\  \label{L8}
[P_0, \X_j]=-\imath h P_j , \qquad [P_0, \X_0]=\imath\sqrt{1-h^{2}P^2}.
\end{gather}

\begin{remark}
Relations (\ref{L5}), (\ref{L6}) can be rewritten in a covariant form:
\begin{gather*}  
 \left[ M_{\mu\nu}, \X_\lambda\right] = \imath \eta_{\mu\lambda}\X_\mu
-\imath \eta_{\nu\lambda}\X_{\mu} -\imath  h a_\mu M_{\nu\lambda}+\imath
ha_\nu M_{\mu\lambda},
\end{gather*}
where $a_\mu=\eta_{\mu\nu}a^\nu$, $(a^\nu)=(1, 0, 0, 0)$.\\
This form is suitable for generalization to arbitrary spacetime dimension $n$.
\end{remark}

\begin{remark}
The following change of generators:
\[
\tilde{\X}_0=\X_0,\qquad \tilde{\X}_j=\X_j+hN_j
\] provides Snyder type of commutation relation and leads to the algebra which looks like a central extension obtained in~\cite{Chryss}
\begin{gather*}\nonumber
[ P_{\mu },\tilde{\X}_{\nu }] =-i\eta_{\mu\nu }M,\qquad
[ P_{\mu },P_{\nu }] =0,\qquad
[\tilde{\X}_\mu ,\tilde{\X}_{\nu }] =\imath h^2M_{\mu \nu },
\\
[ P_{\mu },M] =0,\qquad
[\tilde{X}_\mu,M] =-\imath h^2P_{\mu },
\end{gather*}
where $M=\sqrt{1-h^2P^{2}}$ plays the role of central element and $M_{\mu\nu}$ are Lorentz generators.
\end{remark}

The center of the algebra $\U_{\mathfrak{io}(1,3)}[[h]]$ is an algebra over $\mathbb{C}[[h]]$.
Therefore one can consider a~deformation of the Poincar\'e Casimir operator $P^2$. In fact, for any power series of two variables $f(s,t)$ the element: $\mathcal{C}_f=f(P^2,h)$ belongs to the center as well. The reason of using deformed Casimir instead of the standard one is that
the standard one fails, due to noncommutativity of~$\X_\mu$, to satisfy the relation $ [P^{2},\X^{\mu }]=2P^{\mu }$. Considering deformed Casimir one has freedom to choose the form of the function~$f$. The choice
\[
\mathcal{C}_{h}=2h^{-2}\left(\sqrt{1+h^2P^2}-1\right)
\]
allows to preserve the classical properties:
 $
[M_{\mu \nu },\mathcal{C}_h]=\lbrack \mathcal{C}_h,P_{\mu }]=0$, $[\mathcal{C}_h,\X_{\mu
}]=2P_{\mu }$.

The standard Poincar\'{e} Casimir  gives rise to undeformed dispersion relation:
\begin{gather}
P^{2}+m_{ph}^{2}=0,  \label{nondef}
\end{gather}
where $m_{ph}$ is mass parameter (physical mass, which comes from representation of the
Poincar\'{e} algebra). The second Casimir operator leads to deformed dispersion relations
\begin{gather}
\mathcal{C}_h+m_{h}^{2}=0  \label{def}
\end{gather}
with the deformed mass parameter $m_h$. Relation between this two mass parameters has the following form \cite{BGMP,s1}:
\[
 m^{2}_{ph}=m_h^{2}\left(1-\frac{h^2}{4}m_{h }^{2}\right).
\]
Clearly, DSR algebra $\U_h(\mathfrak{an}^3)\rtimes \U_{\mathfrak{io}(1,3)}[[h]]^{\rm DJ}$ as introduced above is a deformation of the algebra (\ref{isog1})--(\ref{isog3}) from Example (\ref{ortho}) for the Lorentzian (when $g_{\mu\nu}=\eta_{\mu\nu}$) signature; i.e., the latter can be obtained as a limit of the former when $h\rightarrow 0$. Moreover,  $\U_h(\mathfrak{an}^3)\rtimes \U_{\mathfrak{io}(1,3)}[[h]]^{\rm DJ}$ turns out to be a quasi-deformation of $\mathfrak{X}^4\rtimes \U_{\mathfrak{io}(1,3)}[[h]]$ in a sense of Propositions~\ref{prop1},~\ref{prop2}\footnote{Now we cannot make use of the twisted tensor technique from the proof of Proposition~\ref{prop1}. However, we believe, that analog of Proposition~\ref{prop1} is valid for Drinfeld--Jimbo type deformations as well.}. It means that the deformed algebra can be realized  by  nonlinear change of generators in the undeformed one. To this aim we shall use explicit change of generators inspired by covariant Heisenberg realizations proposed f\/irstly in \cite{Meljanac0702215}:
\begin{gather}\label{naturalrealization}
\X^{\mu}=x^{\mu}\left( hp_0 +\sqrt{1-h^{2}p^2}\right)-hx^0p^\mu,\qquad M_{\mu \nu }=M_{\mu\nu},\qquad P_\mu=p_\mu
  \end{gather}%
in terms of undeformed Weyl--Poincar\'e algebra (\ref{isog1})--(\ref{isog3}) satisfying the canonical commutation relations:
\begin{gather}
\label{unWeyl}
\left[ p_{\mu },x_{\nu }\right]=-\imath \eta _{\mu \nu }, \qquad
\left[ x_{\mu },x_{\nu }\right]=
\left[ p_{\mu },p_{\nu }\right]=0.
\end{gather}
\begin{proof}The transformation $(x^\mu,M_{\mu\nu},p_\mu)\longrightarrow(\X^\mu,M_{\mu\nu},P_\mu)$ is invertible. It is subject of easy calculation that generators (\ref{naturalrealization}) satisfy DSR algebra (\ref{L5})--(\ref{L8}) provided that $(x^\mu,M_{\mu\nu},p_\mu)$ satisfy Weyl extended Poincar\'{e} algebra  (\ref{isog1})--(\ref{isog3}). This f\/inishes the proof. Note that in contrast to~\cite{Meljanac0702215} we do not require Heisenberg realization for $M_{\mu\nu}$.\footnote{We recall that in Heisenberg realization $M_{\mu\nu}=x_\mu p_\nu-x_\nu p_\mu $.} Particulary, deformed and undeformed Weyl algebras are $h$-adic isomorphic.
\end{proof}
Above statement resembles, in a sense, the celebrated Coleman--Mandula theorem~\cite{CM}: there is no room for non-trivial combination of Poincar\'{e} and Minkowski spacetime coordinates.

However there exist a huge amount of other Heisenberg realizations of the $\U_h(\mathfrak{an}^3)\rtimes \U_{\mathfrak{io}(1,3)}[[h]]^{\rm DJ}$ DSR algebra. They lead to Heisenberg representations. Particularly important for further applications is the so-called non-covariant family of realizations which is labeled by two arbitrary (analytic) functions~$\psi$,~$\phi$. We shall write explicit form of all DSR algebra generators in terms of undeformed Weyl algebra $\mathfrak{W}^4[[h]]$-generators $(x^\mu, p_\nu)$. Before proceeding further let us introduce a convenient notation: for a given (analytic) function $f(t)=\sum f_n t^n \in\mathfrak{W}^4$ of one variable we shall denote by
\begin{gather}
\label{h-analytic} \tilde{f}=f(-hp_0)=\sum f_n (-1)^np_0^n h^{n}
\end{gather}
the corresponding element $\tilde f\in\mathfrak{W}^4[[h]]$.
Also we will use the following shortcuts:
\begin{gather}\label{Psi}
\Psi(t)=\exp \left(
\int_{0}^{t}\frac{dt'}{\psi (t')}\right) ,\qquad \Gamma(t) =\exp \left(
\int_{0}^{t}\frac{\gamma (t') dt'}{\psi (t')}\right)
\end{gather}
for an arbitrary choice of $\psi$, $\phi$ such that $\psi (0)=\phi (0)=1$ and $\gamma=1+(ln\phi)'\psi$ with notation $f'=\frac{df}{d(-hp_0)}$.

Now generators of  deformed Weyl algebra $\U_h(\mathfrak{an}^3)\rtimes \mathfrak{T}[[h]]^{\rm DJ}$ admit the following Heisenberg realization:
\begin{gather}\label{realization}
\X^{i}=x^{i}\tilde{\Gamma}\tilde{\Psi} ^{-1} ,\qquad \X^{0}=x^{0}\tilde{\psi}  -hx^{k}p_{k}\tilde{\gamma}
\end{gather}
together with
\begin{gather}\label{P_0}
P_{i}=p_{i}\tilde{\Gamma}^{-1} , \qquad
P_{0} =h^{-1}\frac{\tilde{\Psi}^{-1}-\tilde{\Psi}}{2} +\frac{1}{2}h\vec{p}\,{}^2 \tilde{\Psi}\tilde{\Gamma}^{-2}.
\end{gather}
Hermiticity of $\X^{\mu }$ requires $\psi ^{\prime }=-\frac{1}{3}\gamma $, i.e.~$\Gamma =\psi ^{-\frac{1}{3}}$.
The momenta (\ref{P_0}) are also called Dirac derivatives \cite{DJMTWW,MSSKG,BP,Meljanac0702215}. 

The remaining $\U_h(\mathfrak{an}^3)\rtimes \U_{\mathfrak{io}(1,3)}[[h]]^{\rm DJ}$ generators are
just rotations and Lorentzian  boosts:
\begin{gather}\label{M_i}
M_i = \imath\epsilon_{ijk}x_jp_k=\imath\epsilon_{ijk}\X_jP_k\tilde{\Psi},
\\
N_i =  x_{i}\tilde{\Gamma}\frac{\tilde{\Psi}^{-1}-\tilde{\Psi}}{2h}-x_{0}p _{i}\tilde{\psi}\tilde{\Psi}\tilde{\Gamma}^{-1} +
\frac{1}{2}hx_{i}\vec{p}\,{}^2\tilde{\Psi}\tilde{\Gamma}^{-1}-hx^{k}p_{k}p_{i}
\tilde{\gamma}\tilde{\Psi}\tilde{\Gamma}^{-1}\nonumber\\
\phantom{N_i}{} =(\X_iP_0-\X_0P_i)\tilde{\Psi}.\label{N_i}
\end{gather}
The deformed Casimir operator reads as\footnote{Notice that $\mathcal{C}_h=\sum\limits_{k=0}^\infty c_kh^k$ is a well-def\/ined formal power series with entries $c_k\in\U_{\mathfrak{io}(1,3)}$.}:
\begin{gather}\label{C}
\mathcal{C}_h=h^{-2}\big(\tilde{\Psi}^{-1}+\tilde{\Psi}-2\big)-\vec{p}\,{}^2\tilde{\Psi}\tilde{\Gamma}^{-2}.
\end{gather}

\begin{remark}
It is worth to notice that besides $\kappa$-Minkowski coordinates $\X^\mu$ one can also introduce usual (commuting) Minkowski - like coordinates $\tilde x^\mu\doteq \X^\mu \Psi$ which dif\/fer from $x^\mu$. The rotation and boost generators expressed above take then a familiar form:
 \begin{gather*}
  M_i = \imath\epsilon_{ijk}\tilde{x}_jP_k
\qquad\mbox{and}\qquad  N_i =(\tilde{x}_iP_0-\tilde{x}_0P_i).
\end{gather*}
\end{remark}

One can show that above realization (\ref{realization})--(\ref{N_i}) has proper classical limit:\\ $\X^{\mu } =x^\mu$, $M_{\mu \nu }=x_\mu p_\nu-x_\nu p_\mu$, $P_\mu=p_\mu$ as $h \rightarrow 0$ in terms of the canonical momentum and position $(x^\mu, p_\nu)$ generators (\ref{unWeyl}). In contrast the variables $(\tilde x^\mu, P_\nu)$ are not canonical. Moreover, phy\-si\-cally measurable frame is provided by  the canonical variables (\ref{unWeyl}), which is important in DSR theories interpretation. For discussion of the DSR phenomenology, see, e.g.,~\cite{ACdsrmyth,Liberati} and references therein.

\begin{remark}
It is important to notice that all twisted realizations (\ref{Atlh2}), (\ref{Jtlh}) from the previous section are special case of the one above (\ref{realization}) for special choice of the functions~$\psi$ and $\phi$ (or equivalently $\gamma$). More exactly, Abelian realization (\ref{Atlh2}) one gets taking constant functions $\psi=1$ and $\gamma=s$ (equivalently $\phi=e^{-h(s-1)p_0}$) and Hermiticity of $\hat{x}^0$ forces $\gamma=0$.
Jordanian realization~(\ref{Jtlh}) requires $\psi=1-hrp_0$ and $\gamma=0$ ($\phi=e^{hp_0}$) (cf.~\cite{BP} for details).
\end{remark}

\textbf{Representations versus realizations}\\
As an example of the above realizations let us consider representation of the Poincar\'{e}
Lie algebra in a Hilbert space $\mathfrak{h}$. In fact one has realization of the enveloping algebra $\U_{\mathfrak{io}(1,3)}$  in  space of linear operators over $\mathfrak{h}$, i.e. $\mathscr{L}(\mathfrak{h})$. This leads to homomorphism $\rho$ of the corresponding h-adic extension $\rho^h$:
 $\U_{\mathfrak{io}(1,3)}[[h]]^{\rm DJ}\shortrightarrow \mathscr{L}(%
\mathfrak{h})[[h]]$  (cf. Remark \ref{h-repr}). 
We speak about representations when linear space is determined and about realizations when we determine homomorphism from one algebra to another. Below we shall list few examples of Weyl algebra realizations. We will see in next section that true Hilbert space representation requires specification of the parameter $h$.
\begin{example}
Let us choose: $\Psi=\exp(-hp_0)$, $\Gamma=\exp(-hp_0)$ in the formulae (\ref{M_i}-\ref{C}). Then
the realization of the Poincar\'{e} Lie algebra has the form:
\begin{gather*}
\nonumber
M_{i}=\frac{1}{2}\epsilon _{ijm}(x_{j}p _{m}-x_{m}p_{j}) ,
\quad N_{i}={%
\frac{1}{2h}}x_{i} \big( e^{-2hp_{0}}-1\big) +
x_{0}p _{i}-\frac{\imath h}{2}x_{i}\vec{p}\,{}^2+\imath h x^{k}p_{k} p_{i},\\
P_{i}=p_{i}e^{hp_{0}}, \qquad P_{0}=h^{-1} \sinh (hp_{0})+\frac{h}{2}\vec{p}\,{}^{2}e^{hp_{0}}.
\end{gather*}
Moreover, one can easily see that the operators $(M_i, N_i, p_\mu)$ constitute the bicrossproduct basis \cite{MR}. Therefore, dispersion relation expressed in terms of the canonical momenta $p_\mu$ recovers the standard version of doubly special relativity theory (cf.\ formulae in \cite{AC,BACKG})
\begin{gather*}
\mathcal{C}_h=h^{-2}\big(e^{-{\frac{1}{2}}hp_{0}}-e^{{
\frac{1}{2}}hp_{0}}\big)^{2}-\vec{p}\,{}^2{e}^{hp_{0}}
\end{gather*}
and implies
\begin{gather*}
m^{2}=\left[2h^{-1}
\sinh \left(\frac{hp_{0}}{2}\right)\right]^{2}-\vec{p}\,{}^{2}e^{hp_{0}}.
\end{gather*}
\end{example}
One can notice that the above boost generators (\ref{N_i}) in this realization are not Hermitian. So we can consider another choice of $\Psi,\Gamma$.
\begin{example}
The Hermitian realization of the $\kappa$ -Poincar\'{e} algebra can be
determined as: \bea
M_{i}=\frac{1}{2}\epsilon _{ijm}(x_{j}p _{m}-x_{m}p_{j}) ,
\quad N_{i}={%
\frac{1 }{2h}}x_{i}\,\left( e^{-hp_{0}}-e^{hp_{0}}\right) +\imath x_{0}p _{i}e^{-hp_{0}}+ x_{i}\vec{p}\,{}^{2}
\,e^{-hp_{0}} 
\eea\bea
P_{i}=p_{i}, \qquad P_{0}=h^{-1} \sinh (hp_{0}
)+\frac{h}{2}\vec{p}^{2}e^{-hp_{0}} \label{P0}
\eea
and the dispersion relation is%
\begin{equation}
m_{0}^{2}=[2h^{-1} \sinh (\frac{hp_{0}}{2 })]^{2}-\vec{p}^{2}e^{-h
p_{0}}
\end{equation}
\end{example}
\begin{example}
In the minimal case, connected with Weyl- Poincar\'{e} algebra, in physical $n=4$  dimensions \cite{BP} the representation of the Poincar\'{e} algebra $%
(M_{i},N_{i},\P_{\mu })$ reads as \begin{equation}
M_{i}=-\frac{\imath }{2}\epsilon _{ijm}(x_{j}\partial _{m}-x_{m}\partial
_{j})
\end{equation}%
\begin{eqnarray}
N_{i} &=&-x_{i}\left[{\frac{p_{0}}{2}}(2+
hp_{0})-\vec{p}\,{}^{2}\right] \,\left( 1+hp_{0}\right)
^{-1} + x_{0}p_{i}
\end{eqnarray}
The 
deformed Casimir operator
\begin{equation}
\mathcal{C}_h=\frac{p_{0}^{2}-\vec{p}^{2}}{ 1+hp_{0}}
\end{equation}
leads to following dispersion relation
\begin{equation}
m_{0}^{2}\left( 1+hp_{0}\right) =p_{0}^{2}-\vec{p}^{2}
\end{equation}
which is not deformed for (free) massless particles.
\end{example}

\section{$\kappa$-Minkowski spacetime in q-analog version\\ and canonical DSR algebra}

In section 3.2 different from standard form of $\kappa$-Poincare algebra was introduced, and we have called it q-analog (canonical) version. In this section we shall f\/ind its $\kappa$-Minkowski counterpart. The corresponding (the so-called canonical) version of $\kappa$-Minkowski spacetime algebra seems to be very interesting from physical point of view.
In this model  $\kappa$-Minkowski algebra is a universal enveloping algebra of a solvable Lie algebra. This algebra has been also used in some recently postulated approaches to quantum gravity, e.g., Group Field Theory (see \cite{Oriti} and references therein).

In this version $\kappa$-Minkowski spacetime commutation relations have the Lie algebra form:
\be\label{kM}
[X^0,X^i]=\frac{\imath}{\kappa}X^i,\qquad [X^i,X^j]=0
\ee
with unshifted generators (see Remark \ref{shifted}): again the numerical factor $\kappa$ can be put to $1$ (the value of $\kappa$ is unessential) by re-scaling, i.e.\ the change of basis in the Lie algebra\footnote{It was not possible in $h$-adic case with the formal parameter $h=\kappa^{-1}$.}
$X^{0}\mapsto \kappa X^{0}$.
Relations (\ref{kM}) def\/ine four-dimensional solvable Lie algebra $\mathfrak{an}^3$ of rank~3, cf.\ Example~\ref{example8}.

\begin{remark}
The Lie algebra (\ref{kM}) turns out to be a Borel (i.e.\ maximal, solvable) subalgebra in de Sitter Lie algebra $\mathfrak{o}(1,4)$. This fact can  be seen via realization:
\be\nonumber
 X^0\AC M^{04}, \qquad X^i\AC M^{0i}-M^{i4}.
 \ee
and it has been explored in \cite{Oriti,Iwasawa}.
\end{remark}
Now we are in position to introduce the canonical (in terminology of~\cite{Kon}, cf.\ Remark~\ref{Poisson2}) $\kappa$-Minkowski spacetime $\M^4_\kappa=\mathcal{U}_{\mathfrak{an}^3}$ as a universal enveloping algebra of the solvable Lie algebra~$\mathfrak{an}^3$ studied before in~\cite{Sitarz,Oriti,Andrea}. In order to assure that $\M^4_\kappa$ is  $\U_\kappa(\mathfrak{io}(1,3))$-module algebra one has to check the consistency conditions (cf.\ Example~\ref{example2} and Remark~\ref{shifted}). Then utilizing the crossed product construction one obtains canonical DSR algebra $\M^4_\kappa\rtimes\U_\kappa(\mathfrak{io}(1,3))$.

\subsection*{Canonical DSR algebra}

The Hopf algebra $\U_\kappa(\mathfrak{io}(1,3))$ acts covariantly on $\M^4_\kappa$ by means of the classical action on the generators:
\begin{gather}
\label{ClassAction}
P_{i }\triangleright X^{\nu }=-\imath\delta _{i }^{\nu },\qquad
M_{\mu \nu }\triangleright X^{\rho }= \imath X_{\nu }\delta _{\mu }^{\rho }-
\imath X_{\mu }\delta _{\nu }^{\rho},\qquad
\Pi_0^{\pm 1}\triangleright X^\mu=X^\mu\mp\imath \kappa^{-1}\delta_0^\mu.
\end{gather}
As a result  cross-commutation relations determining the canonical   DSR algebra $\M^4_\kappa\rtimes\U_\kappa(\mathfrak{io}(1,3))$ take the following form:
\begin{equation}\label{MkM}
[ M_{i},X_{0}]=0,\quad [ N_{i},%
X_{0}]=-\imath X_{i}-\frac{\imath }{\kappa }N
_{i},  \quad
[ M_{i},X_{j}]=\imath \epsilon _{ijl}X
_{l},\quad
 [ N_{i},X_{j}]=-\imath \delta _{ij}
X_{0}+\frac{\imath }{\kappa}\epsilon _{ijl}M_{l},
\end{equation}
\begin{equation}\label{phsp1}
[ P_{k},X_{0}]=0,\qquad [ P_{k},
X_{j}]=-\imath \delta _{jk}\Pi_0,\qquad
[X_i, \Pi_0]=0 ,\qquad [X_0, \Pi_0]=-\frac{\imath}{\kappa}\Pi_0.
\end{equation}
It is easy to see again that dif\/ferent values of the parameter $\kappa$ give rise to isomorphic al\-geb\-ras. One should notice that the above ``$q$-analog'' version of DSR algebra cannot be pseudo-deformation type as the one introduced in the ``$h$-adic'' case.
 A subalgebra of special interest is, of course,
  a canonical $\kappa$-Weyl algebra $\mathfrak{W}^4_\kappa$ (canonical $\kappa$-phase space) determined by the relations~(\ref{kM}) and (\ref{phsp1}).
Heisenberg like realization of the Lorentz generators in terms of $\mathfrak{W}^4_\kappa$-generators can be found as well:
\begin{gather*}
 M_{i} =\epsilon _{ijk}\left[ X^{j}\Pi _{0}^{-1}+2\kappa^{-1}X^{0}P^{j}\big( \Pi _{0}^{2}+1-\kappa^{-2}\vec{P}^{2}\big) ^{-1}\right] P_{k},
\\
N_{i}=X^{0}P^{i}\frac{3-\Pi _{0}^{-2}\big(1-\kappa^{-2}
\vec{P}^{2}\big)}{\big( \Pi _{0}+\Pi _{0}^{-1}\big(1-
\kappa^{-2}\vec{P}^{2}\big)\big) }+\frac{\kappa}{2}X^{i}\big(
1-\Pi _{0}^{-2}\big(1-\kappa^{-2}\vec{P}^{2}\big)\big).
\end{gather*}
The deformed Casimir operator reads
\begin{gather}\label{qCas}
\mathcal{C}_{\kappa} =\kappa^{2}\big(\Pi _{0}+\Pi
_{0}^{-1}-2\big)-\vec{P}^{2}\Pi _{0}^{-1}.
\end{gather}
Moreover, one can notice that undeformed Weyl algebra $\mathfrak{W}^4$: $(x^\mu,p_\nu)$ is embedded in $\mathfrak{W}^4_\kappa$ via:
\begin{gather*}
  x^0=2X^0\big(\Pi_0+\Pi_0^{-1}\big(1-\kappa^{-2}\vec{P}^2\big)\big)^{-1},
\qquad
 x^i=X^i\Pi_0^{-1} +2\kappa^{-1}X^0\big(\Pi_0^2+1-\kappa^{-2}\vec{P}^2\big)^{-1}P_i,
\\
p_{0} =\frac{\kappa}{2}\big( \Pi _{0}-\Pi _{0}^{-1}\big(1-\kappa^{-2}\vec{P}^{2}\big)\big), \qquad p_i=P_i.
\end{gather*}
Finally we are in position to introduce Heisenberg representation of the canonical DSR algebra in the Hilbert space ${\cal L}^2(\mathbb{R}^4, dx^4)$. It is given by similar formulae as in the $h$-adic case (\ref{realization}), (\ref{M_i})--(\ref{C}). However here instead of $h$-adic extension $\tilde{f}$ of an analytic function $f$ (see (\ref{h-analytic})) one needs its Hilbert space operator realization
\[
\check{f}=\int f(t) dE_{(\kappa)}(t)
\]
 via spectral integral with the spectral measure $dE_\kappa (t)$ corresponding to the self-adjoint ope\-ra\-tor\footnote{Notice that $E_{(1)}=p_0$.}$E_{(\kappa)}=-\frac{\imath}{\kappa}\partial_0$, where $\kappa\in\mathbb{R}^*$.
Thus Hilbert space  representation of the canonical DSR algebra generators rewrites as (cf.~(\ref{realization}), (\ref{M_i}), (\ref{N_i})):
\begin{gather}\label{checkX}
X^{i}=x^{i}\check{\Gamma}\check{\Psi} ^{-1} ,\qquad X^{0}=x^{0}\check{\psi}  -hx^{k}p_{k}\check{\gamma},
\\
\label{checkP}
 \Pi_0=\check\Psi^{-1},\qquad
 P_{i}=p_{i}\check{\Gamma}^{-1},
\\
\label{checkM}
M_i = \imath\epsilon_{ijk}x_jp_k,
\\
\label{checkN}
N_i = \kappa x_{i}\check{\Gamma}\frac{\check{\Psi}^{-1}-\check{\Psi}}{2}-x_{0}p _{i}\check{\psi}\check{\Psi}\check{\Gamma}^{-1} +
\frac{\imath}{2\kappa}x_{i}\vec{p}\,{}^2\check{\Psi}\check{\Gamma}^{-1}
-\frac{1}{\kappa}x^{k}p_{k}p_{i}
\check{\gamma}\check{\Psi}\check{\Gamma}^{-1}
\end{gather}
with the Casimir operator:
\begin{gather}\label{checkC}
\mathcal{C}_\kappa=\kappa^2\big(\check{\Psi}^{-1}+\check{\Psi}-2\big)-\vec{p}\,{}^2\check{\Psi}
\check{\Gamma}^{-2}.
\end{gather}
Here $p_\mu=-\imath\partial_\mu$ and $x^\nu$ are self-adjoint operators acting in the Hilbert space ${\cal L}^2(\mathbb{R}^4, dx^4)$.
This leads to the St\"{u}ckelberg version of relativistic Quantum Mechanics (cf.~\cite{Menski,BP}).

\begin{remark}\label{PhaseSp}
    Alternatively, one can consider relativistic symplectic structure (cf.~(\ref{unWeyl}))
    \begin{gather}\label{RSS}
        \{x^\mu, x^\nu\}=\{p_\mu, p_\nu\}=0, \qquad \{x^\mu, p_\nu\}=\delta^\mu_\nu
    \end{gather}
 determined by the symplectic two-form $\omega=dx^\mu\wedge d p_\mu$ on the phase space $\mathbb{R}^4\times\mathbb{R}^4$. Now we can interpret formulas (\ref{M_i}), (\ref{N_i}) for the f\/ixed value $h={1\over\kappa}$ as a non-canonical transformation (change of variables) in the phase space. Thus in this new variables one gets $\kappa$-deformed phase space \cite{ncphasespace} with deformed Poisson brackets replacing the commutators in formulas~(\ref{L7}),~(\ref{L8}): $\{\;,\; \}_\kappa={1\over\imath}[\;,\; ]$. It corresponds to the so-called ``dequantization'' procedure~\cite{AF}. Conversely, the operators (\ref{checkX})--(\ref{checkC}) stand for true (Hilbert space) quantization of this deformed symplectic structure.
\end{remark}

\begin{example}
As a yet another example let us consider deformed phase space of Magueijo--Smolin model \cite{Smolin} (see also~\cite{ncphasespace}):
\begin{gather*}
 \{X^\mu, X^\nu\}=\frac{1}{\kappa}(a^\mu X^\nu-a^\nu X^\mu),\qquad
\{P_\mu, P_\nu\}=0, \qquad \{X^\mu, P_\nu\}=\delta^\mu_\nu+\frac{1}{\kappa}a^\mu P_\nu.
\end{gather*}
It corresponds to  the following change  of variable in the phase space~(\ref{RSS}):
\[
X^\mu=x^\mu-\frac{a^\mu}{\kappa}x^\nu p_\nu, \qquad P_\mu=p_\mu.
\]
 However we do not know twist realization for this algebra.
\end{example}

\textbf{Summary of Part I}\\
The first part of this thesis was concentrated on mathematical details of deformations for Minkowski spacetime with its symmetry groups.
 Firstly underlining how it may be obtained
by twist-deformation with $\U_{\mathfrak{igl}(n)}$ symmetry group, then insisting on two distinct versions ($h$-adic topology and $q$-analog) as a Hopf module to $\kappa$-Poincar\'{e} quantum groups. 
We started from reminding standard def\/initions and illustrative examples on crossed (smash) product construction between Hopf algebra and its module. We have also reminded few facts on twisted deformation and provided $\kappa$-Minkowski spacetime as well as smash algebra, with deformed phase space subalgebras, as a result of twists: Abelian and Jordanian. Moreover it was shown that the deformed (twisted) algebra is a pseudo-deformation of an undeformed one with above mentioned Jordanian and Abelian cases as explicit examples of this fact (Proposition~\ref{prop1}). The statement that the deformed smash (crossed) algebra can be obtained by a non-linear change of the generators from the undeformed one is an important result from the physical point of view, because it provides that one can always choose a physically measurable frame related with canonical commutation relations.

In the case of $\kappa$-deformation we have used smash product construction to obtain the so called DSR algebra, which uniquely unif\/ies $\kappa$-Minkowski spacetime coordinates with Poincar\'{e} generators.  Nevertheless it is only possible after either $h$-adic extension of universal enveloping algebra in $h$-adic case or introducing additional generator ($\Pi_0$) in $q$-anolog case. In our approach dif\/ferent DSR algebras have dif\/ferent physical consequences due to dif\/ferent realizations of $\kappa$-Minkowski spacetime. 

We have also introduced some realizations of deformed phase spaces (deformed Weyl algebra): $r$-deformed and $s$-deformed in Jordanian and Abelian cases respectively, and also $h$-adic and $q$-analog versions as well. Heisenberg representations in Hilbert space are also provided in all above cases. What is important in our approach that it is always possible to choose physical frame (physically measurable momenta and position) by undeformed Weyl algebra which makes clear physical interpretation~\cite{Liberati}. This implies that various realizations of DSR algebras are written in terms of the standard (undeformed) Weyl--Heisenberg algebra  which opens the way for quantum mechanical interpretation DSR theories in a more similar way to (proper-time) relativistic  (Stuckelberg version) Quantum Mechanics instead (in Hilbert space representations contexts\footnote{We believe that this work can be also helpful for building up a proper operator algebra formalism.}).

Now we are ready to discuss the possible physical interpretations of this framework. The dif\/ferent realizations of $\kappa$ -Minkowski spacetime shown in this Part of the thesis are responsible for dif\/ferent physical phenomena. In other words we obtain dif\/ferent physical predictions, e.g., on quantum gravity scale or time delays of high energy photons coming from gamma ray bursts depending on given realization. This point will be shown explicitly in the following.
%

\newpage
\thispagestyle{empty}


\newpage
 \thispagestyle{empty}
\vspace{2cm}

\begin{center}
    {\LARGE PART II}
\end{center}

\vspace{1.5cm}
\begin{center}
    
\textsc{ \Huge Applications in Planck scale physics }

\end{center}
\vspace{8.0cm}

%

\hspace{\stretch{1}} \begin{flushright}
    \emph{"There are three principal means of acquiring knowledge\\ available to us: observation of nature, reflection, and experimentation.\\ Observation collects facts; reflection combines them;\\ experimentation verifies the result of that combination.\\ Our observation of nature must be diligent,\\ our reflection profound, and our experiments exact.\\ We rarely see these three means combined;\\ and for this reason, creative geniuses are not common."}
\end{flushright}
\begin{flushright}
    Denis Diderot
\end{flushright}

\newpage
\newpage
\thispagestyle{empty}

\chapter{Possible experimental predictions from\\ $\kappa$ -Minkowski spacetime} 
Together with the development of the mathematical formalism standing behind noncommutative spacetimes, physical theories providing experimentally falsifiable features appeared. In 2001, it was proposed \cite{AC,BACKG} to generalize Special Relativity and add another invariant parameter besides speed of light. This generalization is now called Doubly (or deformed) Special Relativity (DSR). A different model of DSR was proposed later by Joao Magueijo and Lee Smolin. There exists strong motivation behind DSR theories because in the theory of Quantum Gravity a fundamental role is played by the Planck Mass (equivalently Planck Energy, Length), since this is the scale at which Quantum Gravity effects start to be comparable to the ones of the other interactions. Therefore all independent observers should observe this at the same scale. There is no  established mathematical framework behind the DSR idea \cite{ACdsrmyth} yet, however $\kappa$-Minkowski spacetime and $\kappa$-Poincare is one of the possibilities, and probably one of the most successful so far. In $\kappa$-deformation the parameter 'kappa' stays invariant, and $\kappa$-Minkowski spacetime becomes naturally connected with the DSR theory interpretation. Moreover given assumption that DSR is phenomenological limit of quantum gravity theory, $\kappa$-Minkowski spacetime formalism appears even more interesting. On top of that noncommutative quantum structure of spacetime as a result gives uncertainty relations for coordinates and preferred frame appears as property hence we have to deal with Lorentz-invariance violation (LIV), as in many other approaches to quantum gravity.

$\kappa$-Poincar\'{e} and $\kappa$-Minkowski algebras as one of the possible formalisms for DSR theories have been widely studied [10,11,13,14,18-20,70-72,78,82,84,88,90,92,94,102,103].
One can see that in this formalism the second invariant scale in DSR appears naturally from noncommutativity of coordinates, and has the meaning analogous to speed of light in Special Relativity. Nevertheless together with growth of popularity of this approach in DSR theories many critical remarks have appeared  \cite{DSRdebate,DSRdebateI,DSRkrytyka}. Some of the authors, using $\kappa$-Poincar\'{e} algebra formalism, have argued its equivalence to Special Relativity \cite{DSRkrytyka}. 
Deformed Special Relativity is not a complete theory yet, with many open problems.
Also recently the problem of nonlocality in varying speed of light theories appeared \cite{DSRdebate,DSRdebateI}. Fortunately, it has been shown that nonlocality problem is inapplicable to DSR framework based on $\kappa$-Poincar\'e~\cite{DSRnonloc}. Because of this, it seems to be very timely and interesting to deal with Hopf algebras and noncommutative spacetimes associated with them. Nevertheless here, in this thesis, we only mention DSR interpretation as one of the possibilities for phenomenology of $\kappa$-Minkowski spacetime, which is interesting and promising itself. Hence we think it is important to clarify and investigate in detail examples of noncommutative spacetimes from mathematical point of view (as it was done in Part I of this thesis) because various important technical aspects of $\kappa$-Minkowski spacetime were not always considered in the physical literature.
Nonetheless we would like to analyse physical predictions which follow from $\kappa$-Minkowski spacetime framework and this will constitute the second part of the thesis.

\section{Constraints on the quantum gravity scale}
Understanding the properties of matter and spacetime at the Planck scale
remains a major challenge in theoretical physics. In spite of several
theoretical candidates, the corresponding experimental data are very limited.
Among the data that is presently available, the difference in arrival time
of photons with different energies from GRB's, as observed by the Fermi
gamma ray telescope \cite{Fermi,ref3,recent}, may contain information about the
structure of spacetime at the Planck scale \cite{ellis1,ellis2}, see \cite%
{smolin1} for a recent review. There have been attempts to analyze this data
from the GRB's based on the framework of doubly special relativity (DSR)
\cite{AC}, where it was argued that the dispersion relation following
from DSR is consistent with the difference in arrival time of photons with
different energies. Similar dispersion relations have also been proposed in
other scenarios, see e.g. \cite{Smolin,mattingly,girelli}, most of which lead to Lorentz symmetry
violation.

In a related development it was found that the dispersion relations
following from the $\kappa$-Minkowski spacetime can be used to analyze the astrophysical data from the GRB \cite{majid1}. In addition, one possible description of the symmetry structure of the DSR is through the Hopf-algebra of symmetry of the $\kappa$-Minkowski spacetime
\cite{AC,Bu,BP2},\cite{glik1}-\cite{ref1}, although alternative descriptions are also possible (see e.g. \cite{ref1,ref2}).
The noncommutative geometry defined by the $\kappa$-Minkowski spacetime can also
be obtained from the combined analysis of special relativity and the quantum
uncertainty principle \cite{Dop1,Dop2} (the so-called DFR model). These observations indicate
that the $\kappa$-Minkowski spacetime could be a possible candidate to
describe certain aspects of physics at the Planck scale.

Recently it has been emphasized that the dispersion relation compatible with
the GRB data is likely to arise from a DSR model where the transformation
laws are changed but the Lorentz symmetry is kept undeformed \cite{smolin1}. Assuming that the DSR formalism can be described by the $\kappa$-Minkowski spacetime, this implies that the corresponding $\kappa$-Poincar\'e Hopf algebra should have an undeformed Lorentz algebra but may have a deformed co-algebra. Therefore we identify deformation parameter $\kappa$ with quantum gravity scale since it appears  as invariant scale in DSR theory interpretation, dispersion relations become deformed and quantum gravity corrections appears.

In this Part we shall quickly remind the description of the deformed dispersion
relations arising from $\kappa$-Minkowski spacetime. 
We consider two equivalent types of dispersion relations which are characterized by a set of parameters and 
can be determined by the choice of the realization. 
 Although the parameters in the dispersion
relations can be calculated for any given choice of the realization, here we
take the point of view that they should be determined using the empirical
data from the astrophysical sources. Identifying deformation parameter with quantum gravity scale, experimental data 
\cite{Fermi,ref3,recent} gives the relation $\frac{M_Q}{M_{Pl}}>1.2$, which indicates the possibility of truly Planckian effects. In our framework however it is possible to even get
$M_Q=M_{Pl}$, due to proper choice of proportionality coefficient. Besides leading term - proportional to
$\frac{1}{M_{Pl}}$, we consider also quadratic corrections, which might provide better fit to astronomical data, and have higher accuracy on bounds of quantum gravity scale.

Later, we will discuss the generalized dispersion relations that follow
from our analysis of the $\kappa$-Minkowski algebra and we
introduce all the realizations for $\kappa$-Minkowski spacetime coordinates and provide comparison with recent experimental data which allows us to obtain bounds on quantum gravity scale or model parameters.

\section{Dispersion relations}

Considering the $\kappa$-deformed Minkowski spacetime provided with noncommutative
coordinates $\X_\mu$ (or $X_{\mu}$ consistenlty with the notation from previous chapter)\footnote{We will keep refering to two versions from previous chapter $h$-adic and q-analog ones, even if we have argued that only that last one should be considered in physical theories. Nevertheless mainly it was $h$-adic version which was used by many authors for that purpose.} ($\mu = 0,1,..,n-1 $), which satisfy Lie-algebra type of commutation relations (\ref{kMcal}) (or equivalently (\ref{kM})),
one can notice that, due to Jacobi identity, canonical commutation relations: $
[P_{\mu },X_{\nu }]=\eta _{\mu \nu }$ are no longer satisfied. 
In fact one has instead deformed Weyl algebra as it was shown in previous chapter (\ref{L7}-\ref{L8})  ((\ref{phsp1}) respectively).


These relations modify  Heisenberg uncertainty  relations (see e.g. \cite{Liberati} and references therein). It appears that interesting class of representations of the DSR algebra
(\ref{L1}--\ref{L3}) and (\ref{kMcal}), (\ref{L5}--\ref{L8}); (equivalently (\ref{kM}), (\ref{MkM}--\ref{phsp1})) can be induced from representations of the deformed spacetime algebra itself (\ref{kMcal}) (or (\ref{kM})) (see \cite{s1}, \cite{MS2}, \cite{s5} for details).
As it was mentioned before the name 'DSR algebra' appears because of it's interpretation.
Different realizations lead to different doubly (or deformed) special relativity models with
different physics encoded in deformed dispersion relations. 
Let us remind this point in more detail. In terms of momenta $P_{\mu }$, we have
undeformed dispersion relations given by standard Poincar\'{e} Casimir
operator (\ref{nondef}).
But since the standard Casimir operator didn't satisfy all the needed conditions
we have found another invariant
operator - deformed Casimir operator $\mathcal{C}_h$ (\ref{C})  ($\mathcal{C_{\kappa }}$ (\ref{qCas})) which lead to  deformed dispersion relation
with the deformed mass parameter $m_{\kappa }$. It turns out that in the most
concrete realizations the interrelation between this two invariants has the
following form \cite{BP,BGMP,s1}
:\be P^{2}=\mathcal{C_{\kappa }}(1+\frac{a^{2}}{4}\mathcal{C_{\kappa }})\ee
Particularly \be m_{ph}^{2}=m_{\kappa }^{2}(1+\frac{a^{2}}{4}m_{\kappa }^{2})\ee
Therefore for photons: $m_{ph}=m_{\kappa }=0$ and as a consequence dispersion relations obtained from (\ref{nondef}) and (\ref{def}) are identical.
One can however see that in general both expressions  have
the same classical limit $a_\mu \rightarrow 0$ but differ by order as
polynomials in $a$. It happens that both dispersion relations (\ref{nondef},\ref{def}) can
be rearranged into convenient form:
\begin{equation}  \label{disp1}
E \simeq |{\vec{p}}|\frac{G_{1}}{G_{2}}+\frac{m^{2}}{2|{\vec{p}}|}
\end{equation}
\begin{equation}  \label{disp2}
\frac{G_{1}}{G_{2}} \simeq 1+c_{1}\frac{|{\vec{p}}|}{M_{Q}}+c_{2}\frac{|{%
\vec{p}}|^{2}}{M_{Q}^{2}}
\end{equation}
if one restricts oneself only to the second order accuracy and introduces locally measurable momenta $(E,\vec{p})$.
Here $G_{1}$ and $G_{2}$ are model-dependent functions of $\frac{|\vec{p}|}{M_{Q}%
}$ and $\frac{m^{2}}{M_{Q}^{2}}$ satisfying the conditions $%
G_{1}(0)=G_{2}(0)=1$, and $c_{1}$ and $c_{2}$ are model-dependent constants.
Taking $p_{0}=E$ and $a^{\mu}=(M_{Q}^{-1},0,0,0)$ as timelike vector so that $a^{2}=-M_{Q}^{-2}<0$ where $M_{Q}$ is the quantum gravity scale and assuming that $m<<E<<M_{Q}$, we have energy dependence up to second order in $\frac{1}{M_{Q}^{2}}$.
Eqs. (\ref{disp1}) and (\ref{disp2}) provide the general dispersion
relations within the $\kappa$-Minkowski algebra framework and up to the
second order in $\frac{1}{M_Q^2}$. These constants, $c_1$ and $c_2$, are
additional parameters appearing in the dispersion relation and time delay
formula. It has been observed in \cite{smolin1} that a proper analysis of
the GRB data using dispersion relations may require more than just the
parameter given by the quantum gravity scale $M_Q$. In fact, in order to diversify between different DSR models (see below) one has to assume some numerical value for the parameter $M_Q$ (e.g. $M_Q=M_{Pl}=1,2\times 10^{19} GeV$) and fit the values for linear $c_1$ as well as quadratic $c_2$ corrections \cite{Albert,Aharonian}. Otherwise one can choose parameters $c_1$ and $c_2$ from quantum gravity model and experimentaly fit quantum gravity scale $M_Q$.

 In our formalism, the
parameters $c_1$ and $c_2$ arise naturally in the dispersion relations,
which can be obtained by fitting with the empirical data. They can also be
calculated theoretically once the choice of the realization of the $\kappa$%
-Minkowski spacetime is fixed. In this sense, the parameters $c_1$ and $%
c_2$ incorporate the information of the quantum gravity vacuum, analogous to
certain fuzzy descriptions of quantum gravity \cite{f1,f2}.\newline
Alternatively one can consider dispersion relation in the form:
\begin{equation}  \label{disp3}
|\vec{p}|\simeq E \left(1-b_1\frac{E}{M_Q}+b_2\frac{E^2}{M^2_Q}\right)
\end{equation}
which coincides with (\ref{disp1}) and (\ref{disp2}) up to second order accuracy
provided $c_1=b_1$ and $c_2=2b_1^2-b_2$, i.e.\be
E\simeq |\vec{p}| \left(1+b_1\frac{|{\vec{p}}|}{M_Q}+(2b_1^2-b_2)\frac{|{%
\vec{p}}|^2}{M^2_Q}\right)\ee
This gives time delay formula as: 
\be  \label{td}
\Delta t\simeq -\frac{l}{c}\frac{|\vec{p}|}{M_Q}\left(B_1+\frac{|\vec{p}|}{M_Q%
}B_2\right) = -\frac{l}{c}\frac{E}{M_Q}\left(2b_1-3b_2\frac{E}{M_Q}\right)
\ee
where $B_1=2c_1=2b_1,\quad B_2=c_2-4c_1^2=2b_1^2-3b_2$ and $l$ is a distance
from the source of high energy photons. This last equation might  be
more suitable for our purposes since photon energy is well measurable
physical quantity. The case $M_Q=M_{Pl}$ is not excluded, thus natural relation between quantum gravity scale and Planck scale arises here (see below for details).
 In general, due to Lorenz Invariance Violation (LIV): $|\vec{p}|\neq E$, the second order contributions to "momentum" and "energy" versions of the time delay formula (\ref{td}) are different. However, for $b_1=0$, one has $B_2=-3b_2$ and both contributions are the same.
The equation (\ref{td}) describes absolute time delay between
deformed and undeformed cases. What one really needs, in order to compare against experimental data, is the relative time delay i.e. time-lag between two photons with different energies $E_{l}<E_{h}; \delta E=E_{l}-E_{h}; \delta E^2=E^2_{l}-E^2_{h}$ (with lower energy photons $E_l$ arriving
earlier).
\begin{equation}
\Delta \delta t\simeq - \frac{l}{c}\left( 2b_{1}\frac{\delta E}{M_{Q}}-3b_{2}%
\frac{\delta E^{2}}{M_{Q}^{2}}\right)
\end{equation}%
Following the lines of thoughts presented in \cite{jacob,xiao} one can also take into account
the universe cosmological expansion: for the photons coming from a redshift $z$ one gets
\begin{equation}  \label{cosm}
\Delta \delta t\simeq -2b_1\frac{\delta E_e}{M_{Q}}%
\int_{0}^{z}\frac{1+z^{\prime }}{h(z^{\prime })}dz^{\prime }+3b_2\frac{%
\delta E_e^{2}}{M_{Q}^{2}}\int_{0}^{z}\frac{(1+z^{\prime })^{2}}{%
h(z^{\prime })}dz^{\prime }
\end{equation}%
where: $h(z^{\prime })=H_{0}\sqrt{\Omega _{\Lambda }+\Omega _{M}(1+z)^{3}}$
and $H_{0}=71/s/Mpc$ is Hubble parameter, $\Omega _{M}=0.27$ - the matter density and $%
\Omega _{\Lambda }=0.73$ - vacuum energy density are cosmological parameters represented by their present day values:
$E_e$ denotes redshifted photon's energy as measured on earth.

\section{Time delay formulae for different realizations of DSR algebra}

The DSR algebra (\ref{L1}--\ref{L3}) and (\ref{kMcal}), (\ref{L5}--\ref{L8}) (equivalently (\ref{kM}), (\ref{MkM}--\ref{phsp1})) admits a wide range of realizations, which can be written in the general form as:\be\label{XMPreal}
\X_{\mu } =x^{\alpha }\phi _{\alpha \mu }(p; M_Q ), \quad
M_{\mu \nu } =x_{\alpha }\Gamma _{\mu \nu }^{\alpha }(p; M_Q ),\nonumber\ee
\be P_{\mu } =\Lambda _{\mu }(p; M_Q )\ee
\footnote{Please note that in the case of q-analog we are able to use compact notation $P_\mu$ since $P_0$ is expressed by generators $\Pi_0$ and $\Pi_0^{-1}$, defined by the formula (\ref{qP_0}).}
in terms of (undeformed) Weyl algebra \footnote{For details on relation between deformed and undeformed Weyl algebra see Chapter 1 (\cite{BP4}), particularly Proposition \ref{prop1} and \ref{prop3}} satisfying canonical commutation relations:\be
\left[ p_{\mu },x_{\nu }\right]=-i\eta _{\mu \nu }, \quad
\left[ x_{\mu },x_{\nu }\right]=0, \quad
\left[ p_{\mu },p_{\nu }\right]=0.
\ee
with proper classical limit: $\X_{\mu } =x_\mu,\quad M_{\mu \nu }=x_\mu p_\nu-x_\nu p_\mu,\quad P_\mu=p_\mu$ as $M_Q\rightarrow\infty$.
In these realizations, $x_{\mu }$
and $p_{\nu }$ do not transform as vectors under the action of $%
M_{\mu \nu }$ (\ref{XMPreal}). They provide (commuting) position $x^\mu$ operators and local, physically measurable, momentum $p_{\mu }=-i\partial_{\mu }$ which define measurable frame  for DSR theories, for discussion of DSR phenomenology, see \cite{Liberati} and references therein.\newline

\textbf{Noncovariant realizations} (\ref{realization}) cover a huge family of DSR-type  models.
One can rewrite them introducing quantum gravity scale $M_Q$ in the following form:
\be \X^{i}=x^{i}\phi (A),\quad \X^{0}=x^{0}\psi (A)+\imath ax^{k}\partial _{k}\gamma (A)\ee
where $A=-ap=-\frac{E}{M_{Q}}$ and $\phi (A)=\exp \left( \int_{0}^{A}\frac{%
(\gamma (A^{\prime })-1)dA^{\prime }}{\psi (A^{\prime })}\right)$; $\psi (0)=1$, keeping notation from Section 4.1. for  $\psi ,\phi $ as two arbitrary analytic functions (cf.\ref{Psi}) \cite{BP,s1}.
Photon's dispersion relations with respect to both undeformed and deformed Casimir operators (\ref{nondef}, \ref{def})
can be recast as follows:
\begin{equation}\label{a2}
|\vec{p}|=E\frac{1-\exp \left( -\int_{0}^{A}\frac{%
dA^{\prime }}{\psi (A^{\prime })}\right)}{A}
\exp \left( \int_{0}^{A}\frac{
\gamma (A^{\prime })dA^{\prime }}{\psi (A^{\prime })}\right)
\end{equation}
Recent experimental data disfavour models with only linear accuracy.
In order to calculate second order contribution
stemming from (\ref{a2}) one needs the following expansion:
\begin{equation}
\psi(A) =1+\psi _{1}A+\psi _{2}A^{2}+o(A^{3});\quad \gamma(A) =\gamma _{0}+\gamma
_{1}A+o(A^{2})
\end{equation}%
This provides general formulae for the coefficients $b_{1},b_{2}$ in (\ref%
{disp3}) and $B_{1},B_{2}$ in (\ref{td}):
\be b_{1}=\frac{1}{2}(2\gamma _{0}-1-\psi _{1})\ee
\be b_{2} =\frac{1}{6}(1+3\psi _{1}+2\psi _{1}^{2}-\psi _{2}+3\gamma
_{0}^{2}-3\gamma _{0}+3\gamma _{1}-6\gamma _{0}\psi _{1})\ee
\be B_{1} =2\gamma _{0}-1-\psi _{1};\ee
\be B_{2} =\frac{1}{2}(\gamma
_{0}^{2}-\psi _{1}^{2}-\gamma _{0}+2\psi _{1}\gamma _{0}-\psi _{1}+\psi
_{2}-3\gamma _{1})\ee
General case of Hermitian (Hilbert space) realization \cite{BP} requires (in physical dimension four): $\psi ^{\prime
}+3\gamma=0$ which correspond to: $\psi _{1}=-3\gamma _{0};\psi _{2}=-\frac{3%
}{2}\gamma _{1}$ and give rise to formulas:
\be
b_{1}=\frac{1}{2}(5\gamma _{0}-1)\ee
\be b_{2}=\frac{1}{6}(1+39\gamma _{0}^{2}-12\gamma _{0}+\frac{9}{2}\gamma _{1})\ee
\be B_{1}=5\gamma _{0}-1;\quad B_{2}=\frac{1}{2}(4\gamma _{0}^{2}+2\gamma _{0}-%
\frac{9}{2}\gamma _{1})\ee
Particularly for $\gamma _{0}=\frac{1}{5}$ one can reach $b_{1}=0$ and
$ b_{2}=\frac{3}{4}\gamma _{1}+\frac{2}{75}$.\\
a) Non-covariant realizations contain representations generated by
\textbf{Jordanian one-parameter family of Drinfeld twists} (for details see
\cite{BP}): $\psi =1+rA, \gamma=0$, hence $\psi _{1}=r,\psi_2 =\gamma
_{0}=\gamma _{1}=0$.\newline
The corresponding time delay coefficients for photons are:\be
b_{1}=-\frac{1}{2}(1+r);\quad b_{2}=\frac{1}{6}(1+3r+2r^{2});\quad
B_{2}=-\frac{1}{2}r(r+1)\ee
so one gets upper-bound $B_2\leq -\frac{3}{8} $. However in (\ref{disp3}) one gets $b_2 \geq -\frac{1}{4}$ which provides lower-bound.
Particularly, for $r=-1$ one gets $b_{1}=0=b_2$. This
corresponds to Poincar\'{e}-Weyl algebra (Chapter 2, \cite{BP}) and provides no time delay for photons.
A particular Jordanian Hermitian  case in $n=4$ dimensions requires $r=3$ and gives \be
\Delta t=\frac{l}{c}\frac{|\vec{p}|}{M_{Q}}\left( 4+3\frac{|\vec{p}|}{M_{Q}}%
\right)\ee  and leads to "time advance".
Another case of dispersion relation worth to consider:
\be |\vec{p}|=\frac{E}{1-\frac{E}{M_Q}} \ee
is recovered here for parameter $r=1$, and one gets $b_1=-1; b_2=1$ in time delay formulas. This case differs by sign from dispersion relation considered in \cite{Smolin}.
Nevertheless original formulae proposed there:
\be \frac{E^2}{(1+\frac{E}{M_Q})^2}-|\vec{p}|^2=m^2\ee
 can be also reconstruct within this formalism for the following choice:\\ $\psi=1-3A+2A^2;\gamma=0$ and one gets $b_1=b_2=1$ in time delay for photons. However we still do not know if this realization is possible to obtain by Drinfeld twist.\\
b) $\kappa -$Minkowski spacetime can be also implemented by
one-parameter family of \textbf{Abelian twists} \cite{Bu,BP,lee2}.
Abelian twists give rise to : $\psi =1, \gamma =s= \gamma _{0}, \gamma _{1}=\psi_1=\psi_2=0$ and
\be b_{1}=\frac{1}{2}(2s-1);\quad b_{2}=\frac{1}{6}(3s^{2}-3s+1)\ee
\be \Delta t=-\frac{l}{c}\frac{|\vec{p}|}{M_{Q}}\left( 2s-1+\frac{|\vec{p}|}{%
2M_{Q}}s(s-1)\right)\ee
so $B_2\geq \frac{1}{8}$ provides bounds from below ($s=\frac{1}{2}$). And analogously one obtains an upper-bound for $b_2\leq \frac{1}{6}$ in (\ref{disp3}).
The case $b_{1}=0$ gives  $\Delta t\simeq -\frac{l}{c}\frac{E^{2}}{8M_{Q}^{2}}$. Moreover Hermitian realization requires $s=0$ and provides $\Delta t=+\frac{l}{c}\frac{|\vec{p}|}{M_Q}$, "time advance" instead. (The corresponding dispersion relation has been also found in \cite{LRZ}.)
As it has been shown in \cite{BP} the case $s=1$
reproduces the standard DSR theory \cite{AC}-\cite{Smolin}, \cite{glik1}-\cite{glik3} which is related with the so-called bicrossproduct basis
\cite{MR}. The time delay formula taking into account the second order contribution reads now as
\be\label{dsr}
\Delta t=-\frac{l}{c}\frac{|\vec{p}|}{M_{Q}}=-\frac{l}{c}\frac{E}{M_{Q}}\left(1-\frac{E}{2M_{Q}}\right)\ee
Assuming $M_Q=M_{Pl}$, the leading term of (\ref{td}) is $\frac{2b_1}{M_{Pl}}$, which compared with recent results: $\frac{M_{Q}}{M_{Pl}}>1.2$, gives $b_1<0,417$. Particularly for Jordanian realizations one obtains lower bound for parameter $r>-1.208$, analogously for Abelian realizations $s<0.604$ which provides upper-bound.
Finally, it might be of some interest in physics to study models with first order corrections
vanishing, i.e. $b_{1}=0$. This is due to the fact that it is unlikely to observe LIV at linear order of $\frac{E}{M_{Q}}$.
This yields: $ \psi _{1}=2\gamma _{0}-1$\be
b_{2}=\frac{1}{6}(\gamma _{0} -\psi _{2}+3\gamma _{1}-\gamma _{0}^{2})\ee
\be B_{2}=-3b_{2}=-\frac{1}{2}(\gamma _{0} -\psi _{2}+3\gamma _{1}-\gamma
_{0}^{2})\ee
\textbf{General covariant realization} \cite{MS2}.
One can distinguish interesting case of covariant realizations of
noncommutative coordinates:\newline
\be \X_{\mu }=x_{\mu }\phi +i(ax)\partial _{\mu }+i(x\partial )(a_{\mu
}\gamma _{1}+ia^{2}\partial _{\mu }\gamma _{2})\ee
with dispersion relation \be
\ m_{ph}^{2}=\frac{E^{2}-\vec{p}^{2}}{(\phi -\frac{E}{M_Q})^{2}-\frac{%
\vec{p}^{2}}{M^2_Q}}\ee
with respect to (\ref{nondef}), which does not yield time delay for photons.
It is worth noticing however that for special 
choice of $\phi =1$ the last formula recovers Magueijo-Smolin type of covariant dispersion relations (DSR2),
see Ref. \cite{Smolin} formula (3).\newline
One can consider second Casimir operator in this realization:
$$\mathcal{C_{\kappa }}=\frac{2}{M_Q^{2}}\left( \sqrt{1+M_Q^{2}\frac{p^{2}%
}{\left( \phi +\frac{p_0}{M_Q}\right) ^{2}+\frac{p^{2}}{M^2_Q}}}-1\right)$$
but there is no time delay either.\newline

\textbf{Comparison against experimental data}\\
All the above results can be directly translated and compared with recent numerical data obtained by MAGIC, FERMI or other space or ground experiments. In the following we shall compare data given in \cite{Albert,Aharonian} within our framework.
Our formula for time delay (\ref{td}) coincides with the one introduced in \cite{Aharonian} for change of speed of light due to quantum gravity corrections.
\be\label{c}
c^{\prime }=c\left( 1+\xi \frac{E}{M_{Q}}+\zeta \frac{E^{2}}{M_{Q}^{2}}%
\right)=c\left( 1+2b_{1}\frac{E}{M_{Q}}+(4b_{1}^{2}-3b_{2}) \frac{E^{2}}{M_{Q}^{2}}
\right)\ee

First let us consider model-independent limits on the quantum gravity scale based on astrophysical data. Let us assume, as in \cite{Aharonian}, (in general it is not necessary in our models as mentioned in previous section) that quantum gravity scale is Planck mass: $M_{Q}=E_{P}=1.22\times 10^{19}GeV$. For linear corrections in speed of photons (with quadratic corrections vanishing $\zeta=0$, or equivalently $b_2=\frac{4}{3}b_1^2$ in (\ref{c})) one obtains the following bounds for coefficients $b_1,b_2$ in time delay formula (\ref{td}):\\
- for GRB's $|b_{1}|<35-75$;\\
-for active galaxies (Whipple collaboration during flare at 1996) $|b_{1}|<100$;\\
-MAGIC (2005) $|b_{1}|\approx 15$;
with limits $ |b_{1}|<30$;\\
-and if $c^{\prime }$ is helicity dependent $ |b_{1}|<0.5\times 10^{-7}$.\\
It should be noticed that results simultaneously constraining linear and quadratic corrections are very rough. e.g. $|\xi|<60$ and $|\zeta|<2.2\times 10^{17}$ \cite{Albert,Aharonian} gives rise to $|b_1|<30$ and $1200-0.73\times 10^{17}<|b_2|<1200+0.73\times 10^{17}$.
More experimental data one can find \cite{Fermi,Albert,Aharonian} and in references therein.

Moreover, one can consider model-dependent limits on the quantum gravity scale, since coefficients $b_1,b_2$ are connected with above introduced, twist realizations of $\kappa$ -Minkowski spacetime. We can choose the type of model (e.g. Hermitian Jordanian, Abelian) and provide the bounds for quantum gravity scale without $M_Q=M_P$ assumption.\\ Assuming
\textbf{Jordanian} realization in Hermitian case (with $\zeta=0$) the quantum gravity scale bounds are:\\
$M_{Q} >28\times 10^{17} GeV$ (for the limit obtained by MCCF method); \\
$M_{Q}>20,8\times 10^{17}GeV$ (from wavelet analysis). \\
Taking into account only quadratic corrections in this case we do not obtain any bound for $M_Q$ scale due to $r=-1$ case discussed in previous section ($\xi=0\Rightarrow b_1=0\Rightarrow b_2=0$).
Analogously one can do this analysis for
\textbf{Abelian twists}.
For linear correction in
                                                 \textbf{Abelian Hermitian} realization we obtain the following bounds on quantum gravity scale: $M_{Q}>7.2\times 10^{17} GeV $ (the limit obtained by MCCF method). Limit obtained from wavelet analysis: $M_{Q}>5.2\times 10^{17} GeV$.
                                                                                                                                                                                                                                                                                                    \textbf{Abelian DSR} realization provides exactly the same bounds.
However considering only quadratic corrections in Abelian, no longer Hermitian neither DSR case, ($\xi=0\Leftrightarrow s=\frac{1}{2}$) one obtains: $ M_{Q}>1.31\times 10^{9} GeV$ (within the MCCF method).
One can notice that all bounds on quantum gravity scale are lower bounds.

\textbf{Summary of Part II}\\
We have shown that quadratic approximation of dispersion relations arising from our analysis of the $\kappa$-Minkowski spacetime contain multiple parameters, which depend on the choice of the realization and deformation parameter which is identified with quantum gravity scale. While the pure linear suppression by the quantum gravity scale is theoretically allowed, its exact form
is not yet established from the recent analysis of the GRB data \cite{Fermi,ref3,recent}. However, our analysis predicts a more general form of the in vacuo dispersion relations (\ref{disp3}), which typically contain terms with both linearly and quadratically suppressed by the quantum gravity scale $M_Q$. Moreover, the dispersion relations obtained here contain parameters which depend on the choice of the realizations of the $\kappa$-Minkowski algebra. A priori there is no basis to prefer one realization over another, which should be an empirical issue. However one natural choice appears to be hermitian ones. Nevertheless standard, based on Doubly Special Relativity approach \cite{BACKG,LukDSR,smolin1,DSRnonloc,JKG} to this problem prefers just one specific realization and the traditional form of dispersion relations related to bicrossproduct one \cite{MR}. We shall show that also classical basis satisfies bicrossproduct construction (see Apendix \ref{app2}).

It would therefore be best to obtain all the parameters in the dispersion relations from a multi parameter fit of the GRB data. We believe that this should be possible with the increased availability of the astrophysical data. However we obtain some bounds on quantum gravity scale following from given realizations. And it was shown that our framework makes possible to retain that Planck scale is quantum gravity scale.
The above arguments suggest that the $\kappa$-Minkowski algebra related with $\kappa$ -Poincar\'{e} Hopf algebra might be able to capture certain aspects of the physics at the Planck scale, which is compatible with the
claim that noncommutative geometry arises from a combined analysis of
special relativity and quantum uncertainty principle \cite{Dop1,Dop2}.
It is well known that the $\kappa$-Minkowski spacetime leads to modification
of particle statistics and leads to deformed oscillator algebras \cite{kappaSt,s5}.

\chapter*{Conclusions and perspectives}
\addcontentsline{toc}{chapter}{Conclusions and perspectives}
\fancyhf{} 
\fancyhead[LE,RO]{\bfseries\thepage}
\fancyhead[RE]{Conclusions}
In the past years, our knowledge on the universe has become more and more detailed thanks to new advanced technologies which provide us with many precise experimental astrophysical data, which in turn require understanding and explanation. Furthermore they place strong bounds on existing theoretical models or they need a theoretical explanation in the first place. One of the most intriguing problems is the very origin of our Universe, which is connected with Planck scale physics.

Noncommutative spacetimes are considered as one of the candidates for the description of Planck scale physics. The topic of this thesis was mainly focused on the example of such noncommutative spacetime which is $\kappa$- Minkowski spacetime. The deformation of spacetime requires a generalisation of its symmetry group and in this case one deals with the $\kappa$-Poincare Quantum Group as well as with twist-deformations of inhomogeneous linear group. In the result quantum noncommutative spacetime stays invariant under the Quantum Group of transformations in analogy to the classical case. 
The deformation parameter 'kappa' can be understood as 'quantum gravity' scale which was the main point in application of $\kappa$- Minkowski spacetime in possible physical predictions.
The parameter stays invariant and therefore makes $\kappa$-Minkowski spacetime naturally connected to Deformed (doubly) Special Relativity theories (DSR).

\textbf{Summary of results}\\
The mathematical techniques of noncommutative geometry such as deformation theory and Hopf algebra formalism involved in the investigation of $\kappa$- Minkowski spacetime properties were gathered in Part I of this thesis. 
Different constructions of the $\kappa$-Minkowski noncommutative spacetime were introduced as an example of Hopf module algebra with $\kappa$-Poincare as its quantum symmetry group (Hopf algebra). It is worth to mention that the $\kappa$-Minkowski noncommutative spacetime can be extended to the full DFR (Doplicher-Fredenhagen-Roberts) formalism \cite{LD-GP}.\\
In this thesis two families of twists, providing $\kappa$-Minkowski spacetime, were considered: Abelian and Jordanian. The module algebra of functions over the universal enveloping algebra of inhomogeneous general linear algebra is equipped in star-product and can be represented by star-commutators of noncommuting ($\kappa$-Minkowski) coordinates. This approach is connected with the Drinfeld twist deformation technique applied to inhomogeneous general linear algebra from one hand. On the other hand it is related with Kontsevich idea of deformation quantization of Poisson manifolds, the so-called star product quantization. The deformation of the minimal Weyl-Poincare algebra was introduced explicitly. In contrast, Drinfeld-Jimbo deformed Poincare algebra was also described, together with the deformed d'Alembert operator, which implied a deformation on dispersion relations. The corresponding family of realizations of $\kappa$-Minkowski spacetime as quantum spaces over deformed Poincare algebra was analyzed (the so-called noncovariant realizations). Moreover, two families of deformed dispersion relations (Abelian and Jordanian) were considered in detail. In the minimal Weyl-Poincare algebra case the correction to mass is linear in the energy. These deformed dispersion relations were a starting point to calculate "time delays" for high energy photons arriving from gamma ray bursts (GRB's) as example of possibility of quantum gravity effects measurement. As one of the mathematical frameworks for the observed "time delay" one can consider for example spacetime noncommutativity. The general framework describing modified dispersion relations and time delays with respect to different noncommutative $\kappa$-Minkowski spacetime realizations was proposed.\\
The other area of the research  introduce in this thesis was more focused on the mathematical properties of the quantum $\kappa$-Poincare group. We pushed forward the idea that the $\kappa$-Poincare Hopf algebra can be seen as quantum deformation of the Drinfeld-Jimbo type, since its classical r-matrix satisfies the modified Yang-Baxter equation. Two points of view were considered: a traditional one, with h-adic topology, and a new one, with a reformulation of the algebraic sector of the corresponding Hopf algebra (the so-called q-analog version, which is no longer of the form of universal enveloping algebra). Correspondingly one can consider two non-isomorphic versions of $\kappa$-Minkowski spacetime (i.e. h-adic and q-analog ones), which in both cases are covariant quantum spaces (i.e. Hopf module algebra over $\kappa$-Poincare Hopf algebra). From the physical point of view the most interesting model is $\kappa$-Minkowski as universal enveloping algebra of solvable Lie algebra without h-adic topology.
Furthermore by using smash (a.k.a. crossed) product construction the so called DSR algebra was obtained (the name DSR comes from the Doubly (Deformed) Special Relativity Theory interpretation). Its construction uniquely unified $\kappa$-Minkowski spacetime coordinates with $\kappa$-Poincare generators. It was also proved that this DSR algebra can be obtained by a non-linear change of the generators from the undeformed one. It means that DSR algebra is insensitive on deformation which is kind of no-go theorem. Therefore a physical interpretation should be related with with a choice of generators \cite{Liberati}.

 A general result establishing isomorphism between deformed (twisted) and undeformed smash product algebras was proved. The Deformed Phase Space (Weyl) algebra was also described as a subalgebra of the DSR algebra, together with momenta and coordinates realizations. Again, it turned out to be obtained by change of generators in the undeformed one. The construction of the DSR algebra, in general, depends on the mathematical model of $\kappa$-Minkowski spacetime and $\kappa$-Poincare one uses. In our approach the basis of Poincare sector of this algebra stays classical. However their different representations in Hilbert space are responsible for realizations of noncommutative spacetime coordinates and at the same time for different physical properties of DSR-type models. Moreover the twist realization of $\kappa$-Minkowski spacetime was described as a quantum covariant algebra determining a deformation quantization of the corresponding linear Poisson structure.

\textbf{Perspectives}\\
In the view of future research it could be interesting to investigate in more details other possible applications of $\kappa$-deformation or noncommutative spacetimes in general in physical theories.  For example noncommutative gravity is considered as a toy model of its quantisation. However there exist many different formulations of noncommutative gravity. Among them are the twist deformation of Riemannian geometry, the spectral action approach as model of gravity coupled to matter, deformed gravity as gauge theory of deformed Poincare symmetry, as well as gravity considered on noncommutative spacetime with Poisson structure. One of the widely investigated so far is the deformation of Riemannian geometry using a twist element, together with theta-deformation of Einstein equations and their solutions. However it would be interesting to connect some of the above mentioned approaches (the twist-deformed gravity and gravity on Poisson manifold or gravity as gauge theory of deformed Poincare symmetry) using $\kappa$-Minkowski spacetime. 
As starting point for the research on kappa-deformed gravity might be twist deformation of Riemannian geometry already proposed for theta-deformation, i.e. to obtain and analyse the exact solutions of noncommutative gravity basing on kappa-deformation, and to investigate generic properties of the solutions of deformed Einstein equations. Using twists for $\kappa$-Minkowski spacetime (Abelian and Jordanian) one could construct Einstein equations following the same procedure. It was proven that in the $\theta$- deformation case , by using the Drinfeld twist built up from Killing vectors, that solutions of undeformed gravity are noncommutative solutions as well. Therefore it would be interesting to find all the possible metrics for which Killing vectors coincide with vectors constituting , e.g. the Abelian twist for $\kappa$-Minkowski spacetime. Moreover having an appropriate twist element one can extend the star-product to all tensors and see which 'kappa' (quantum gravity scale) corrections appear in, e.g. curvature and torsion tensors, the covariant derivative, Christoffel symbols etc. Particularly deformed covariant tensors are connected with deformed differential calculus, which also could be developed in more detail. 
Of particular interest, for example, would be a study on geodesics, the Newtonian limit or corrections to gravitational waves solutions. Moreover it would be also interesting to consider if such corrections are able to address the cosmological constant problem, possibly explaining if such term in Einstein's equations has indeed a quantum origin.

Moreover it would another point of view could be given by approach to noncommutative gravity using Poisson structure, since $\kappa$-Minkowski spacetime can be also seen as deformation quantization (due to the Kontsevich theorem) of the corresponding linear Poisson structure. The comparison of the results from those a priori different descriptions of quantisation, connected with the noncommutative spacetime example of $\kappa$-Minkowski spacetime is interesting and could inspire development of some general connection between them.\\
One could also consider a supersymmetric extension of the above noncommutative (kappa-deformed) theory of gravity using quantum symmetries in the super-Hopf-algebraic formalism. It is possible to obtain twists describing supersymmetric quantum deformations from super-r-matrices of Abelian and Jordanian type. One could use this method of supersymmetrisation and construct super-extensions of the twists corresponding to kappa-Minkowski spacetime mentioned above and consider the supersymmetrised twist deformation of Riemannian geometry. The supersymmetric extensions of standard symmetries were and still are a very fruitful framework and it is interensting to consider them together with quantum symmetries and quantum deformations as a toy model of another way of quantising gravity on the ground of $\kappa$-Minkowski super-spacetime with noncommuting spacetime variables and non-anti-commuting (spinorial) grassman sector.\\

The next step could be to examine some of the consequences of noncommutative gravity, e.g. constructing kappa-deformed cosmology with Hamiltonian formulation of deformed Einstein's equations and asymptotic evolution of cosmological models with deformed motion in superspace. Moreover kappa-deformed black hole physics could be discussed involving $\kappa$-deformed Schwarzschild and Kerr-Newman solutions together with $\kappa$- black holes thermodynamics.  More strictly, starting from $\kappa$-deformed phase spaces one can consider $\kappa$-deformed cosmology, using dynamical system approach to Cosmological Models. Since the evolution of the Universe can be described as an Hamiltonian system it is planned to use methods developed for the classical case but take the deformed phase space in the first step. Moreover using Hamiltonian methods one can easily analyse the generic asymptotic behaviour of cosmological models near the singularity and at late times. Also due to canonical formulation one could interpret the evolution of the model as a point particle in a space.  The space in which the point particle (evolution) moves is called superspace. Therefore one can also consider deformed motion of such particle in superspace and minisuperspace models which one usually considers in the classical case.
The $\kappa$-deformed phase space, together with new uncertainty relations, is also a good starting point for the research on $\kappa$-deformed black holes. Corrections to solutions in the Schwarzschild and Kerr-Newman cases, were previously considered using the techniques of effective field theory, i.e. a different approach to quantum gravity. It would be interesting to investigate and compare the order of corrections using the $\kappa$-deformation approach.
This last part of the research  might be the most interesting from the point of view  of the experimental search of quantum gravity effects. It could shed light on some of the most challenging issues of modern theoretical astrophysics and experimental cosmology. 

The above mentioned solutions could be also understood within the DSR framework, that has been suggested to be the phenomenological limit of any theory of Quantum Gravity. Moreover  noncommutative spacetimes seem also to be connected with spin foam formalism via Group Field Theory (GFT) on mathematical level. GFT as mathematical formalism describing the most general approach to quantum gravity theory provides a link between field theory on group manifolds and noncommutative field theories on noncommutative spacetimes. As one of the examples only one realisation of $\kappa$-Minkowski spacetime was used (standard DSR type): this point could be generalised.  Moreover four-dimensional  $\kappa$-Minkowski spacetime can be described as subalgebra of de Sitter algebra and can be realized using Iwasawa decomposition. Therefore one can introduce generalized Iwasawa-type decompositions connected with the above described class of realizations. Furthermore one can try to transfer this results to the inhomogeneus de Sitter group. 
Moreover the $\kappa$-Poincare quantum group has two Casimir operators, connected with mass and spin. The deformed quadratic Casimir of mass leads to deformed dispersion relations, which turns out to depend on the Weyl algebra realization of $\kappa$-Minkowski spacetime. The construction of a realization-dependent form of deformed second, Pauli-Lubanski, Casimir operator could provide new properties of deformed spin.\\

Therefore $\kappa$-Minkowski spacetime is placed between a strictly mathematically theoretical example of noncommutative algebra and a physical model which might describe nature at the Planck scale. With its strong connection with Doubly Special Relativity Theory it becomes very promising in providing formalism of experimental tests of quantum gravity models (as, e.g. the above mentioned time delays which naturally appear from noncommutativity). Also the mathematical framework necessary in the investigation of  the properties of $\kappa$-Minkowski spacetime is highly not trivial and interesting itself. Noncommutative spacetimes as a one of the approach to a quantum theory of gravity has become very fruitful in inspiring developments in theoretical physics as well as in pure mathematics, with a continuously growing interest. Moreover the proposed application of the kappa-deformed gravity in cosmology and black hole physics might be
the most interesting part from the point of view  of the experimental search of quantum gravity effects. It could new shed light on some of the most challenging issues of modern theoretical astrophysics and experimental cosmology.

\appendix

\chapter{$h$-adic topology}\label{app3}
\fancyhf{} 
\fancyhead[LE,RO]{\bfseries\thepage}
\fancyhead[RE]{Appendixes}

We use the notion of $h$-adic (Hopf) algebras and $h$-adic  modules, i.e.\ (Hopf) algebras and modules dressed in $h$-addic topology. Therefore, we would like to collect basic facts concerning $h$-adic topology which is required by the concept of deformation quantization (for more details see \cite[Chapter~1.2.10]{Klimyk}  and \cite[Chapter~XVI]{Kassel}). For example, in the case of quantum enveloping algebra, $h$-adic topology provides invertibility of twisting elements and enables the quantization.

Let us start from the commutative ring of formal power series $\mathbb{C}[[h]]$: it is a (ring) extension of the f\/ield of complex numbers $\mathbb{C}$ with elements of the form:
\[
\mathbb{C}[[h]]\ni a=\sum_{n=0}^\infty a_n h^n,
\]
where $a_n$ are complex coef\/f\/icients and $h$ is undetermined. One can also see
this ring as $\mathbb{C}[[h]]=\times_{n=0}^\infty\mathbb{C}$
which elements are (inf\/inite) sequences of complex numbers $(a_0,a_1,\dots,a_n,\dots)$ with powers of $h$ just ``enumerating'' the position of the coef\/f\/icient. Thus, in fact, $\mathbb{C}[[h]]$ consists of all inf\/inite complex valued sequences, both convergent and divergent in a sense of standard topology on $\mathbb{C}$. A subring of polynomial
functions $\mathbb{C}[h]$ can be identif\/ied with the set of all f\/inite sequences. Another important subring is provided by analytic functions $\mathcal{A}(\mathbb{C})$, obviously: $\mathbb{C}\subset\mathbb{C}[h]\subset\mathcal{A}(\mathbb{C})\subset\mathbb{C}[[h]]$. Slightly dif\/ferent variant of sequence construction can be applied to obtain the real out of the rational numbers. Similarly, the ring $\mathbb{C}[[h]]$ constitute substantial extension of the f\/ield $\mathbb{C}$ and specialization of the indeterminant $h$ to take some numerical value does not make sense, strictly speaking.
The ring structure is determined by addition and multiplication laws:
\[
a+b:=\sum_{n=0}^\infty(a_n+b_n)h^n,\qquad a\cdot b:= \sum_{n=0}^\infty\left(\sum_{r+s=n}a_r b_s\right)h^n.
\]
This is why the power series notation is only a convenient tool for encoding the multiplication (the so-Cauchy multiplication).
Let us give few comments on topology with which it is equipped, the so-called ``$h$-adic'' topology.
This topology is determined ``ultra-norm'' $||\cdot||_{\rm ad}$ which is def\/ined~by:
\[
\left\|\sum_{n=0}^\infty a_n h^n \right\|_{\rm ad}=2^{-n(a)},
\]
where $n(a)$ is the smallest integer such that $a_n\neq 0$  (for $a\equiv 0$ one sets $n(a)=\infty$ and therefore $||0||_{\rm ad}=0$).
It has the following properties:
\begin{gather*} 0\leq ||a||_{\rm ad}\leq 1, \qquad
||a+b||_{\rm ad}\leq \max(||a||_{\rm ad}, ||b||_{\rm ad}), \\ ||a\cdot b||_{\rm ad}=||a||_{\rm ad}||b||_{\rm ad},
\qquad ||h^k||_{\rm ad}=2^{-k}.
\end{gather*}
It is worth to notice that the above def\/ined norm is discrete (with values in inverse powers of~2).
Important property is that the element $a\in\mathbb{C}[[h]]$ is invertible if an only if $||a||_{\rm ad}=1$.\footnote{Particularly, all nonzero complex numbers are of unital ultra-norm.} Above topology makes
the formalism of formal power series self-consistent in the following sense: all formal power series becomes convergent (non-formal) in
the norm $||\cdot||_{\rm ad}$. More exactly, if $\mathbb{C}[[h]]\ni a=\sum\limits_{n=0}^\infty a_n h^n$ is a formal power series then
$||a-A_N||_{\rm ad}\rightarrow 0$, with $A_N=\sum\limits_{n=0}^{n=N} a_n h^n$ being the sequence of partial sums.
Moreover, $\mathbb{C}[[h]]$ is a topological ring, complete in $h$-adic topology;
in other words the addition and the  multiplication are continuous operations and $h$-adic Cauchy sequences are convergent to the limit which belongs to the ring.

Furthermore one can extend analogously other algebraic objects, such as vector spaces, algebras, Hopf algebras, etc. and equip them in $h$-adic topology.
Considering $V$ as a complex vector space the set $V[[h]]$ contains all formal power series $v=\sum\limits_{n=0}^\infty v_n h^n$ with coef\/f\/icients $v_n\in V$. Therefore $V[[h]]$ is a $\mathbb{C}[[h]]$-module.
More generally in  the deformation theory we are forced to work with the category of topological $\mathbb{C}[[h]]$-modules, see~\cite[Chapter~XVI]{Kassel}.
$V[[h]]$ provides an example of topologically free modules. Particularly if $V$ is f\/inite dimensional  it is also free module. Any basis $(e_1,\dots,e_N)$ in $V$ serves as a system of free generators in $V[[h]]$. More exactly
\[
\sum_{k=0}^\infty v_kh^k=\sum_{a=1}^N x^ae_a , \qquad  v_k=\sum_{a=1}^N x^a_k e_a,
\]
where the  coordinates $x^a=\sum\limits_{n=0}^\infty x^a_nh^n\in \mathbb{C}[[h]]$.
It shows that $V[[h]]$ is canonically isomorphic to~$V\otimes\mathbb{C}[[h]]$. The ultra-norm $||\cdot||_{\rm ad}$ extends to $V[[h]]$ automatically. Particulary, if $V$ is equipped with an algebra structure then the Cauchy multiplication makes $V[[h]]$ a topological algebra.

Intuitively, one can think of the quantized universal enveloping algebras
introduced in the paper as families of Hopf algebras depending
on a f\/ixed numerical parameter $h$. However this does not make sense, for an algebra def\/ined over the ring $\mathbb{C}[[h]]$. To remedy this situation, one has to introduce a new algebra, def\/ined over the f\/ield of complex numbers. However this procedure is not always possible.
One can specialize $h$ to any complex number in the case of Drinfeld--Jimbo deformation but not in the case of twist deformation. For more details and examples see e.g.~\cite[Chapter~9]{Chiari}.

\chapter{Bicrossproduct construction versus Weyl-Heisenberg algebras}\label{app2}
Bicrossproduct construction, originally introduced in \cite{MR_bicross1} (see
also \cite{MR_bicross2}, \cite{M_book} for more details), allows us to construct a new
bialgebra from two given ones. Its applicability to Weyl-Heisenberg algebra is a subject of our study here. In fact, algebraic sector of Weyl-Heisenberg algebra relies on crossed-product construction \cite{Klimyk},\cite{Kassel},\cite{smash},\cite{Majid2}
while the coalgebraic one will be main issue of our investigation here. One can easily show that full bialgebra
structure cannot be determined in this case.  However appropriate weakening of some assumptions   automatically  allows on
bicrossproduct type construction.

We start this note with reviewing the notions of Weyl-Heisenberg algebra and indicating its basic properties.
Then we recall definitions of crossed product algebras, comodule coalgebras, their crossed coproduct and bicrossproduct
construction. We follow with some
examples of bicrossproduct construction for the classical inhomogeneous
orthogonal transformations as well as for  the $\kappa-$deformed case.

\section{Preliminaries and notation: Weyl-Heisenberg algebras}
Let us start with reminding that Weyl-Heisenberg algebra\footnote{Here an algebra means unital, associative algebra over a commutative ring which is assumed to be a field of complex numbers $\mathds{C}$ or its h-adic extensions $\mathds{C}[[h]]$ in the case of deformation.}
$\W(n)$ can be defined as an universal algebra with $2n$ generators $\{x^1\ldots x^n\}\cup\{P_1\ldots P_n\}$ satisfying
the following set of commutation relations (cf. example \ref{example4})
\begin{gather}  \label{Weyl}
P_{\mu }x^{\nu}-x^\nu P_\mu = \delta _{\mu}^\nu\, 1, \qquad  x^{\mu }x^{\nu } - x^{\nu }x^{\mu }
  = P_{\mu }P_{\nu } - P_{\nu }P_{\mu } =0\ .
\end{gather}
for $\mu, \nu =1\ldots n$.

It is worth to underline that the Weyl-Heisenberg algebra as defined above is not an enveloping algebra of some Lie algebra.  More precisely, in contrast to the  Lie algebra case, Weyl-Heisenberg algebra have no finite dimensional (i.e matrix) representations. One can check it by taking the trace of the basic commutation relation $[x,p]=1$ which leads to the contradiction.
Much in the same way one can set
\begin{proposition}
There is no bialgebra structure which is compatible with the commutation relations (\ref{Weyl}).
\end{proposition}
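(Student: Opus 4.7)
My plan is to derive a contradiction from the counit axiom alone, without even using the coproduct structure in any essential way. Recall that in any bialgebra the counit $\epsilon : \W(n)\to \mathds{C}$ must be an algebra homomorphism (into the commutative ring $\mathds{C}$), so in particular it must respect all commutators of $\W(n)$.

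The plan is to apply $\epsilon$ to the canonical commutation relation $[P_\mu, x^\mu]=1$ (taking $\mu=\nu$). The left-hand side becomes $\epsilon(P_\mu)\epsilon(x^\mu)-\epsilon(x^\mu)\epsilon(P_\mu)$, which vanishes identically because $\mathds{C}$ is commutative. The right-hand side becomes $\epsilon(1)=1$, since the counit is unital. This immediately yields $0=1$, the desired contradiction. Consequently no counit can exist on $\W(n)$, and hence no bialgebra structure (be it Hopf or otherwise) can be compatible with the relations (\ref{Weyl}).

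The argument is structurally the same obstruction as the non-existence of finite-dimensional representations of $\W(n)$ recalled just above the statement: tracing the relation $[P,x]=1$ gives $0$ on the left but a nonzero number on the right. Here $\epsilon$ plays the role of the trace, being a one-dimensional representation. I would therefore present the proof in one or two lines, emphasizing that the counit axiom is already an obstruction at the level of the algebraic sector.

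There is essentially no hard step; the only thing to be careful about is to stress that this obstruction is inherent to the commutation relation $[P_\mu, x^\mu]=1$ with a nonzero central term on the right. It is exactly this feature that motivates, in the subsequent part of the paper, the centrally-extended Heisenberg Lie algebra of Remark~2 above equation~(\ref{Weyl--Heisenberg}), where the right-hand side is replaced by a central generator $C$ with $\epsilon(C)=0$; there the counit obstruction disappears and a genuine Hopf algebra structure may be defined.
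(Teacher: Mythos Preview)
Your argument is correct and is essentially identical to the paper's: apply the counit $\epsilon$ to the relation $P_\mu x^\mu - x^\mu P_\mu = 1$ and use $\epsilon(1)=1$ to derive $0=1$. The additional commentary you give (the analogy with the trace obstruction and the remark on the central extension) is accurate and pertinent but goes beyond what the paper's one-line proof records.
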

The proof is trivial: applying the counit $\epsilon$ to both sides of the first commutator in (\ref{Weyl}) leads to a contradiction since $\epsilon (1)=1$.

The best known representations are given on the space of (smooth) functions on $\mathds{R}^n$ in terms of multiplication and differentiation operators, i.e. $P_\mu= {\partial\over \partial x^\mu}$. For this reason one can identify Weyl-Heisenberg algebra with an algebra of linear differential operators on $\mathds{R}^n$ with polynomial coefficients. In physics, after taking a suitable real structure, it is known as an algebra of the canonical commutation relations. Hilbert space representations of these algebras play a central role in Quantum Mechanics while their counterpart with infinitely many generators (second quantization) is a basic tool in Quantum Field Theory.

 A possible deformation of Weyl-Heisenberg algebras have been under investigation \cite{Pillin}, and it turns out that there is no non-trivial deformations of the above algebra within a category of algebras. However the so-called q-deformations have been widely investigated, see e.g. \cite{Pillin,Wess,Lavagno}.

Another obstacle is that the standard, in the case of Lie algebras, candidate for undeformed (primitive) coproduct
\be\label{primitive}
\Delta_0(a)=a\otimes 1+ 1\otimes a
\ee
$a\in \{x^1\ldots x^n\}\cup\{P_1\ldots P_n\}$ is also incompatible with (\ref{Weyl}).
It makes additionally impossible to determine a bialgebra structure on the Weyl-Heisenberg algebras.

However one could weaken the notion of bialgebra and consider unital non-counital bialgebras equipped with 'half-primitive' coproducts \footnote{These formulae were announced to us  by S. Meljanac and D. Kovacevic in the context of Weyl-Heisenberg algebra.}, left or right:
\be \label{half_prim}
\Delta^L_0(x)=x\otimes 1;\qquad\Delta^R_0(x)=1\otimes x
\ee
on $\W(n)$. In contrast to (\ref{primitive}) which is valid only on generators, the formulae (\ref{half_prim}) preserve their form for all elements of the algebra.

Moreover, such coproducts turn out to be applicable also to larger class of deformed  coordinate algebras  (quantum spaces \cite{Dop1}-\cite{Dop2}) being, in general, defined by commutation relations of the form
\be\label{q-space}
x^\mu x^\nu - x^\mu x^\nu = \theta^{\mu\nu}+\theta^{\mu\nu}_\lambda x^\lambda+\theta^{\mu\nu}_{\rho\sigma}x^\rho x^\sigma +\ldots
\ee
for constant parameters $\theta^{\mu\nu}, \theta^{\mu\nu}_\lambda,\theta^{\mu\nu}_{\lambda\rho},\ldots$ . Of course, one has to assume that the number of components on the right hand side of (\ref{q-space}) is finite.
\begin{proposition}
The left (right)-primitive coproduct determines a non-counital bialgebra structure on the class of algebras defined by the commutation relations
(\ref{q-space}).
\end{proposition}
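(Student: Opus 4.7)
The plan is to reduce the verification to a nearly trivial computation by exploiting the key feature of the putative coproducts $\Delta^L_0$ and $\Delta^R_0$: extended multiplicatively from the generators, they act on \emph{every} element $y$ of the algebra by $\Delta^L_0(y) = y\otimes 1$ (respectively $\Delta^R_0(y) = 1\otimes y$). So I would organize the proof as three verifications --- descent to the quotient, coassociativity, and multiplicativity --- followed by a short remark on why no counit can accompany this structure.

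First I would define $\Delta^L_0$ on the free unital algebra on generators $\{x^\mu\}$ by $x^\mu \mapsto x^\mu\otimes 1$ and extend as a unital algebra homomorphism; multiplicativity $(y\otimes 1)(z\otimes 1) = yz\otimes 1$ makes this consistent and forces $\Delta^L_0(y) = y\otimes 1$ for every $y$. The crucial step is showing that $\Delta^L_0$ descends to the quotient by the two-sided ideal $I$ generated by the relations $r_{\mu\nu} = x^\mu x^\nu - x^\nu x^\mu - \theta^{\mu\nu} - \theta^{\mu\nu}_\lambda x^\lambda - \theta^{\mu\nu}_{\rho\sigma}x^\rho x^\sigma - \cdots$ from (\ref{q-space}). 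This is essentially automatic: for any $ar_{\mu\nu}b \in I$ one has $\Delta^L_0(ar_{\mu\nu}b) = ar_{\mu\nu}b\otimes 1$, which vanishes in the quotient because $r_{\mu\nu} = 0$ there; the finiteness assumption on the right-hand side of (\ref{q-space}) is used only to guarantee that $r_{\mu\nu}$ is a bona fide element of the free algebra. Coassociativity is then immediate, since $(\Delta^L_0\otimes\mathrm{id})\Delta^L_0(y) = y\otimes 1\otimes 1 = (\mathrm{id}\otimes\Delta^L_0)\Delta^L_0(y)$, and multiplicativity was built in by construction.

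To see that the structure is genuinely \emph{non}-counital, I would observe that a counit $\epsilon$ would have to satisfy $(\epsilon\otimes\mathrm{id})\Delta^L_0(y) = \epsilon(y)\cdot 1 = y$ for every $y$, which forces every element to be a scalar multiple of the unit --- impossible as soon as any generator is non-scalar. The other counit axiom $(\mathrm{id}\otimes\epsilon)\Delta^L_0(y) = y\cdot\epsilon(1) = y$ is trivially satisfied with $\epsilon(1)=1$, reflecting the one-sided character of the coproduct. The argument for $\Delta^R_0$ is entirely parallel with the roles of the two tensor factors exchanged. In terms of difficulty, there is essentially no obstacle; the only point deserving care is the descent to the quotient, and even that is handled uniformly precisely because $\Delta^L_0$ has no nontrivial Sweedler expansion that would otherwise require checking the relations term by term. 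The proposition should therefore be regarded as recording the smallest nontrivial coalgebraic structure compatible with the broad family of quantum-space commutation relations (\ref{q-space}) --- including canonical, Lie-algebraic, and quadratic deformations --- after the no-go observation that no genuine bialgebra structure on Weyl--Heisenberg-type algebras exists.
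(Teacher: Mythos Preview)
Your proof is correct. The paper does not actually supply a proof for this proposition; it is stated without demonstration, with only the sentence preceding equation~(\ref{q-space}) --- that the half-primitive formulae ``preserve their form for all elements of the algebra'' --- serving as the implicit justification. Your argument makes this precise: the observation that $\Delta^L_0(y)=y\otimes 1$ globally is exactly what the paper alludes to, and your verification of descent to the quotient, coassociativity, multiplicativity, and the obstruction to a counit fills in the details the paper omits.
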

\begin{remark}
Such deformed algebra provides a deformation quantization of $\mathds{R}^n$ equipped
with the Poisson structure:
\be
\{x^\mu,x^\nu\}=\theta^{\mu\nu}(x)=\theta^{\mu\nu}+\theta^{\mu\nu}_\lambda x^\lambda+\theta^{\mu\nu}_{\rho\sigma}x^\rho x^\sigma +\ldots
\ee
represented by Poisson bivector $\Theta=\theta^{\mu\nu}(x)\partial_\mu\wedge \partial_\nu$.\end{remark}
Particularly, one can get the so-called theta-deformation:
\be\label{theta}
[x^\mu,x^\nu]=\theta^{\mu\nu}
\ee
which can be obtained via  twisted deformation by means of Poincar\'{e} Abelian twist:
$$\F=exp(\theta^{\mu\nu}P_\mu\wedge P_\nu)$$
The same twist provides also $\theta-$ deformed Poincar\'{e} Hopf algebra as a symmetry group, i.e. the quantum group with respect to which (\ref{theta}) becomes a covariant quantum space \footnote{
Note that the twist deformation requires h-adic extension.} .

Another way to omit counital coalgebra problem for (\ref{Weyl}) relies on introducing the
central element $C$ and replacing the commutation relations (\ref{Weyl}) by the following Lie algebraic ones
\begin{gather}  \label{Heisenberg}
\left[ P_{\mu },x^{\nu }\right]=-\imath \delta _{\mu}^\nu C, \qquad \left[
x^{\mu },x^{\nu }\right]=\left[C ,x^{\nu }\right]=\left[ P_{\mu },P_{\nu }%
\right]=\left[C ,P_{\nu }\right]=0.
\end{gather}
The relations above determine $(2n+1)$-dimensional Lie algebra  of
rank $n+1$ which we shall call Heisenberg-Lie algebra $\mathfrak{hl}(n)$. This algebra can be described as a central extension of the Abelian Lie algebra
$\mathfrak{ab}(x^1,\ldots , x^n, P_1, \ldots, P_n)$. Thus Heisenberg algebra
can be now def\/ined as an enveloping algebra $\U_{\mathfrak{hl}(n)}$ for (\ref{Heisenberg}). There is no problem to introduce Hopf algebra structure with the primitive coproduct (\ref{primitive}) on the generators $\{x^1,\ldots , x^n, P_1, \ldots, P_n, C\}$.
This type of extension  provides a
starting point for Hopf algebraic deformations, e.g. quantum group framework is considered in~\cite{LukMinn}, \cite{Bonechi}, standard and nonstandard deformations are presented e.g. in \cite{tw_Heis} while deformation quantization formalism is developed in \cite{Sorace}. As a trivial example of quantum deformations of the Lie algebra (\ref{Heisenberg}) one can consider the maximal Abelian twist of the form:
\begin{equation}
\F=exp(ih\theta^{\mu\nu}P_\mu\wedge P_\nu)exp(\lambda^\mu P_\mu\wedge C)
\end{equation}
$\theta^{\mu\nu}, \lambda^\mu$-are constants (parameters of deformation).
It seems to us, however, that there are no enough strong physical motivations for studying  deformation problem for such algebras. Therefore we shall focus on possibilities of  relaxing some algebraic conditions in the definition of bicrossproduct bialgebra in order to obey the case of Weyl-Heisenberg algebra as it is defined by (\ref{Weyl}).\\

\section{Crossed product and coproduct}

\textbf{Crossed product algebras} were widely investigated in Section 1 of this thesis, provided with many examples.
However here we shall consider the case of  Weyl-Heisenberg algebra introduced above (\ref{Weyl}) As opposed to example (\ref{example4}) here we consider the case of right module and right action. For this purpose one considers two copies of Abelian $n-$dimensional Lie algebras: $\mathfrak{ab}(P_1,\ldots,P_{n})$, $\mathfrak{ab}(x^1,\ldots, x^{n})$ together with  the corresponding universal enveloping  algebras $\U_{\mathfrak{ab}(P_1,\ldots,P_{n})}$  and $\U_{\mathfrak{ab}(x^1,\ldots,x^{n})}$. Alternatively both algebras are isomorphic to the universal commutative algebras with $n$ generators (polynomial algebras). These two algebras  constitute a dual pair of Hopf algebras. Making use of primitive coproduct on generators of $\U_{\mathfrak{ab}(P_1,\ldots,P_{n})}$  we extend the (right) action implemented by duality map
\begin{gather}  \label{actionP2A2}
x^\nu\triangleleft P_{\mu}=\delta_\mu^\nu, \qquad 1\triangleleft P_{\mu}=0
\end{gather}
to the entire algebra  $\U_{\mathfrak{ab}(x^1,\ldots,x^{n})}$. Thus
 $W(n)=\U_{\mathfrak{ab}(P_1,\ldots,P_{n})}\ltimes\U_{\mathfrak{ab}(x^1,\ldots,x^{n})}$.

Similarly, the Heisenberg-Lie algebra can be obtained in the same way provided slight modifications in the action:
\begin{equation}\label{actionP3A2}
x^\nu\triangleleft P_{\mu}=\delta_\mu^\nu C, \qquad C\triangleleft
P_{\mu}=0
\end{equation}
It gives  $\U_{\mathfrak{hl}(n)}=\U_{\mathfrak{ab}(P_1,\ldots,P_{n})}\ltimes\U_{\mathfrak{ab}(x^1,\ldots,x^{n},C)}$.

\textbf{Crossed coproduct coalgebras} \cite{MR_bicross2},\cite{Klimyk}\newline
The dual concept to the action of an algebra (introduced in def. \ref{module}%
) is the \textit{coaction} of a coalgebra. Let now $\mathcal{A}=\mathcal{A}\left( m_{\mathcal{A}}, \Delta _{\mathcal{A}},\epsilon_{\mathcal{A}}, 1_{\mathcal{A}}\right)$ be a bialgebra and $\mathcal{H}= \mathcal{H}\left( \Delta_{\mathcal{H}}, \epsilon_{\mathcal{H}}\right)$  be a coalgebra. The left coaction of the bialgebra $
\mathcal{A}$ over the coalgebra $\mathcal{H}$ is defined as linear map: $\beta :%
\mathcal{H}\rightarrow \mathcal{A}\otimes \mathcal{H};$ with the following Sweedler type
notation: $\beta \left( L\right) = L^{\left( -1\right) }\otimes
L^{\left( 0\right) }$, where $L^{\left( -1\right) }\in\mathcal{A}$ and $%
L^{\left( 0\right) }\in\mathcal{H}$, $\beta(1_\H)=1_\A\otimes 1_\H$.

\begin{definition}
\label{beta} We say that $\mathcal{H}$ is left $\mathcal{A}$ -\textbf{%
comodule coalgebra} with the structure map $\beta :\mathcal{H}\rightarrow
\mathcal{A}\otimes \mathcal{H}$ if this map satisfies the following two
conditions: $\forall f,g\in \mathcal{A};L,M\in \mathcal{H}$

1)
\begin{equation}  \label{com_1}
\left( id_{\mathcal{A}}\otimes \beta \right)\circ \beta =\left( \Delta_{%
\mathcal{A}} \otimes id_{\mathcal{H}}\right)\circ \beta
\end{equation}
which can be written as: $L^{\left( -1\right) }\otimes (L^{\left( 0\right)}
)^{\left( -1\right) }\otimes (L^{\left( 0\right)} )^{\left( 0\right)
}=\left( L^{\left( -1\right) }\right) _{\left( 1\right) }\otimes \left(
L^{\left( -1\right) }\right) _{\left( 2\right) }\otimes L^{\left( 0\right) }$%
\newline
and $(\epsilon_{\mathcal{A}}\otimes id_{\mathcal{H}})\circ\beta=id_{\mathcal{%
H}}$ which reads as: $\epsilon _{\mathcal{A}}\left( L^{\left( -1\right)
}\right) L^{\left( 0\right) }= L;$

2) Additionally it satisfies comodule coaction structure (comodule coalgebra
conditions):
\begin{equation}  \label{com_2}
L^{\left( -1\right) } \epsilon _{\mathcal{H}}\left( L^{\left( 0\right)
}\right) =1_{\mathcal{A}}\epsilon _{\mathcal{H}}\left( L\right)
\end{equation}
\begin{equation}
L^{\left( -1\right) }\otimes \left( L^{\left( 0\right) }\right) _{\left(
1\right) }\otimes \left( L^{\left( 0\right) }\right) _{\left( 2\right)
}=\left( L_{\left( 1\right) }\right) ^{\left( -1\right) }\left( L_{\left(
2\right) }\right) ^{\left( -1\right) }\otimes \left( L_{\left( 1\right)
}\right) ^{\left( 0\right) }\otimes \left( L_{\left( 2\right) }\right)
^{\left( 0\right) }
\end{equation}
\end{definition}

Left $\mathcal{A}$-comodule coalgebra is a bialgebra $\mathcal{H}$ which is
left $\mathcal{A}$-comodule such that $\Delta _{\mathcal{H}}$ and $\epsilon
_{\mathcal{H}}$ are comodule maps from definition \ref{beta}. 

For such a left $\mathcal{A}$ - comodule coalgebra $\mathcal{H}$, the vector
space $\mathcal{H}\otimes \mathcal{A}$ becomes a (counital) coalgebra with the
comultiplication and counit defined by:
\begin{equation}  \label{copcom}
\Delta_\beta \left( L\otimes f\right) =\sum L_{\left( 1\right) }\otimes \left(
L_{\left( 2\right) }\right) ^{\left( -1\right) }f_{\left( 1\right) }\otimes
\left( L_{\left( 2\right) }\right) ^{\left( 0\right) }\otimes f_{\left(
2\right) }
\end{equation}
\begin{equation}  \label{epsiloncom}
\epsilon \left( L\otimes f\right) =\epsilon _{\mathcal{H}}\left( L\right)
\epsilon _{\mathcal{A}}\left( f\right)
\end{equation}
$L\in \mathcal{H};f\in \mathcal{A}.$

This coalgebra is called the \textbf{left crossed product coalgebra} and it
is denoted by $\mathcal{H}\rtimes ^{\beta }\mathcal{A}$ or $\mathcal{H}%
\rtimes \mathcal{A}.$ One should notice that:
\begin{equation*}
\Delta_\beta(L\otimes 1_{\mathcal{A}})=\left(L_{(1)}\otimes
(L_{(2)})^{(-1)}\right)\otimes\left((L_{(2)})^{(0)}\otimes 1_{\mathcal{A}%
}\right)=L_{(1)}\otimes \beta(L_{(2)})\otimes 1_{\mathcal{A}}
\end{equation*}
and
\begin{equation*}
\Delta_\beta(1_{\mathcal{H}}\otimes g)=\left(1_{\mathcal{H}}\otimes
g_{(1)}\right)\otimes\left(1_{\mathcal{H}}\otimes g_{(2)}\right)
\end{equation*}
i.e. $\Delta_\beta(\tilde f) =\tilde f_{(1)}\otimes\tilde f_{(2)}$, where $\tilde f= 1_\H\otimes f$. Moreover for the trivial choice
\be\label{trivial}
\beta_{trivial} (M)=1_\A\otimes M
\ee
one also gets
\be
\Delta_\beta(\tilde M)=\tilde M_{(1)}\otimes\tilde M_{(2)}
\ee
where $\tilde M=M\otimes 1_\A$. This implies that both coalgebras are subcoalgebras in $\mathcal{H}\rtimes \mathcal{A}.$
\begin{remark}\label{r7}
Let us assume for a moment that the coalgebra $\H$ has no counit. Leaving remaining assumptions in the same form and skipping ones containing $\epsilon_\H$ we can conclude that the resulting coalgebra  $\mathcal{H}\rtimes ^{\beta }\mathcal{A}$ has no counit (\ref{epsiloncom}) as well. In other words all other elements of the construction work perfectly well.
\end{remark}

\section{Bicrossproduct construction}
Through this section let both $\mathcal{H}$ and $\mathcal{A}$ be
bialgebras. The structure of an action is useful for crossed product algebra construction and a coaction map allows us to consider crossed coalgebras. However considering both of them simultaneously we are able to perform the so-called bicrossproduct construction. %
\begin{theorem} \label{becross}
(S. Majid \cite{MR_bicross2}, Theorem 6.2.3) Let $\mathcal{H}$ and $\mathcal{A}$ be bialgebras and $%
\mathcal{A}$ is right $\mathcal{H}$-module with the structure map $\triangleleft
:\mathcal{A}\otimes \mathcal{H}\rightarrow \mathcal{A}$. And $\mathcal{H}$
is left $\mathcal{A}$-comodule coalgebra with the structure map\newline
$\beta :\mathcal{H}\rightarrow \mathcal{A}\otimes \mathcal{H}$, $\beta
\left( L\right) =L^{\left( -1\right) }\otimes L^{\left( 0\right) }$ (cf.
def. \ref{beta}).\\
Assume further the following compatibility conditions:\newline
(A)
\begin{equation}  \label{bicrossA1}
\Delta_{\mathcal{A}} \left( f\triangleleft L\right) =\sum \left(
f\triangleleft L\right) _{\left( 1\right) }\otimes \left( f\triangleleft
L\right) _{\left( 2\right) }= \left( f_{\left( 1\right) }\triangleleft
L_{\left( 1\right) }\right) \left( L_{\left( 2\right) }\right) ^{\left(
-1\right) }\otimes f_{\left( 2\right) }\triangleleft \left( L_{\left(
2\right) }\right) ^{\left( 0\right) }
\end{equation}
\begin{equation}  \label{bicrossA2}
\epsilon _{\mathcal{A}}\left( f\triangleleft L\right) =\epsilon _{\mathcal{A}%
}\left( f\right) \epsilon _{\mathcal{H}}\left( L\right)
\end{equation}
(B)
\begin{equation}  \label{bicrossB}
\beta \left( LM\right) =\left( LM\right) ^{\left( -1\right) }\otimes \left(
LM\right) ^{\left( 0\right) }=\sum \left( L^{\left( -1\right) }\triangleleft
M_{\left( 1\right) }\right) \left( M_{\left( 2\right) }\right) ^{\left(
-1\right) }\otimes L^{\left( 0\right) }\left( M_{\left( 2\right) }\right)
^{\left( 0\right) }
\end{equation}
\begin{equation}  \label{bicrossB2}
\beta(1_{\mathcal{H}})\equiv\left( 1_{\mathcal{H}}\right) ^{\left( -1\right) }\otimes \left( 1_{\mathcal{%
H}}\right) ^{\left( 0\right) }=1_{\mathcal{A}}\otimes 1_{\mathcal{H}}
\end{equation}
(C)
\begin{equation}  \label{bicrossC}
\left( L_{\left( 1\right) }\right) ^{\left( -1\right) }\left( f\triangleleft
L_{\left( 2\right) }\right) \otimes \left( L_{\left( 1\right) }\right)
^{\left( 0\right) }=\left( f\triangleleft L_{\left( 1\right) }\right) \left(
L_{\left( 2\right) }\right) ^{\left( -1\right) }\otimes \left( L_{\left(
2\right) }\right) ^{\left( 0\right) }
\end{equation}
hold. Then the crossed product algebra $\mathcal{H}\ltimes\mathcal{A}$,
i.e. tensor algebra $\mathcal{H}\otimes\mathcal{A}$ equipped with algebraic:
\begin{equation*}
(L\otimes f)\cdot(M\otimes g)=LM_{(1)}\otimes (f\triangleleft
M_{(2)}g)\qquad\qquad\qquad(product)
\end{equation*}
\begin{equation*}
1_{\mathcal{H}\ltimes\mathcal{A}}=1_\mathcal{H}\otimes 1_\mathcal{A}
\qquad\qquad\qquad\qquad\qquad\qquad\qquad(unity)
\end{equation*} and coalgebraic
\begin{equation}\label{cop_bi}
\Delta_\beta(L\otimes f)=\left(L_{(1)}\otimes (L_{(2)})^{(-1)}
f_{(1)}\right)\otimes\left((L_{(2)})^{(0)}\otimes
f_{(2)}\right)\qquad(coproduct)
\end{equation}
\begin{equation*}
\epsilon(L\otimes
f)=\epsilon_\H(L)\epsilon_\A(f)\qquad\qquad\qquad\qquad\qquad\qquad\qquad(counit)
\end{equation*}
sectors becomes a bialgebra. Following \cite{MR_bicross1,MR_bicross2} one calls it bicrossproduct bialgebra and denotes as $\mathcal{H}%
\Join\mathcal{A}$. Moreover if the initial algebras are Hopf algebras then
introducing the antipode:
\begin{equation*}
S(L\otimes f)=(1_\mathcal{H}\otimes S_\A(L^{(-1)}f))\cdot(S_\H(L^{(0)})\otimes 1_%
\mathcal{A})\qquad\qquad(antipode)
\end{equation*}
it becomes bicrossproduct Hopf algebra $\mathcal{H}\Join\mathcal{A}$ as well.
\end{theorem}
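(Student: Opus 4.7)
The plan is to verify, in order, the three structural layers: (i) $\mathcal{H}\ltimes\mathcal{A}$ is an associative unital algebra, (ii) $\mathcal{H}\rtimes^{\beta}\mathcal{A}$ is a coassociative counital coalgebra, (iii) these two structures are mutually compatible (bialgebra axioms), and finally (iv) the antipode formula works when $\mathcal{H}$, $\mathcal{A}$ are Hopf algebras. Steps (i) and (ii) are exactly the crossed product and crossed coproduct constructions recalled in the preceding subsections, so essentially nothing new has to be done there — associativity of the product follows from the right module-algebra axioms for $\triangleleft$ together with $\Delta_\mathcal{A}$ being an algebra map, while coassociativity of $\Delta_\beta$ follows from the comodule-coalgebra axioms (\ref{com_1})--(\ref{com_2}) together with coassociativity of $\Delta_\mathcal{H}$.

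The core of the argument is (iii): one has to check that the counit $\epsilon = \epsilon_\mathcal{H}\otimes\epsilon_\mathcal{A}$ and the coproduct $\Delta_\beta$ are algebra homomorphisms (equivalently, that the product and unit are coalgebra maps). Multiplicativity of $\epsilon$ on a generic product $(L\otimes f)(M\otimes g) = LM_{(1)}\otimes (f\triangleleft M_{(2)})g$ reduces, after applying $\epsilon_\mathcal{H}\otimes\epsilon_\mathcal{A}$ and using $(\mathrm{id}\otimes\epsilon_\mathcal{H})\Delta_\mathcal{H}=\mathrm{id}$, to condition (\ref{bicrossA2}). Multiplicativity of the coproduct is the hard part. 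First I would handle the two ``edge'' cases by writing $L\otimes f = (L\otimes 1_\mathcal{A})\cdot (1_\mathcal{H}\otimes f)$ and computing $\Delta_\beta(L\otimes 1_\mathcal{A})\Delta_\beta(1_\mathcal{H}\otimes g)$ from (\ref{copcom}); here condition (\ref{bicrossC}) is exactly what is needed to commute $(L_{(2)})^{(-1)}$ past $f\triangleleft L_{(2)}$ when aligning the two copies with the product in the middle tensor factor. Second, I would compute $\Delta_\beta(1_\mathcal{H}\otimes f)\Delta_\beta(M\otimes 1_\mathcal{A})$: this requires that the coaction $\beta$ be multiplicative in the sense of (\ref{bicrossB})--(\ref{bicrossB2}), and that $\Delta_\mathcal{A}(f\triangleleft L)$ decomposes as in (\ref{bicrossA1}) so that the twisting of $f_{(1)}\otimes f_{(2)}$ by $\beta(L)$ on one side matches the result of the module product on the other. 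Putting these two computations together, the general case
\[
\Delta_\beta\bigl((L\otimes f)(M\otimes g)\bigr) \;=\; \Delta_\beta(L\otimes f)\,\Delta_\beta(M\otimes g)
\]
follows by expanding both sides in Sweedler notation and repeatedly applying (A), (B) and (C) until the expressions coincide term by term.

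The main obstacle I expect is the bookkeeping in this last Sweedler calculation: there are up to six tensor legs to track (two from $\Delta_\mathcal{H}(L)$, two from $\Delta_\mathcal{H}(M)$, and the $\mathcal{A}$-legs created by the coactions on $L_{(2)}$ and $M_{(2)}$), and one must move coaction outputs past action outputs in the correct order. The compatibility conditions (A), (B), (C) are precisely the three independent moves that let such a permutation go through, so the strategy is to identify, at each step of the reshuffle, which of the three conditions applies, and to use (\ref{bicrossB2}) and (\ref{bicrossA2}) as the base cases terminating the induction.

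For (iv), once the bialgebra structure is in place, I would verify that $S(L\otimes f)=(1_\mathcal{H}\otimes S_\mathcal{A}(L^{(-1)}f))\cdot(S_\mathcal{H}(L^{(0)})\otimes 1_\mathcal{A})$ satisfies $m\circ(S\otimes\mathrm{id})\circ\Delta_\beta = \eta\circ\epsilon = m\circ(\mathrm{id}\otimes S)\circ\Delta_\beta$. It is enough to check this on generators of the form $L\otimes 1_\mathcal{A}$ and $1_\mathcal{H}\otimes f$ separately and then invoke multiplicativity of $\Delta_\beta$ together with the algebra anti-homomorphism property of $S$ (which itself follows from the general fact that an antipode, if it exists, is automatically an anti-homomorphism). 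On $1_\mathcal{H}\otimes f$ the formula collapses to $1_\mathcal{H}\otimes S_\mathcal{A}(f)$ by (\ref{bicrossB2}), and the axiom reduces to the antipode axiom for $\mathcal{A}$; on $L\otimes 1_\mathcal{A}$ one uses the comodule-coalgebra relation (\ref{com_2}) together with the antipode axiom for $\mathcal{H}$ and the compatibility of $\triangleleft$ with the counit (\ref{bicrossA2}) to conclude. Both computations are by now fairly mechanical, and the only care required is that $S_\mathcal{A}$ appears under the $\mathcal{A}$-coaction, so the resulting expression sits in $\mathcal{A}$ before being multiplied with the $\mathcal{H}$-component.
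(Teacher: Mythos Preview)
The paper does not give its own proof of this theorem: it is stated as a quotation of Majid's result (Theorem~6.2.3 in \cite{MR_bicross2}) and is immediately followed by examples and applications, with no proof environment or argument in between. So there is nothing in the paper to compare your proposal against.

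That said, your outline matches the standard route one finds in Majid's book: verify the crossed product algebra and crossed coproduct coalgebra separately, then show $\Delta_\beta$ is multiplicative using (A), (B), (C) as the three ``exchange moves'', and finally check the antipode. One caution on your step~(iv): the reduction to generators $L\otimes 1_\mathcal{A}$ and $1_\mathcal{H}\otimes f$ via ``$S$ is automatically an anti-homomorphism'' is slightly circular as written, since that fact presupposes $S$ is already known to be an antipode. The clean way is either to verify the convolution-inverse identity directly on a general element $L\otimes f$ (a single Sweedler computation using the comodule-coalgebra axioms and the antipode axioms for $S_\mathcal{H}$, $S_\mathcal{A}$), or to first prove by hand that the proposed $S$ is an anti-algebra map and only then reduce to generators.
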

\begin{example}
Primitive Hopf algebra structure on $\U_{\mathfrak{hl}(n)}$ can be obtained via bicrossproduct construction.
Take $\U_{\mathfrak{ab}(P_1,\ldots,P_{n})}$ as left $\U_{\mathfrak{ab}(x^1,\ldots,x^{n},C)}$ comodule algebra with the trivial coaction map: $\beta(P_{\mu})=1\otimes P_{\mu}$. Taking into account the action (\ref{actionP3A2})
all assumptions from the previous theorem are fulfilled.
Thus due  to the formula (\ref{cop_bi})
one obtains the following coalgebraic structure:\\ $\Delta \left( \tilde{P}_{\nu
}\right) =\tilde{P}_{\nu }\otimes 1+1\otimes \tilde{P}_{\nu };\qquad\Delta
\left( \tilde{x}^\nu\right) =\tilde{x}^\nu\otimes
1+1\otimes \tilde{x}^\nu;\qquad\Delta(\tilde{C})=\tilde{C}\otimes 1+1\otimes \tilde{C}$\newline
with canonical Hopf algebra embeddings: $1\otimes P_{\nu }\rightarrow \tilde{P}_{\nu
};\ {x}^\nu\otimes 1\rightarrow \tilde{x}^\nu;\ C\otimes 1\rightarrow \tilde{C}$.
\end{example}

The last example suggests the following more general statement:
\begin{proposition}
Let $\U_{\mathfrak{g}}$ and $\U_{\mathfrak{h}}$ be two enveloping algebras corresponding to two finite dimensional Lie algebras $\mathfrak{g},\mathfrak{h}$, both equipped in the primitive coalgebra structure (i.e. the coproduct $\Delta(x)=x\otimes 1+1\otimes x$ for $x\in {\mathfrak{g}}\cup {\mathfrak{h}}$). Assume that the (right) action of $\U_{\mathfrak{g}}$ on $\U_{\mathfrak{h}}$ is of Lie type, i.e. it is implemented by Lie algebra action: $h_a\triangleleft g_i =c_{ia}^b h_b$ in some basis $g_i$, $h_a$, where $c_{ia}^b$ are numerical constants.
Then one can always define the primitive Hopf algebra structure on $\U_{\mathfrak{g}}\ltimes\U_{\mathfrak{h}}$ by using 
bicrossproduct construction with the trivial co-action map: $\beta_{trivial}(g_i)=1\otimes g_i $. \end{proposition}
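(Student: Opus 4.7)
The plan is to check directly that, with the trivial coaction $\beta(L)=1_{\A}\otimes L$ on $\H=\U_{\mathfrak{g}}$ extended $\mathbb{C}$-linearly from the generators, all hypotheses of Theorem \ref{becross} (both the comodule coalgebra axioms of Definition \ref{beta} and the compatibility conditions (A), (B), (C)) are automatically satisfied, so that the induced bicrossproduct coproduct on $\H\ltimes\A$ (with $\A=\U_{\mathfrak{h}}$) coincides with the tensor-product coproduct and is therefore primitive on the generators.

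First I would verify the comodule coalgebra axioms of Definition \ref{beta}. With $\beta(L)=1_{\A}\otimes L$ one has $(L^{(-1)})_{(1)}\otimes (L^{(-1)})_{(2)}\otimes L^{(0)}=1_{\A}\otimes 1_{\A}\otimes L=L^{(-1)}\otimes (L^{(0)})^{(-1)}\otimes (L^{(0)})^{(0)}$ and $\epsilon_{\A}(L^{(-1)})L^{(0)}=L$, while the compatibility (\ref{com_2}) reduces to the trivially true $1_{\A}\otimes L_{(1)}\otimes L_{(2)}=1_{\A}\otimes L_{(1)}\otimes L_{(2)}$, using $\Delta_{\A}(1_{\A})=1_{\A}\otimes 1_{\A}$. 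These are one-line checks.

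Next I would verify the three compatibility conditions. Condition (A) reduces, with trivial $\beta$, to $\Delta_{\A}(f\triangleleft L)=(f_{(1)}\triangleleft L_{(1)})\otimes(f_{(2)}\triangleleft L_{(2)})$, i.e.\ the statement that $\A$ is a right $\H$-module coalgebra. On the generators this is the direct computation
\[
\Delta_{\A}(h_a\triangleleft g_i)=\Delta_{\A}(c_{ia}^{b}h_b)=c_{ia}^{b}(h_b\otimes 1+1\otimes h_b),
\]
which matches the expanded right-hand side once one uses $1_{\A}\triangleleft g_i=\epsilon_{\H}(g_i)1_{\A}=0$ and $h_a\triangleleft 1_{\H}=h_a$. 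Extension to all of $\U_{\mathfrak{h}}\otimes\U_{\mathfrak{g}}$ follows by induction on word length, exploiting the module-algebra (derivation) property of the Lie-type action on products and the analogous property of $\Delta_{\A}$. The counit condition (\ref{bicrossA2}) is immediate from $\epsilon_{\H}(g_i)=\epsilon_{\A}(h_a)=0$. Condition (B) is the cleanest of the three: with trivial $\beta$ its right-hand side becomes $(1_{\A}\triangleleft M_{(1)})1_{\A}\otimes L M_{(2)}=\epsilon_{\H}(M_{(1)})1_{\A}\otimes L M_{(2)}=1_{\A}\otimes L\,\epsilon_{\H}(M_{(1)})M_{(2)}=1_{\A}\otimes LM=\beta(LM)$, using the module-algebra axiom $1_{\A}\triangleleft M=\epsilon_{\H}(M)1_{\A}$, and (\ref{bicrossB2}) is true by definition.

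The step I expect to be the real obstacle, and also the reason the proposition needs the primitive coproduct on $\H$, is condition (C): with trivial $\beta$ its left and right sides become $(f\triangleleft L_{(2)})\otimes L_{(1)}$ and $(f\triangleleft L_{(1)})\otimes L_{(2)}$ respectively, which agree only if $\U_{\mathfrak{g}}$ is cocommutative. The primitive coproduct makes it so, but any nontrivial quantum deformation of $\Delta_{\H}$ would break (C) and force a nontrivial coaction $\beta$. Once (C) is disposed of, substituting $\beta(L)=1_{\A}\otimes L$ into (\ref{cop_bi}) gives $\Delta_{\beta}(L\otimes f)=(L_{(1)}\otimes f_{(1)})\otimes(L_{(2)}\otimes f_{(2)})$, i.e.\ the tensor-product coproduct; evaluating on the canonically embedded generators $\tilde g_i=g_i\otimes 1_{\A}$ and $\tilde h_a=1_{\H}\otimes h_a$ yields $\Delta_{\beta}(\tilde g_i)=\tilde g_i\otimes 1+1\otimes\tilde g_i$ and $\Delta_{\beta}(\tilde h_a)=\tilde h_a\otimes 1+1\otimes\tilde h_a$, as required. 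The bicrossproduct antipode from Theorem \ref{becross} similarly collapses on these generators to $S(\tilde g_i)=-\tilde g_i$ and $S(\tilde h_a)=-\tilde h_a$, completing the primitive Hopf algebra structure on $\U_{\mathfrak{g}}\ltimes\U_{\mathfrak{h}}$.
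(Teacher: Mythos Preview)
Your verification is correct and considerably more detailed than anything the paper offers: the paper simply states the proposition immediately after the $\U_{\mathfrak{hl}(n)}$ example and moves on without proof, treating it as an evident generalization. Your systematic check of the comodule coalgebra axioms and of conditions (A), (B), (C) from Theorem~\ref{becross} for the trivial coaction is the right way to fill this gap, and your observation that (C) is precisely where cocommutativity of $\U_{\mathfrak{g}}$ enters is the key conceptual point that the paper leaves unspoken.

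One small remark: your inductive extension of (A) from generators to all of $\U_{\mathfrak{h}}\otimes\U_{\mathfrak{g}}$ is correct but could be phrased more cleanly by noting that the primitive coproduct on $g_i$ forces $g_i$ to act on $\U_{\mathfrak{h}}$ by derivations, and that any derivation of a universal enveloping algebra automatically satisfies the module-coalgebra identity $\Delta_{\A}(f\triangleleft g_i)=(f_{(1)}\triangleleft g_i)\otimes f_{(2)}+f_{(1)}\otimes(f_{(2)}\triangleleft g_i)$ for the primitive coproduct. This makes the passage from generators to the whole algebra a one-line consequence of $\Delta_{\A}$ being an algebra map, rather than a separate induction.
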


\section{Classical basis for $\kappa$ -Poincare Hopf algebra as bicrossproduct basis}
However from our point of view the most interesting case is deformed one.
To this aim let us remind  bicrossproduct construction for $\kappa$-Poincar\'{e} quantum group. In contrast to the original construction presented in \cite{MR} the resulting Hopf algebra structure will be determined in the classical Poincar\'{e} basis. The coaction map which provides $\kappa-$Poincar\'{e}
quantum (Hopf) algebra \cite{LNRT} was firstly proposed in \cite{MR}. In fact, the system of generators
used in the original construction \cite{MR} which preserves Lorentzian sector algebraically undeformed is called "bicrossproduct
basis". It became the most popular and commonly used by many authors in various applications, particularly in doubly special relativity formalism (see e.g. \cite{AC}-\cite{Smolin}) or quantum field theory on noncommutative $\kappa-$Minkowski spacetime (cf. \cite{DJMTWW},\cite{kappaQFT2}-\cite{kappaQFT4}). However bicrossproduct construction itself is  basis independent. Therefore we also  demonstrate that the so-called classical
basis (cf. \cite{BP2}) leaving entire Poincar\'{e} sector algebraically undeformed is consistent with the bicrossproduct construction and can be used instead as well.

We take as the first component enveloping algebra of 4-dimensional Lorentz Lie algebra $\mathfrak{o}\left( 1,3\right)$, closed in  h-adic topology, i.e. $\mathcal{H}=\mathcal{U}_{\mathfrak{o}\left( 1,3\right)}[[h]]$ with the primitive (undeformed) coalgebra structure (\ref{primitive}).
As the second component we assume Hopf algebra of translations $\mathcal{A}=\U_{\mathfrak{ab}(P_1,P_2,P_3,P_4)}[[h]] $ with nontrivial coalgebraic sector:
\begin{equation}
\Delta _{\kappa }\left( P_{i}\right) =P_{i}\otimes \left(hP_{4}+\sqrt{%
1-h^{2} P^{2}}\right)+1\otimes P_{i} \ ,\qquad i=1,2,3
\end{equation}
\begin{equation}
\Delta _{\kappa }\left( P_{4}\right) =P_{4}\otimes \left(h P_{4}+\sqrt{%
1-h^{2} P^{2}}\right)+\left(hP_{4}+\sqrt{1-h^{2} P^{2}}\right)^{-1}\otimes
P_{4} +hP _{m}\left(hP_{4}+\sqrt{1-h^{2} P^{2}}\right)^{-1}\otimes P^{m},
\label{copP0A2}
\end{equation}
here $P^2=P_\mu P^\mu$ and $\mu=1,\ldots ,4$. Observe that one deals here with formal power series in the formal parameter $h$ (cf. \cite{BP4}).
Now $\U_{\mathfrak{ab}(P_1,\ldots,P_4)}[[h]] $ is a right $\mathcal{U}_{\mathfrak{o}\left( 1,3\right)}[[h]]$ module algebra implemented by the  classical (right) action:
\begin{gather}  \label{class_right_action}
 P_{k}\triangleleft M_{j}=\imath \epsilon
_{jkl}P_{l},\qquad P_{4}\triangleleft M_{j}=0, \\
P_{k}\triangleleft N_{j} =- \imath \delta _{jk}P_{4},\qquad
P_{4}\triangleleft N_{j}=-\imath P_{j}  \label{class_right_action2}
\end{gather}
Conversely, $\mathcal{U}_{\mathfrak{o}\left( 1,3\right) }[[h]]$ is a left $\U_{\mathfrak{ab}(P_1,\ldots,P_4)}[[h]] $ - comodule coalgebra with (non-trivial) structure map defined on generators as follows:  
\begin{equation}\label{beta_kap1}
\beta_{\kappa} \left( M_{i}\right) =1\otimes M_{i}
\end{equation}
\begin{equation}\label{beta_kap2}
\beta_{\kappa} \left( N_{i}\right) =\left(h P_{4}+\sqrt{1-h^{2} P^{2}}%
\right)^{-1}\!\otimes N_{i}-h\epsilon _{ijm} P_{j}\left(h P_{4}+\sqrt{%
1-h^{2} P^{2}}\right)^{-1}\!\otimes M_{m}
\end{equation}
and then extended to the whole universal enveloping algebra. Such choice guarantees that all the conditions (\ref{bicrossA1}-\ref{bicrossC}) are fulfilled.
Thus  the structure obtained via bicrossproduct construction  constitutes  Hopf algebra $\mathcal{U}%
_{\mathfrak{o}\left( 1,3\right) }[[h]]\Join\U_{\mathfrak{ab}(P_1,\ldots,P_4)}[[h]] $ which has classical algebraic sector
while coalgebraic one reads as introduced in \cite{BP2,BP4}.

\section{The case of Weyl-Heisenberg algebras} 

Now we are in position to extend remark (\ref{r7}) to the bicrossproduct case. Again we have to neglect counit
on the bialgebra $\H$. As a result one obtains unital and non-counital bialgebra $\mathcal{H}\Join\mathcal{A}$.

As an illustrative example of such  constrution one can consider Weyl-Heisenberg algebra (\ref{Weyl}).
The algebra of translations $\U_{\mathfrak{ab}(P_1,\ldots,P_n)}$ is taken  with primitive coproduct.  Non-counital bialgebra of spacetime (commuting) coordinates $\U_{\mathfrak{ab}(x^1,\ldots,x^{n})}$ is assumed to posses half-primitive coproduct. The action is the same as in (\ref{actionP2A2}) while coaction is assumed to be trivial. As a final result one gets non-counital and non-cocommutative bialgebra structure on $\W(n)$:  $\Delta \left( \tilde{P}_{\nu
}\right) =\tilde{P}_{\nu }\otimes 1+1\otimes \tilde{P}_{\nu };\qquad\Delta
\left( \tilde{x}^\nu\right) =\tilde{x}^\nu\otimes 1$ , where $1\otimes P_{\nu }\rightarrow \tilde{P}_{\nu
};\ {x}^\nu\otimes 1\rightarrow \tilde{x}^\nu$.

It is still an open problem what kind of deformations can be encoded in the bicrossproduct construction.

For example, in the class of twisted deformation we were unable to find a single case obtained by means of such construction.
Nevertheless $\kappa$-deformation of the Poincar\'{e} Lie algebra is one of few  examples of quantization  for which bicrossproduct description works perfectly. More sophisticated examples can be found in \cite{more1}-\cite{more3}.
Moreover, it has been proved in \cite{BP4} that large class of deformations of the Weyl-Heisenberg algebra $\W(n)$ can be obtained as a (non-linear) change of generators in its h-adic extension $W(n)[[h]]$. Therefore our results concerning construction of non-counital bialgebra structure extend automatically to these cases.


\chapter{Heisenberg Doubles of quantized Poincare algebras}\label{app4}
Some elements of smash product of dual pair of Hopf algebras, known under the name of Heisenberg double, was introduced in Section 1. Here we shall consider the twisted situation which in the case of Heisenberg double turns out to be much more complicated. Twisting the coalgebra in $\mathcal{H}$ we deform coproduct
according to (\ref{twcop}) what influences the product in the dual algebra: $%
\mathcal{A}\rightarrow \mathcal{A}_{\mathcal{F}}$. I.e., from one hand this
new product has to take the form of the star-product (\ref{tsp}) in order to
preserve the Leibniz rule. From the other hand (\ref{tsp}) is not longer dual to
the deformed coproduct if one wants to keep pairing unchanged. In other
words around $\mathcal{A}_{\mathcal{F}}$ is a covariant quantum space with
respect to $\mathcal{H}^{\mathcal{F}}$ but it refuse to be dual Hopf algebra
with the same pairing.
We shall postpone more detailed study of this problem for another
publication. Here we concentrate on some explicit example provided by
quantum Poincar\'{e} Hopf algebras. 
Let us illustrate explicitly the case of nonstandard $\theta$-twisted
Poincare Heisenberg double. However later on we will also show how this
works in standard $\kappa$-deformation of Heisenberg double.\\

\textbf{Theta-twisted Poincare symmetry}\\
As above mentioned undeformed Poincare algebra $\mathcal{U}_{\mathrm{iso}%
(1,3)}$ contains $\{P_{\mu },M_{\alpha \beta }\}$ generators. And it also
constitutes a Hopf algebra with: (\ref{copPoincare},\ref{counPoincare}).
Within twist deformation framework we can deform it to new Hopf algebra. The
simplest example of twist deformation comes from the so-called $\theta $%
-twist:
\begin{equation*}
\mathcal{F}\in \mathcal{U}_{\mathrm{iso}(1,3)}\otimes \mathcal{U}_{\mathrm{%
iso}(1,3)}:\mathcal{F=}\exp \left( -\frac{i}{2}\theta ^{\mu \nu }P_{\mu
}\otimes P_{\nu }\right)
\end{equation*}%
After twisting the Hopf algebra structure becomes twisted Poincare algebra:
\begin{equation}
\Delta _{\theta }\left( P_{\mu }\right) =\Delta _{0}\left( P_{\mu }\right)
;\qquad \Delta _{\theta }\left( M_{\mu \nu }\right) =\Delta _{0}(M_{\mu \nu
})-(P\theta )_{\mu }\wedge P_{\nu }+(P\theta )_{\nu }\wedge P_{\mu }
\end{equation}%
where $(P\theta )_{\mu }=P_{\rho }\theta ^{\rho \lambda }\eta _{\lambda \mu }
$. 
The module algebra $\mathcal{F(}\mathrm{ISO}(1,3))$ can also be deformed
accordingly \cite{WorPod}, which will introduce new multiplication $(\ast
_{\theta })$ between its elements: $\mathcal{F}_{\theta }\mathcal{(}\mathrm{%
ISO}(1,3))=\{\left[ x^{\mu },x^{\nu }\right] _{\theta }=i\theta ^{\mu \nu };%
\left[ \Lambda _{\alpha }^{\beta },\Lambda _{\mu }^{\nu }\right] _{\theta
}=0,$ $\Lambda ^{T}\eta \Lambda =\eta \}$ . The coalgebra sector of $%
\mathcal{F}_{\theta }\mathcal{(}\mathrm{ISO}(1,3))$ does not change under $%
\theta -$twist because multiplication in $\mathcal{U}_{\mathrm{iso}(1,3)}$
remains undeformed. Therefore (\ref{copF}) still holds.\newline

$\theta $-twisted Heisenberg double: $\mathfrak{H}\left( \mathcal{U}_{%
\mathrm{iso}(1,3)}^{\theta }\right) $ has deformed crossed commutation
relations:
\begin{eqnarray}
\lbrack M_{\alpha \beta },\Lambda _{\nu }^{\mu }] &=&-<M_{\alpha \beta
},\Lambda _{\rho }^{\mu }>\Lambda _{\nu }^{\rho };\qquad \left[ \Lambda
_{\alpha }^{\beta },P_{\nu }\right] =0;\qquad \left[ P_{\mu },x^{\nu }\right]
=-\imath \delta _{\mu }^{\nu } \\
\lbrack M_{\mu \nu },x^{\lambda }] &=&-<M_{\mu \nu },\Lambda _{\rho
}^{\lambda }>x^{\rho }+\frac{\imath }{2}\left[ \theta _{\mu }^{\lambda
}P_{\nu }-\theta _{\nu }^{\lambda }P_{\mu }-\delta _{\nu }^{\lambda
}{}(P\theta )_{\mu }+\delta _{\mu }^{\lambda }{}(P\theta )_{\nu }\right]
\end{eqnarray}%
which are supplemented by (\ref{isog1},\ref{isog2}) and:
\begin{equation}
\left[ x^{\mu },x^{\nu }\right] _{\theta }=i\theta ^{\mu \nu };\qquad \left[
P_{\mu },P_{\nu }\right] =0;\qquad \left[ \Lambda _{\alpha }^{\beta
},\Lambda _{\mu }^{\nu }\right] =0
\end{equation}
As much as in the previous case the smash
product algebra $\mathfrak{X}_{\theta }^{4}$ $\rtimes \mathcal{U}_{\mathrm{%
iso}(1,3)}^{\theta }$ determined by the classical action is subalgebra of the above Heisenberg double $%
\mathfrak{H}\left( \mathcal{U}_{\mathrm{iso}(1,3)}^{\theta }\right) $.
It has the following cross-commutations:
\begin{equation}
\left[ P_{\mu },x^{\nu }\right] =-\imath \delta _{\mu }^{\nu };\qquad[M_{\mu \nu
},x^{\lambda }]=M_{\mu \nu }\triangleright x^{\lambda }+\frac{\imath }{2}%
\left[ \theta _{\mu }^{\lambda }P_{\nu }-\theta _{\nu }^{\lambda }P_{\mu
}-\delta _{\nu }^{\lambda }{}(P\theta )_{\mu }+\delta _{\mu }^{\lambda
}{}(P\theta )_{\nu }\right]
\end{equation}
However  
\begin{equation*}
\mathfrak{X}^{4}\rtimes \mathcal{U}_{\mathrm{iso}(1,3)}\cong \mathfrak{X}%
_{\theta }^{4}\rtimes \mathcal{U}_{\mathrm{iso}(1,3)}^{\theta }
\end{equation*}%
with the isomorphism implemented by changing generators  \newline
$x^{\mu }\rightarrow (\mathrm{\bar{f}}^{\alpha }\triangleright x^{\mu
})\cdot \bar{\mathrm{f}}_{\alpha }=$ $x^{\mu }+\frac{i}{2}\theta ^{\mu \nu
}P_{\nu }$.
Nevertheless an isomorphism between deformed  $\mathfrak{H}(\mathcal{U}%
_{\mathrm{iso}(1,3)}^{\theta })$ and undeformed $\mathfrak{H}(\mathcal{U}_{\mathrm{iso}%
(1,3)})$ Heisenberg doubles remains an open question due to the problems mentioned above.\\

\textbf{$\kappa$-deformed Poincare symmetry}\\

In the deformed case the $\kappa -$Poincare $\mathcal{U}_{\mathfrak{io}%
(1,3)}{}^{\mathrm{DJ}}$ algebra consists of (\ref{isog1}) and (\ref{isog2})
with Abelian translation sector $\left[ P_{\mu },P_{\nu }\right] =0$ as in
undeformed case but coalgebra structure is no longer primitive:
\begin{equation}
\Delta _{\kappa }\left( M_{i}\right) =\Delta _{0}\left( M_{i}\right)
=M_{i}\otimes 1+1\otimes M_{i}, \\
\Delta _{\kappa }\left( N_{i}\right) =N_{i}\otimes 1+\Pi _{0}^{-1}\!\otimes
N_{i}-\frac{1}{\kappa }\epsilon _{ijm}P_{j}\Pi _{0}^{-1}\!\otimes M_{m}
\end{equation}%
\begin{equation}
\Delta _{\kappa }\left( P_{i}\right) =P_{i}\otimes \Pi _{0}+1\otimes P_{i},
\\
\Delta _{\kappa }\left( P_{0}\right) =P_{0}\otimes \Pi _{0}+\Pi
_{0}^{-1}\otimes P_{0}+\frac{1}{\kappa }P_{m}\Pi _{0}^{-1}\otimes P^{m},
\end{equation}%
where we use the notation: $M_{0i}=iN_{i};\quad M_{ij}=\epsilon
_{ijk}M_{k};\quad \Pi _{0}=\left( \frac{P_{0}}{\kappa }+\sqrt{1-\frac{P_{\mu
}P^{\mu }}{\kappa ^{2}}}\right) $. In the view of Heisenberg double
construction for $\kappa $-Poincare \cite{LukNow} we are interested in : $%
\mathfrak{H}(\mathcal{F}_{\kappa }\left( \mathrm{ISO}(1,3)\right) ).$ In the
algebra dual to $\kappa -$Poincare: $\mathcal{F}_{\kappa }\left( \mathrm{ISO}%
(1,3)\right) $ only multiplication becomes deformed ($\ast _{\kappa }$), and
coalgebra sector stays unchanged. This new (deformed) multiplication leads
to $\kappa $-Minkowski spacetime algebra $\mathcal{U}_{h}(\mathfrak{an}^{3})$
with commutation relations between coordinates:
\begin{equation}
\left[ x^{\mu },x^{\nu }\right] _{\kappa }=\frac{i}{\kappa }\left( \delta
_{0}^{\mu }x^{\nu }-\delta _{0}^{\nu }x^{\mu }\right)
\end{equation}%
and to $\mathcal{F}^{\kappa }\left( \mathrm{SO}(1,3)\right) $ with $\left[
\Lambda _{\alpha }^{\beta },\Lambda _{\mu }^{\nu }\right] _{\kappa }=0$.
Both of them contribute to\newline
$\mathcal{F}^{\kappa }\left( \mathrm{ISO}(1,3)\right) =\mathcal{U}_{h}(%
\mathfrak{an}^{3})\rtimes \mathcal{F}^{\kappa }\left( \mathrm{SO}%
(1,3)\right) $ with crossed relations:
\begin{equation}
\left[ \Lambda _{\nu }^{\mu },x^{\rho }\right] =-\frac{i}{\kappa }\left(
\left( \Lambda _{0}^{\mu }-\delta _{0}^{\mu }\right) \Lambda _{\nu }^{\rho
}+\left( \Lambda _{\nu }^{0}-\delta _{\nu }^{0}\right) \eta ^{\mu \rho
}\right)
\end{equation}%
The Heisenberg double of $\kappa $- Poincare: $\mathcal{U}_{\mathfrak{io}%
(1,3)}{}^{\mathrm{DJ}}$ with its dual $\mathcal{F}_{\kappa }\left( \mathrm{%
ISO}(1,3)\right) $ contains besides the above also the following cross
relations :
\begin{equation}
\lbrack M_{ik},x^{\lambda }]=-<M_{ik},\Lambda _{\rho }^{\lambda }>x^{\rho
};\quad \left[ M_{0i},x^{\lambda }\right] =-<M_{0i}+\frac{1}{\kappa }%
M_{ij}P_{j},\Lambda _{\rho }^{\lambda }>\Pi _{0}^{-1}x^{\rho };
\end{equation}%
\begin{equation}
\lbrack M_{ik},\Lambda _{\beta }^{\alpha }]=-<M_{ik},\Lambda _{\beta }^{\rho
}>\Lambda _{\rho }^{\alpha };\quad \left[ M_{0i},\Lambda _{\beta }^{\alpha }%
\right] =-<M_{0i}+\frac{1}{\kappa }M_{ij}P_{j},\Lambda _{\beta }^{\rho }>\Pi
_{0}^{-1}\Lambda _{\rho }^{\alpha };
\end{equation}%
\begin{equation}
\lbrack P_{0},x^{k}]=-\frac{\imath }{\kappa }P_{k}\Pi _{0}^{-1};\qquad
\lbrack P_{k},x_{0}]=-\frac{\imath }{\kappa }P_{k};
\end{equation}%
\begin{equation}
\lbrack P_{k},x^{j}]=-\imath \delta _{k}^{j};\qquad \lbrack P_{0},x_{0}]=-%
\frac{\imath }{\kappa }P_{0}-\imath \Pi _{0}^{-1}
\end{equation}%
One can notice that the above set of relations differs from the smash
commutation relations of: $\mathcal{U}_{h}(\mathfrak{an}%
^{3})\rtimes \mathcal{U}_{\mathfrak{io}(1,3)}{}^{\mathrm{DJ}}$ which were
studed in \cite{BP4} and called DSR algebra. In this case a possible
isomorphism between deformed and undeformed Heisenberg doubles
\begin{equation*}
\mathfrak{H}(\mathcal{F(}\mathrm{ISO}(1,3))\cong \mathfrak{H}(\mathcal{F}%
_{\kappa }\left( \mathrm{ISO}(1,3)\right) )
\end{equation*}%
is also postponed to further investigation.

\addcontentsline{toc}{chapter}{Bibliography}

%
\end{document}